%
%
\documentclass[11pt]{amsart}
\usepackage{booktabs}

\usepackage{amsmath,amsfonts,amsthm,mathrsfs,xspace,graphicx}
\usepackage[
colorlinks,bookmarks=true, citecolor=blue!90!black, urlcolor=blue!90!black,linkcolor=blue!90!black, pdfencoding=auto, psdextra]{hyperref}
\hypersetup{pdfauthor={Name}}
%
\usepackage{endnotes}
\usepackage{color}
\usepackage{float}
\usepackage{subcaption}
\usepackage{xprintlen}
\usepackage{array}
\newcolumntype{C}[1]{>{\centering\arraybackslash}p{#1}}
\newcolumntype{L}[1]{>{\raggedright\arraybackslash}p{#1}}

\setcounter{tocdepth}{2}    

\usepackage{xcolor}
\definecolor{since}{rgb}{0.5,0.5,0.5}
\definecolor{newred}{HTML}{ff382e}
\definecolor{newgreen}{HTML}{549641}
\definecolor{newblue}{HTML}{4c4cfc}
\definecolor{neworange}{HTML}{c98702}

\usepackage{mdframed}
\usepackage{framed}
\usepackage{bbm}
\usepackage{suffix} 
\usepackage{tabularx}
\usepackage{makecell}
\usepackage{amssymb,latexsym}
\usepackage[capitalize]{cleveref}
\usepackage{enumitem}
\usepackage{tikz}
\usetikzlibrary{quantikz2}

\usepackage{multirow}
\usepackage{endnotes}
\usepackage{amssymb,latexsym}
\usepackage{enumitem}
\usepackage{float}
\usepackage{setspace}
\usepackage{soul}
\usepackage[section]{placeins}
\usepackage{thmtools}
\usepackage{thm-restate}
\usepackage{stmaryrd}
\usepackage{esvect}
\usepackage{mathtools, physics}
\usepackage{complexity}
\usepackage{verbatim}

\usepackage{pdflscape}
\usepackage{adjustbox}

\usepackage{algorithm}
\usepackage{algpseudocode}


\makeatletter
\renewcommand*\env@matrix[1][*\c@MaxMatrixCols c]{%
  \hskip -\arraycolsep
  \let\@ifnextchar\new@ifnextchar
  \array{#1}}
\makeatother

\newtheorem{theorem}{Theorem}[section] 
\newtheorem*{theorem*}{Theorem}
\newtheorem{proposition}[theorem]{Proposition}
\newtheorem*{proposition*}{Proposition}
\newtheorem{lemma}[theorem]{Lemma}
\newtheorem*{lemma*}{Lemma}
\newtheorem{corollary}[theorem]{Corollary}
\newtheorem*{corollary*}{Corollary}
\newtheorem{conjecture}{Conjecture}
\newtheorem*{conjecture*}{Conjecture}

\theoremstyle{definition}

\newtheorem*{fact*}{Fact}
\crefname{fact}{Fact}{Facts}
\newtheorem{claim}[theorem]{Claim}
\crefname{claim}{Claim}{Claims}
\newtheorem{definition}[theorem]{Definition}
\newtheorem*{definition*}{Definition}

\DeclareRobustCommand{\EndDef}{%
}

\theoremstyle{remark}
\newtheorem{remark}[theorem]{Remark}
\AtEndEnvironment{remark}{\EndDef}
\AtEndEnvironment{definition}{\EndDef}
\newtheorem*{remark*}{Remark}
\newtheorem{example}{Example}

\newclass{\QPCP}{QPCP}
\newclass{\QCPCP}{QCPCP}
\newclass{\QCMAcomp}{QCMA-complete}
\newclass{\sharpP}{\#P}



\DeclareMathOperator\CSS{CSS}

\DeclareMathOperator\Span{Span}

\DeclareMathOperator\eye{\mathbb{I}}

\DeclareMathOperator{\supp}{supp}

\DeclareMathOperator\dist{dist}

\usepackage{accents}

\newcommand{\br}[1]{\ensuremath{\left\{{#1}\right\}}}

\newcommand{\bigmid}{\;\big\vert\;}
\newcommand{\Bigmid}{\;\,\Big\vert\,\;}

\newcommand{\standard}[1]{{\left\langle{#1}\right\rangle}}
\DeclareRobustCommand{\euler}{\genfrac{\langle}{\rangle}{0pt}{}}
\newcommand{\indicator}[1]{\mathbbold{1}_{#1}}

\newclass{\QC}{QC}
\newcommand{\Coxeter}[1]{\ensuremath{\C_{W}({#1})}}
\newcommand{\QCoxeter}[1]{\ensuremath{\QC_{W}({#1})}}
\newcommand{\coxeter}[3]{\ensuremath{\C_{{#1}}({#3})}}
\newcommand{\Qcoxeter}[3]{\ensuremath{\QC_{{#1}}({#3})}}

\newcommand{\ext}[1]{\ensuremath{\mathcal{E}_{#1}}}

\newcommand{\CC}{\ensuremath{\mathbb{C}}}
\newcommand{\NN}{\ensuremath{\mathbb{N}}}
\newcommand{\FF}{\ensuremath{\mathbb{F}}}

\newcommand{\RR}{\ensuremath{\mathbb{R}}}
\newcommand{\ZZ}{\ensuremath{\mathbb{Z}}}

\newcommand{\mcB}{\ensuremath{\mathcal{B}}} 
\newcommand{\mcC}{\ensuremath{\mathcal{C}}}

\newcommand{\mcJ}{\ensuremath{\mathcal{J}}}

\newcommand{\mcP}{\ensuremath{\mathcal{P}}}
\newcommand{\mcQ}{\ensuremath{\mathcal{Q}}}
\newcommand{\mcR}{\ensuremath{\mathcal{R}}}
\newcommand{\mcS}{\ensuremath{\mathcal{S}}}

\DeclareMathAlphabet{\mathbbold}{U}{bbold}{m}{n}


\newcommand\restr[2]{{
  \left.\kern-\nulldelimiterspace 
  #1 
  \right|_{#2} 
  }}
  
\usepackage[margin=1.23in]{geometry}
\usepackage{enumitem,diagbox}
\usepackage{stfloats,nicefrac}

\usepackage{cite}
\usepackage[normalem]{ulem}
\newcommand\redout{\bgroup\markoverwith{\textcolor{red}{\rule[0.5ex]{2pt}{0.8pt}}}\ULon}

\newcommand{\bfit}{\bfseries\itshape}

\newcommand\nnfootnote[1]{%
   \begin{NoHyper}
    \renewcommand\thefootnote{}\footnote{#1}%
    \addtocounter{footnote}{-1}%
   \end{NoHyper}
}
\makeatletter
\newcommand\footnoteref[1]{\protected@xdef\@thefnmark{\ref{#1}}\@footnotemark}
\makeatother
\allowdisplaybreaks

\usepackage{balance}

\title[Coxeter codes]{Coxeter codes: Extending the Reed--Muller family}
\author{Nolan J. Coble}\author{Alexander Barg}
\date{}

\setstretch{1.}

\begin{document}

\begin{abstract}
Binary Reed--Muller (RM) codes are defined via evaluations of Boolean-valued functions on $\ZZ_2^m$. We introduce a class of binary linear codes that generalizes the RM family by replacing the domain $\ZZ_2^m$ with an arbitrary finite Coxeter group. Like RM codes, this class is closed under duality, forms a nested code sequence, satisfies a multiplication property, and has asymptotic rate determined by a Gaussian distribution. Coxeter codes also give 
rise to a family of quantum codes for which transversal diagonal $Z$ rotations can perform non-trivial logic.
\end{abstract}

\maketitle

\nnfootnote{N.C. was partially supported by NSF grant DMS-2231533. A.B. was partially supported by NSF grant CCF-2330909. An extended abstract of this work appears in Proceedings of the 2025 IEEE International Symposium on Information Theory.}



\section{A New Family of Binary Codes}
Reed--Muller (RM) codes form a classic family studied for its interesting algebraic and combinatorial properties \cite{MS77,Assmus98} as well as from the perspective of information transmission \cite{YeAbbe2020,abbe2023reed}. 
They achieve Shannon capacity of the basic binary channel models such as channels with independent erasures or flip errors
\cite{Kudekar2015ReedMullerCA}, \cite{reeves2023reed}, \cite{AbbeSandon2023}. They also give rise to a family of quantum codes \cite{Steane1999} with well-understood logical operators \cite{kubica2015universal}, \cite{campbell_magic-state_2012}, \cite{rengaswamy2020optimality}, \cite{barg2024geometric}. 
Beginning with the standard definition of RM codes, we then give an equivalent combinatorial characterization that admits a natural generalization to the Coxeter code family.

Consider the binary field $\FF\coloneqq\FF_2$ and the space of Boolean functions $\br{f\colon\ZZ_2^m\rightarrow\FF}$, which can also be defined as the group algebra $\FF\ZZ_2^m$. 
Every such Boolean function can be written as an $m$-variate polynomial, and the binary RM code $RM(r,m)$ of order $r$ is defined as the set of polynomials 
of degree at most $r$.\footnote{More precisely, codewords are \emph{evaluation vectors} of these 
polynomials; throughout, we will not distinguish functions from their evaluation vectors.}
Our starting point for Coxeter codes is to note that the group $\ZZ_2^m$ admits a \emph{combinatorial structure} of the $m$-dimensional (Boolean) hypercube graph, which is composed of smaller subcubes. To make this explicit, let $S_m\coloneqq \br{e_1,\dots,e_m}$ be the set of standard generators of $\ZZ_2^m$.
For all $\ell\in\{0,\dots,m\}$, $\ell$-dimensional subcubes arise as cosets $z+\standard{J},$ where $z\in\ZZ_2^m$ and $\standard J$ is the standard subgroup spanned by an $\ell$-subset $J\subseteq S_m$.
\begin{theorem}[\!\!\!{\cite{barg2024geometric}, Fact II.3}]\label{thm: def RM}
    For $r\in \{-1,0,\dots,m\}$ the \emph{order-$r$ Reed--Muller code} $RM(r,m)$ is equal to
    $$RM(r,m)=\Span_{\FF}\br{\indicator{z+\standard{J}}\mid z\in\ZZ_2^m, J\subseteq S_m, \abs{J}=m-r}.$$
\end{theorem}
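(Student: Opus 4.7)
The plan is to prove the equality by establishing both containments directly, using the fact that indicators of cosets of standard subgroups factor over coordinates.

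For the easy direction ($\supseteq$), I would write each indicator $\indicator{z+\standard J}$ explicitly as a polynomial. Membership of $x$ in $z+\standard J$ is determined by the coordinates \emph{outside} $J$: $x\in z+\standard J$ iff $x_i=z_i$ for all $i\in S_m\setminus J$. Over $\FF_2$ this gives
$$\indicator{z+\standard J}(x)=\prod_{i\in S_m\setminus J}(1+x_i+z_i),$$
a product of $|S_m\setminus J|=m-|J|=r$ degree-one factors. Hence the polynomial has degree at most $r$ and lies in $RM(r,m)$.

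For the reverse direction ($\subseteq$), it suffices to write every monomial $\prod_{i\in A}x_i$ with $|A|\le r$ as a sum of indicators of the required form. The key observation is that $\prod_{i\in A}x_i$ is the indicator of the coset $z+\standard{S_m\setminus A}$ for any $z$ with $z_i=1$ on $A$. When $|A|=r$, this is already an indicator with $|J|=m-r$. When $|A|<r$, one chooses any $J'\subseteq S_m\setminus A$ with $|J'|=m-r$ and decomposes $\standard{S_m\setminus A}$ into a disjoint union of cosets of $\standard{J'}$; summing the corresponding indicators recovers the monomial. Since these monomials span $RM(r,m)$, the inclusion follows.

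I do not anticipate a genuine obstacle: both directions rest on the bijection between standard subcubes and coordinate-restriction events, which is essentially tautological once $\ZZ_2^m$ is identified with $\FF^m$ via $S_m$. The only care is in the boundary cases: for $r=-1$ no subset $J\subseteq S_m$ satisfies $|J|=m+1$, so the right-hand span is the empty span $\{0\}=RM(-1,m)$; for $r=m$ only $J=\emptyset$ occurs, the allowed indicators are point-masses $\indicator{\{z\}}$, and their span is all of $\FF^{2^m}=RM(m,m)$. The framing treats Boolean functions and their evaluation vectors interchangeably, as flagged in the footnote, which is what makes the polynomial computation in the forward direction directly yield a codeword.
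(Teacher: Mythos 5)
Your proof is correct. Note that the paper does not actually prove this statement in-line: it is imported as a citation (\cite{barg2024geometric}, Fact II.3), with only the easy direction sketched in the surrounding text (the remark that indicators of $(m-r)$-dimensional subcubes are degree-$r$ signed monomials $\prod_j y_{i_j}$ with $y_i\in\{x_i,\bar x_i\}$ --- exactly your product $\prod_{i\in S_m\setminus J}(1+x_i+z_i)$). The paper's independent derivation of the equality comes only later, as the $W=\ZZ_2^m$ specialization of the general machinery: \cref{thm: extensions form a basis} shows the extensions $\ext{z}=\prod_{i\in\supp z}x_i$ with $d(z)=|z|\le r$ form a basis of $\Coxeter{r}$, and \cref{example: RM basis} identifies this with the monomial basis of $RM(r,m)$. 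Your argument is the elementary, self-contained version: the forward containment is the same degree count, and your reverse containment --- realizing a monomial $\prod_{i\in A}x_i$ as the indicator of $z+\standard{S_m\setminus A}$ and then partitioning that coset into cosets of a rank-$(m-r)$ subgroup $\standard{J'}$ --- is precisely the coset-partition trick the paper uses in the ($\supseteq$) half of the proof of \cref{thm: extensions form a basis}, just applied directly to $\ZZ_2^m$ rather than through descent sets. What the paper's route buys is generality (the same argument works for every Coxeter system); what yours buys is a direct verification requiring no Coxeter-theoretic input, and you correctly handle the boundary cases $r=-1$ and $r=m$ that the extended order range introduces.
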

Inclusion of $r=-1$ as a possible order value deviates from the standard definition \cite[Ch.~13]{MS77},
which is limited to $0\le r\le m$. It is convenient to extend the order set to account for the duality 
within the RM code family, and this applies to all Coxeter codes.

It is well known that the codewords of minimum weight in the code $RM(r,m)$ are given by incidence vectors of $(m-r)$-flats in the affine geometry $AG(m,2)$, and that this collection of minimum-weight codewords generates the entire code \cite[Thm.13.12]{MS77}; \cref{thm: def RM} strengthens this by pointing out that subcubes, a subset of flats, are sufficient to generate the code. It is straightforward to verify that indicator functions of $(m-r)$-dimensional subcubes are degree-$r$ ``signed'' monomials $\prod_{j=1}^r y_{i_j},$ where $y_i\in\{x_i,\bar x_i\}$ for all $i$ and $\bar x\coloneqq1-x$.

While studying \emph{quantum} RM codes \cite{barg2024geometric}, we realized that many of the simple structural properties of RM codes---containment, duality, multiplication---typically viewed as deriving from the polynomial definition, likewise arise from the combinatorial structure of $\ZZ_2^m$ when viewed as a group generated by $S_m$. For example, $(m-r)$- and $(r+1)$-dimensional subcubes necessarily intersect on an even number of elements, indicative of the duality $RM(r,m)^\perp = RM(m-r-1,m)$.  
This combinatorial structure is shared by every member of a large family of well-studied groups known as \emph{Coxeter groups}.

\begin{definition}
    Let $S\coloneqq\br{s_1,\dots,s_m}$ be a set of $m$ generators. A \emph{Coxeter group} $W$ is given by a presentation
    $$
        W\coloneqq \left\langle S\mid (s_i s_j)^{M(i,j)}=1 \right\rangle,
    $$
    where 
    $M(i,i)=1$ (i.e., $s_i^2=1$) and $M(i,j)=M(j,i)\in \ZZ_{\geq 2}$. 
    The pair $(W,S)$ is called a \emph{Coxeter system} of rank $m$ and the matrix $(M(i,j))_{i,j=1}^m$ is called the {\em defining matrix}
    of the system. 
\end{definition}
Clearly, $(\ZZ_2^m,S_m)$ is a Coxeter system with $M(i,j)=2$ for all $i,j$. A classic example of a Coxeter system is the symmetric group on $m+1$ letters, $A_m\coloneqq(\mathrm{Sym}(m+1),T)$,\footnote{Not to be confused with the $(m+1)$-letter \emph{alternating group}; in the theory of Coxeter groups, the letter $A$ refers to the full symmetric group.} where $T=\br{(i\;\; i+1)\mid i\in [m]}$ is the set of adjacent transpositions. In this case, 
$M(i,i+j)=2$ for all $j\geq 0$ except $j=1$ when $M(i,i+1)=3$. A classic
visualization of this system is shown in \cref{fig: permutohedron}, and other examples are given later in \cref{fig: Am figs,fig: I2 figs}.

A Coxeter system is called {\em irreducible} if for any partition of the generators $S=S_1\sqcup S_2$ there are $s\in S_1$ and $t\in S_2$ that do not commute, and is called {\em reducible} otherwise.
This definition provides no visual interpretation of irreducibility; a more
standard definition relies on Coxeter-Dynkin diagrams \cite{BB05}, which we do not use in this paper (except in the proof of \cref{cor: distance for large r}). Finite Coxeter groups have a succinct classification, e.g., \cite[App.A.1]{BB05}, and we will assume throughout that $W$ is a finite group.

To define Coxeter codes, we need a suitable generalization of a subcube to an arbitrary Coxeter system, where, as before, $\standard J$ denotes 
the subgroup generated by a subset $J\subset S$.
\begin{definition}
    Fix a Coxeter system, $(W,S)$. A \emph{standard subgroup of 
    $W$} is a subgroup $\langle J\rangle\leq W$ where $J\subseteq S$. 
        A \emph{standard (left) coset} of $W$ is any coset of the form $R\coloneqq \sigma\standard{J}$ for $\sigma\in W$, $J\subseteq S$. The \emph{rank} of $R=\sigma\standard{J}$ is $\rank (R)\coloneqq \abs{J}$.
\end{definition}

We now construct a family of $\FF$-linear codes from a given Coxeter system $(W,S)$ of rank $m$. Consider the group algebra $\FF W\coloneqq\br{f\colon W\rightarrow\FF}$ of $\FF$-valued functions on $W$, which is a $\abs{W}$-dimensional vector space. Let $\indicator{U}\in\FF W$ denote the \emph{indicator function} of a subset $U\subseteq W$.

\begin{definition}[\sc Coxeter codes]\label{def: Coxeter codes}
    For $r\in\br{-1,\dots,m}$, the \emph{order-$r$ Coxeter code of type $(W,S)$}, denoted by $\Coxeter{r}$, is the $\FF$-linear span
    of indicator functions of standard cosets having rank $m-r$:
    $$
    \Coxeter{r} \coloneqq\Span\br{\indicator{\sigma\standard{J}}\mid \sigma\in W, J\subseteq S, \abs{J}=m-r}.
    $$

\end{definition}

\begin{remark}
    \hspace{0em}
    \begin{itemize}[leftmargin=*]
        \item The code $\Coxeter{r}$ depends on the particular choice of $S$; we suppress this dependence in the notation for simplicity.
        \item For $\ZZ_2^m$ with its standard generating set, the order-$r$ Coxeter code of type $(\ZZ_2^m,S_m)$ is the code $RM(r,m)$ by \cref{thm: def RM}.
        \item For every Coxeter system: $\Coxeter{-1}=0^{\abs{W}}$ is a trivial code (given by an empty generating set), $\Coxeter{0}$ is a repetition code, $\Coxeter{m-1}$ is a single parity-check code, and $\Coxeter{m}=\FF W$ is the entire vector space. \hfill$\lhd$
    \end{itemize}
\end{remark}


\begin{figure}[t]
    \centering
    \includegraphics[width=.5\linewidth]{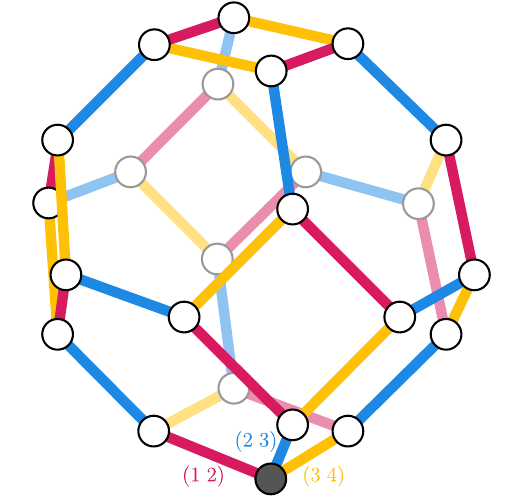}
    \caption{\footnotesize A useful way to visualize a Coxeter system $(W,S)$ is a \emph{Cayley graph}, $(V,E)$, where $V=W$ and
    $(w,w')\in E$ iff there is a generator $s\in S$ such that $w'=ws$. The figure shows the Cayley graph of the 4-letter symmetric group, $A_3$, with generators given by adjacent transpositions. The shaded vertex represents the identity element. The polytope obtained by embedding this graph in $\RR^3$ is called a {\em permutohedron}.
    }
    \label{fig: permutohedron}
\end{figure}

Several well-known structural results about the RM family extend to \emph{any} Coxeter code. First, Coxeter codes are a nested family of codes:
\begin{theorem}\label{thm: nested}
    For integers $q<r\leq m$, the order-$q$ Coxeter code of type $(W,S)$ is strictly contained in the order-$r$ code:
    $$
        \Coxeter{q}\subsetneq\Coxeter{r}.
    $$
\end{theorem}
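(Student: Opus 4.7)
My plan is to separate the statement into the containment $\Coxeter{q}\subseteq\Coxeter{r}$ and the strictness $\Coxeter{q}\neq\Coxeter{r}$. By transitivity and induction on $r-q$, both reduce to the single step $r=q+1$. For $\Coxeter{r-1}\subseteq\Coxeter{r}$ the argument is purely combinatorial. Let $\indicator{\sigma\standard{J}}$ be any generator of $\Coxeter{r-1}$, so $|J|=m-r+1\ge 1$. Pick any $s\in J$ and set $J'\coloneqq J\setminus\{s\}$, so $|J'|=m-r$. Since $\standard{J'}\le\standard{J}$, the larger parabolic subgroup decomposes into disjoint left cosets $\standard{J}=\bigsqcup_{i}g_i\standard{J'}$, and therefore $\indicator{\sigma\standard{J}}=\sum_{i}\indicator{\sigma g_i\standard{J'}}$ in $\FF W$. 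Each summand is a generator of $\Coxeter{r}$, so $\indicator{\sigma\standard{J}}\in\Coxeter{r}$; by linearity $\Coxeter{r-1}\subseteq\Coxeter{r}$. The extreme case $q=-1$ is handled separately since $\Coxeter{-1}=0$ is trivially contained in any nonzero code, and $\Coxeter{r}$ is nonzero for $r\ge 0$ as it already contains the repetition code $\Coxeter{0}$.

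For strictness I would exhibit a codeword in $\Coxeter{r}\setminus\Coxeter{r-1}$. The natural candidate is $\indicator{\standard{J_0}}$ with $J_0\subset S$ of size $m-r$ chosen so that $|\standard{J_0}|$ is as small as possible; this lies in $\Coxeter{r}$ and has Hamming weight $|\standard{J_0}|$. Every generator $\indicator{\sigma\standard{K}}$ of $\Coxeter{r-1}$ has weight $|\standard{K}|$ with $|K|=m-r+1$, and for any $K\supsetneq J_0$ strict monotonicity of parabolic subgroup orders gives $|\standard{J_0}|<|\standard{K}|$. If the minimum distance of $\Coxeter{r-1}$ is indeed $\min_{|K|=m-r+1}|\standard{K}|$, then $\indicator{\standard{J_0}}$ is too light to lie in $\Coxeter{r-1}$, establishing strictness.

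The main obstacle is ruling out weight-reducing cancellations among $\FF$-combinations of generators of $\Coxeter{r-1}$, i.e., establishing this minimum-distance lower bound. This is the Coxeter analogue of the classical minimum-weight theorem for Reed--Muller codes. I would attempt it by induction on the rank $m$, exploiting the observation that restrictions of Coxeter codes to a proper parabolic subgroup $\standard{S'}$ with $S'\subsetneq S$ are again Coxeter codes of strictly lower rank, to which the induction hypothesis applies. A cleaner alternative, if the paper develops it elsewhere, is to derive a dimension formula of the form $\dim\Coxeter{r}=|\{w\in W:\ell(w)\le r\}|$ (consistent with the Gaussian asymptotic rate advertised in the abstract) and to conclude from strict monotonicity of these dimensions in $r$, which bypasses minimum-distance considerations entirely.
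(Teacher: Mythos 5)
Your containment argument is correct: partitioning $\standard{J}$ into left cosets of $\standard{J\setminus\{s\}}$ and summing the indicators over $\FF$ is exactly the mechanism the paper itself uses (in the $\supseteq$ direction of the proof of \cref{thm: extensions form a basis}) to show that an indicator of a higher-rank standard coset lies in the span of lower-rank ones, and the reduction to the single step $r=q+1$ is harmless.

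The strictness half, however, has a genuine gap. The minimum-distance statement you need, $\dist(\Coxeter{r-1})=\min_{|K|=m-r+1}\abs{\standard{K}}$, is precisely \cref{conj: distance} of the paper, which is left open: the paper proves only $\dist(\Coxeter{r})\ge 2^{m-r}$ (\cref{thm: exponential lower bound}) by the rank induction you sketch, and the exact value only for $r\ge\lfloor m/2\rfloor$ and for special families. The provable bound does not rescue your weight comparison for small $r$, since $\abs{\standard{J_0}}$ can greatly exceed $2^{m-r+1}$. Your fallback is the right idea but the formula you guess is wrong: the dimension is $\dim\Coxeter{r}=\sum_{i=0}^r\euler{W}{i}$, the number of $w$ with \emph{descent number} $d(w)\le r$, not the number with \emph{length} $\ell(w)\le r$; the two coincide for $\ZZ_2^m$ but not in general (for $A_2$ one has $\dim\Coxeter{1}=5$, while only three elements have length at most $1$). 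The paper's actual proof of both containment and strictness goes through \cref{thm: extensions form a basis}: the extensions $\ext{w}=\indicator{w\standard{S\setminus D(w)}}$ are linearly independent and $\Coxeter{r}=\Span\br{\ext{w}\mid d(w)\le r}$, so the codes are nested with strictly increasing dimension. If you want strictness without the basis machinery, note that for $J_0$ of size $m-r$ and any $K\subseteq S\setminus J_0$ with $\abs{K}=q+1$ (possible since $q<r$) one has $\standard{J_0}\cap\standard{K}=\{1\}$, an odd intersection, so $\indicator{\standard{J_0}}$ fails to be orthogonal to $\Coxeter{m-q-1}=\Coxeter{q}^\perp$ and hence lies outside $\Coxeter{q}$; but this presupposes the duality theorem, which the paper also derives from the extension basis.
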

Like RM codes, Coxeter codes are also closed under duality:
\begin{theorem}\label{thm: Coxeter duality}
    The dual of the order-$r$ Coxeter code of type $(W,S)$ is the corresponding order-$(m-r-1)$ Coxeter code:
    $$
        \Coxeter{r}^\perp = \Coxeter{m-r-1}.
    $$
\end{theorem}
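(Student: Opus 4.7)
The plan is to establish duality in two stages: first show $\Coxeter{r}\subseteq\Coxeter{m-r-1}^\perp$ by a direct orthogonality check on generators, then upgrade containment to equality via a dimension count.

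For orthogonality, fix generators $\indicator{\sigma\standard{J}}$ of $\Coxeter{r}$ with $|J|=m-r$ and $\indicator{\tau\standard{K}}$ of $\Coxeter{m-r-1}$ with $|K|=r+1$. Their $\FF$-valued inner product equals $|\sigma\standard{J}\cap\tau\standard{K}|\bmod 2$. The classical parabolic intersection identity $\standard{J}\cap\standard{K}=\standard{J\cap K}$, a standard Coxeter-theoretic fact, implies that when the intersection $\sigma\standard{J}\cap\tau\standard{K}$ is nonempty it is a single coset of $\standard{J\cap K}$ and so has cardinality $|\standard{J\cap K}|$. Since $|J|+|K|=m+1>|S|$, inclusion--exclusion forces $J\cap K\neq\emptyset$; any $s\in J\cap K$ generates an order-$2$ subgroup $\{1,s\}\leq\standard{J\cap K}$, and Lagrange's theorem makes $|\standard{J\cap K}|$ even. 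The inner product therefore vanishes, giving $\Coxeter{r}\subseteq\Coxeter{m-r-1}^\perp$ and equivalently $\dim\Coxeter{r}+\dim\Coxeter{m-r-1}\leq|W|$.

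The main obstacle is the reverse inequality $\dim\Coxeter{r}+\dim\Coxeter{m-r-1}\geq|W|$, since---unlike for $\ZZ_2^m$---there is no immediate polynomial realization delivering dimensions for free. My plan is to exhibit an explicit basis of $\Coxeter{r}$ indexed by a descent-type statistic on $W$. For each $w\in W$, associate a canonical standard coset $R_w$ whose rank equals $m-|D(w)|$, where $D(w)=\{s\in S:\ell(ws)<\ell(w)\}$ is the right descent set; one then shows that $\{\indicator{R_w}:|D(w)|\leq r\}$ is a basis of $\Coxeter{r}$, so that $\dim\Coxeter{r}=|\{w:|D(w)|\leq r\}|$. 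The well-known involutive symmetry $w\mapsto w_0 w$ (with $w_0$ the longest element of $W$) satisfies $|D(w_0 w)|=m-|D(w)|$, giving $|\{w:|D(w)|=k\}|=|\{w:|D(w)|=m-k\}|$ and therefore closing the dimension identity.

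Verifying that the descent-indexed indicators really form a basis is the combinatorial core of the argument. One route is to exhibit, via Möbius inversion on the Boolean lattice of subsets of $S$, a triangular change of basis between these indicators and the full generating set $\{\indicator{\sigma\standard{J}}\}$, so that linear independence falls out from the triangular structure. An alternative, mirroring Reed's decoding algorithm for RM codes, is to construct explicit \emph{extractors}---linear functionals $\FF W\to\FF$ that read off, from any $f\in\FF W$, coefficients corresponding to descent-indexed standard cosets---thereby giving a direct coordinate system on $\Coxeter{r}$. The orthogonality step is the short, conceptually clean half of the argument; it is in the dimension identity that the specific structure of finite Coxeter groups enters the proof.
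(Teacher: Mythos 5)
Your proposal follows essentially the same route as the paper: orthogonality of generators via the parabolic intersection identity $\standard{J}\cap\standard{K}=\standard{J\cap K}$ plus Lagrange, and then a dimension count using a descent-indexed basis (the paper's ``extensions'' $\ext{w}=\indicator{w\standard{S\setminus D(w)}}$ are exactly your canonical cosets $R_w$ of rank $m-|D(w)|$) together with the Dehn--Sommerville symmetry $\euler{W}{k}=\euler{W}{m-k}$. The one step you leave sketched---linear independence of the descent-indexed indicators---is carried out in the paper by exactly the triangularity you gesture at, ordering $W$ by length and noting that $w\in u\standard{S\setminus D(u)}$ with $w\neq u$ forces $\ell(w)>\ell(u)$; the spanning direction is then closed using the orthogonality bound itself, so your plan is sound and matches the paper's.
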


For two vectors $x,y\in \FF^n$, their coordinate-wise (Schur) product is a vector $x\odot y=( x_iy_i, i=1,\dots,n),$ and this definition extends to a product of subsets. 
RM codes satisfy a multiplication property:
for any $r_1,r_2$, 
    $$
    RM(r_1,m)\odot RM(r_2,m)\subseteq RM(r_1+r_2,m)
    $$
 with $RM(r^*,m)\coloneqq \FF^{2^m}$ for all $r^*\geq m$.   This follows since the product of two polynomials of degree $r_1$ and $r_2$ has degree at most $r_1+r_2$. This multiplication property is a general feature of all Coxeter codes:
\begin{theorem}\label{thm: Coxeter multiplication}
    For $r_1, r_2\in \br{-1,\dots,m}$, the Coxeter codes of type $(W,S)$ and orders $r_1$ and $r_2$ satisfy
    $$
        \Coxeter{r_1}\odot\Coxeter{r_2} \subseteq \Coxeter{r_1+r_2},
    $$
    where by convention $\Coxeter{r^*}\coloneqq \FF W$ for all $r^*\geq m$.
\end{theorem}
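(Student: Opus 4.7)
My plan is to reduce the multiplication property to a fact about intersections of standard cosets, then invoke the classical Tits identity for standard parabolic subgroups together with the nested property from \cref{thm: nested}.

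By bilinearity of $\odot$ and \cref{def: Coxeter codes}, it suffices to prove, for any $\sigma_1,\sigma_2\in W$ and any $J_1,J_2\subseteq S$ with $|J_i|=m-r_i$, that
$$
\indicator{\sigma_1\standard{J_1}}\odot\indicator{\sigma_2\standard{J_2}} \;=\; \indicator{\sigma_1\standard{J_1}\cap\sigma_2\standard{J_2}} \;\in\; \Coxeter{r_1+r_2}.
$$
If the intersection is empty, the right-hand side is $0$ and lies in every code. Otherwise, picking any $x$ in the intersection gives $\sigma_i\standard{J_i}=x\standard{J_i}$ for each $i$, and the standard coset identity yields
$$
\sigma_1\standard{J_1}\cap\sigma_2\standard{J_2} \;=\; x\bigl(\standard{J_1}\cap\standard{J_2}\bigr).
$$

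Next, I would invoke the classical Tits identity for Coxeter systems (see, e.g., \cite[Ch.~2]{BB05}),
$$
\standard{J_1}\cap\standard{J_2}\;=\;\standard{J_1\cap J_2},
$$
so the intersection is a standard coset $x\standard{J_1\cap J_2}$ of rank $|J_1\cap J_2|$. Elementary counting on subsets of $S$ gives
$$
|J_1\cap J_2|\;\geq\;|J_1|+|J_2|-m\;=\;m-(r_1+r_2).
$$
Equivalently, setting $k:=m-|J_1\cap J_2|$, we have $0\leq k\leq r_1+r_2$.

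To finish, if $r_1+r_2\geq m$ then by convention $\Coxeter{r_1+r_2}=\FF W$ and the containment is trivial. Otherwise $k\leq r_1+r_2\leq m$, and $\indicator{x\standard{J_1\cap J_2}}$ is one of the generating indicators of $\Coxeter{k}$; the nested property (\cref{thm: nested}) then yields $\Coxeter{k}\subseteq\Coxeter{r_1+r_2}$, completing the argument. The main obstacle is the Tits identity $\standard{J_1}\cap\standard{J_2}=\standard{J_1\cap J_2}$: the inclusion $\supseteq$ is immediate, but $\subseteq$ relies on the fact that every reduced expression for an element of a standard parabolic subgroup uses only the generators of that subgroup. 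I would either cite this from \cite{BB05} or sketch a brief proof via the exchange condition for $(W,S)$.
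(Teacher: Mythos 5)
Your proposal is correct and follows essentially the same route as the paper: reduce to generators, observe that the intersection of two standard cosets is (if nonempty) a standard coset of $\standard{J_1\cap J_2}$ via the identity $\standard{J_1}\cap\standard{J_2}=\standard{J_1\cap J_2}$ (the paper's \cref{fact: intersection is coset}, cited from \cite{BB05}), bound $|J_1\cap J_2|\ge m-(r_1+r_2)$ by inclusion--exclusion, and finish with the nesting property \cref{thm: nested}. Your version is slightly more explicit about why the coset intersection has the form $x(\standard{J_1}\cap\standard{J_2})$, but the argument is the same.
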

Lastly, Coxeter codes are (left) ideals in the group algebra, or \emph{group codes} in the sense of Berman \cite{berman1967theory}\footnote{Note a recent paper that extends RM codes \cite{natarajan2023berman}, titled {\em Berman codes}, which is not related to our construction.}. Recall that multiplication in $\FF W$ is given by the convolution of functions, denoted by $f\ast g$.
\begin{theorem}\label{thm: group code}
    For every $f\in\FF W$,  $f\ast\Coxeter{r}\subseteq\Coxeter{r}$. 
\end{theorem}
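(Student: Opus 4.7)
The plan is to exploit the fact that left multiplication by a group element (in the convolution sense) sends a standard coset to another standard coset of the same rank, which keeps us inside the generating set of $\Coxeter{r}$.

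First I would reduce the statement to the case of a point-mass indicator on the left. The group algebra $\FF W$ is spanned (as an $\FF$-vector space) by the functions $\indicator{\{w\}}$ for $w \in W$, so by bilinearity of convolution it suffices to show that $\indicator{\{w\}} \ast c \in \Coxeter{r}$ for every $w \in W$ and every $c \in \Coxeter{r}$. A further reduction by linearity in the second argument lets me restrict to the generators of $\Coxeter{r}$ from \cref{def: Coxeter codes}: it is enough to prove that $\indicator{\{w\}} \ast \indicator{\sigma \standard{J}} \in \Coxeter{r}$ for every $w, \sigma \in W$ and every $J \subseteq S$ with $|J| = m - r$.

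Next I would carry out the convolution computation. Using the standard group-algebra convention $(f \ast g)(x) = \sum_{y \in W} f(y)\, g(y^{-1} x)$, a direct calculation gives
\[
(\indicator{\{w\}} \ast \indicator{\sigma \standard{J}})(x) \;=\; \indicator{\sigma \standard{J}}(w^{-1} x) \;=\; \indicator{w \sigma \standard{J}}(x),
\]
so convolution from the left by $\indicator{\{w\}}$ is precisely left translation by $w$. Since $w\sigma \in W$ and $|J| = m - r$, the set $w\sigma \standard{J}$ is again a standard left coset of rank $m - r$, hence $\indicator{w\sigma\standard{J}}$ is one of the generators of $\Coxeter{r}$ and in particular lies in the code.

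Combining these two steps finishes the proof: an arbitrary $f \in \FF W$ and an arbitrary $c \in \Coxeter{r}$ expand as $\FF$-linear combinations of $\indicator{\{w\}}$ and $\indicator{\sigma\standard{J}}$ respectively, and every cross term $\indicator{\{w\}} \ast \indicator{\sigma\standard{J}} = \indicator{w\sigma\standard{J}}$ belongs to $\Coxeter{r}$, so $f \ast c \in \Coxeter{r}$ as well. There is no real obstacle here—the only point that requires care is fixing the convolution convention so that left multiplication in the group algebra matches left translation of functions; the closure of the generating set under left multiplication by $W$ is immediate from the coset-of-a-standard-subgroup description.
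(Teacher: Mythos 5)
Your proof is correct and follows essentially the same route as the paper's: both reduce by bilinearity of convolution to a point-mass indicator convolved with a coset generator, compute that $\indicator{\{w\}}\ast\indicator{\sigma\standard{J}}=\indicator{(w\sigma)\standard{J}}$, and observe that this is again a rank-$(m-r)$ standard coset generator of $\Coxeter{r}$. No gaps.
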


While \cref{thm: nested,thm: Coxeter multiplication,thm: group code} can be proved using standard tools from group theory and the definition of Coxeter codes, we will delay their proofs as well as the proof of \cref{thm: Coxeter duality} until we have constructed a basis of the codes; 
see \cref{sec: properties}, \cref{prop: AllTheorems} below.

\begin{figure}[t]
    \centering
    \includegraphics[width=.5\linewidth]{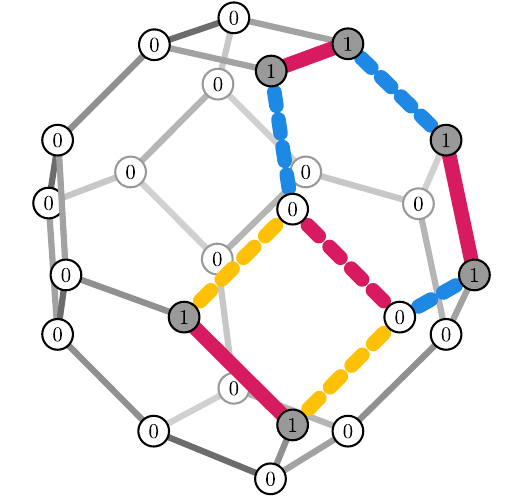}
    \caption{\footnotesize 
    The code $\coxeter{A_3}{S}{1}$ is generated by indicators of faces of the Cayley graph of $A_3$. The bit assignment shown in the figure represents the codeword in $\coxeter{A_3}{S}{1}$ generated by the indicators of the colored hexagonal and square faces. The same codeword is equivalently generated by the indicators of the three solid red edges, indicative of the containment $\coxeter{A_3}{S}{1}\subseteq \coxeter{A_3}{S}{2}$.}
    \label{fig: codes}
\end{figure}

\section{Coxeter Group Preliminaries}
We now list several properties of Coxeter groups in a form and level of generality suitable for our needs.

\begin{lemma}\label{fact: even order}
    A nontrivial finite Coxeter group has even order.
\end{lemma}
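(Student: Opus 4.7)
The plan is to exhibit a subgroup of order $2$ inside $W$ and then invoke Lagrange's theorem. Because $M(i,i) = 1$ in the defining matrix, each generator satisfies $s_i^2 = 1$, so every $s_i$ has order dividing $2$. It therefore suffices to show that at least one $s_i$ is nontrivial in $W$; then $\langle s_i\rangle$ has order exactly $2$ and divides $|W|$.

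The nontriviality of some generator is forced by the nontriviality of $W$ itself. Every element $w\in W$ can be written as a product $s_{i_1}\cdots s_{i_k}$ of generators, so if every $s_i$ equaled the identity in $W$ the entire group would collapse to $\{1\}$, contradicting the hypothesis. Hence some $s_i$ has order exactly $2$, and $2 \mid |W|$.

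Alternatively, and equivalently, one can build the sign homomorphism $\epsilon : W \to \{\pm 1\}$ by declaring $\epsilon(s_i) = -1$. This is well-defined on the Coxeter presentation because each relation $(s_i s_j)^{M(i,j)} = 1$ maps to $((-1)(-1))^{M(i,j)} = 1$ (and $s_i^2 = 1$ maps to $(-1)^2 = 1$), so $\epsilon$ extends to a group homomorphism. Whenever some generator is nontrivial in $W$, the map $\epsilon$ is surjective, so $\ker\epsilon$ is an index-$2$ subgroup and $|W|$ is even. No step is delicate; the only point requiring a sentence of care is the generation argument showing that not every $s_i$ can collapse to the identity.
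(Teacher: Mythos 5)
Your proof is correct and takes essentially the same route as the paper: exhibit an order-$2$ generator and apply Lagrange's theorem. You additionally justify why some generator is nontrivial in $W$ (and offer the sign homomorphism as a backup), a point the paper's one-line proof takes for granted as a standard fact about Coxeter systems.
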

\begin{proof}
   If $s\in S\ne\emptyset$, then $\text{ord\,}(s)=2,$ so $\{1,s\}$ is a subgroup of $W$, and the result holds by Lagrange's theorem.
\end{proof}
\begin{lemma}[\hspace*{-.5ex}\cite{BB05}, Prop. 2.4.1]\label{fact: intersection is coset}
    Let $\standard{J_1}$ and $\standard{J_2}$ be standard subgroups, then 
    $$\standard{J_1}\cap \standard{J_2}=\standard{J_1\cap J_2}.$$
    \end{lemma}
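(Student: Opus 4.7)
The plan is to prove the two inclusions separately. The containment $\standard{J_1\cap J_2}\subseteq \standard{J_1}\cap \standard{J_2}$ is immediate: since $J_1\cap J_2\subseteq J_1$ and $J_1\cap J_2\subseteq J_2$, the subgroup generated by $J_1\cap J_2$ is contained in each of $\standard{J_1}$ and $\standard{J_2}$, hence in their intersection. This direction requires no Coxeter-specific input.

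For the reverse inclusion, the key tool is the \emph{subword characterization} of standard parabolic subgroups: an element $w\in W$ lies in $\standard{J}$ if and only if every reduced expression of $w$ uses only generators from $J$. I would state this as an intermediate lemma. The forward direction (if $w\in\standard{J}$ then every reduced expression uses only $J$-generators) is the non-trivial one, and it follows from \emph{Matsumoto's theorem}, which says that any two reduced expressions for the same element are connected by a sequence of braid moves $s_is_js_i\cdots=s_js_is_j\cdots$ of length $M(i,j)$. Braid moves preserve the set of generators appearing in an expression, so if some reduced expression of $w$ uses only $J$-generators, then all of them do.

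Given this lemma, the reverse inclusion is a one-liner: suppose $w\in\standard{J_1}\cap\standard{J_2}$ and fix any reduced expression $w=s_{i_1}s_{i_2}\cdots s_{i_k}$. By the characterization applied to $J_1$, every $s_{i_\ell}\in J_1$, and applied to $J_2$, every $s_{i_\ell}\in J_2$. Therefore each $s_{i_\ell}\in J_1\cap J_2$, so $w\in\standard{J_1\cap J_2}$.

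The main obstacle is the subword characterization itself, which is a substantive result requiring either Matsumoto's theorem or the Exchange Condition. Since this is Proposition 2.4.1 of \cite{BB05}, and Matsumoto's theorem is standard material in the same reference, I would simply cite these facts rather than reproving them. An alternative route would be to invoke the Exchange Property directly: if a reduced expression of $w\in\standard{J_1}\cap\standard{J_2}$ contained a generator $s\notin J_1$, one could use the exchange condition to derive a contradiction with $w\in\standard{J_1}$; but this essentially reconstructs the subword lemma, so quoting it up front is cleaner.
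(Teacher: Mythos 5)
Your proof is correct: the easy inclusion is handled properly, and the reverse inclusion via the subword/word-property characterization of standard parabolic subgroups (with Matsumoto's theorem, or the deletion/exchange condition, supplying the fact that reduced expressions of an element of $\standard{J}$ use only letters of $J$) is the standard argument. The paper itself gives no proof of this lemma --- it is quoted directly from \cite{BB05}, Prop.~2.4.1 --- and your argument is essentially the one found there, so there is nothing further to compare.
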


\begin{lemma}\label{lem: even overlap}
    Let $\sigma_1\standard{J_1}$ and $\sigma_2\standard{J_2}$ be two standard cosets. If $\abs{J_1}+\abs{J_2}>m$ then $\abs{\sigma_1\standard{J_1}\cap \sigma_2\standard{J_2}}$ is even.
\end{lemma}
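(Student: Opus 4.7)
The plan is to reduce the claim to the ``even order'' fact by showing that the intersection, if non-empty, is a coset of a non-trivial finite standard subgroup of $W$.

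First I would handle the trivial case: if $\sigma_1\standard{J_1}\cap\sigma_2\standard{J_2}=\emptyset$, then its cardinality is $0$, which is even, and we are done. So assume the intersection is non-empty and pick any element $\tau$ in it. Writing $\sigma_1\standard{J_1}=\tau\standard{J_1}$ and $\sigma_2\standard{J_2}=\tau\standard{J_2}$ (standard cosets can be represented by any of their elements), the intersection becomes $\tau(\standard{J_1}\cap\standard{J_2})$. By \cref{fact: intersection is coset}, $\standard{J_1}\cap\standard{J_2}=\standard{J_1\cap J_2}$, so
$$\sigma_1\standard{J_1}\cap\sigma_2\standard{J_2}=\tau\standard{J_1\cap J_2},$$
and in particular $\abs{\sigma_1\standard{J_1}\cap\sigma_2\standard{J_2}}=\abs{\standard{J_1\cap J_2}}$.

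Next I would use the hypothesis $\abs{J_1}+\abs{J_2}>m$ to force $J_1\cap J_2$ to be non-empty. Since $J_1,J_2\subseteq S$ and $\abs{S}=m$, inclusion--exclusion gives
$$\abs{J_1\cap J_2}=\abs{J_1}+\abs{J_2}-\abs{J_1\cup J_2}\geq \abs{J_1}+\abs{J_2}-m>0.$$
Thus the generating set of $\standard{J_1\cap J_2}$ contains at least one reflection $s$, and the standard subgroup $\standard{J_1\cap J_2}$ is itself a (non-trivial) finite Coxeter group---it is finite as a subgroup of the finite group $W$. By \cref{fact: even order}, its order is even, and hence so is the size of the intersection.

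I do not anticipate a genuine obstacle here; the only slightly delicate point is the re-basing of the coset representative, which is why I begin by choosing $\tau$ inside the intersection so that \cref{fact: intersection is coset} applies directly to the subgroups rather than to arbitrary cosets.
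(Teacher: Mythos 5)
Your proof is correct and follows essentially the same route as the paper's: reduce to the non-empty case, rewrite the intersection as a single coset of $\standard{J_1\cap J_2}$ via \cref{fact: intersection is coset}, use $\abs{J_1}+\abs{J_2}>m$ to conclude $J_1\cap J_2\neq\emptyset$, and finish with \cref{fact: even order}. Your explicit choice of a common representative $\tau$ in the intersection just spells out the step the paper states tersely as ``there is a $\sigma\in W$ such that\dots''.
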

\begin{proof}
    The result is true if the cosets have trivial overlap. Otherwise, there is a
$\sigma\in W$ such that 
\begin{align*}
\sigma_1\standard{J_1}\cap \sigma_2\standard{J_2}=\sigma(\standard{J_1}\cap\standard{J_2})
=\sigma\standard{J_1\cap J_2}.
\end{align*}
As $\abs{J_1}+\abs{J_2}>m$ and $\abs{J_1},\abs{J_2}\leq m$, the intersection $J_1\cap J_2$ is non-empty and the result holds by \cref{fact: even order}.
\end{proof}

Coxeter systems carry a natural \emph{length function}, $\ell\colon W\rightarrow \NN$, where the length of an element $w$ is the smallest number of elements from $S$ needed to generate $w$. That is, $\ell(w)=\ell'$ if there is a decomposition $w=\sigma_1\sigma_2\cdots \sigma_{\ell'}$ with $\sigma_i\in S$ for all $i\in[\ell']$, 
and \emph{any} decomposition of $w$ using elements of $S$ contains at least $\ell'$ terms. 
We will make use of two well-known facts:
\begin{lemma}[\hspace*{-.5ex}\cite{BB05}, Lem. 1.4.1]\label{fact: changes length}
    Right multiplication by a generator changes the length of an element, i.e., $\ell(ws)=\ell(w)\pm 1$ for all $w\in W$ and $s\in S$.
\end{lemma}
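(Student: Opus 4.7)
The plan is to split the claim into two bounds: an easy upper bound $|\ell(ws) - \ell(w)| \le 1$, and the strict inequality $\ell(ws)\neq \ell(w)$. Combined, these force $\ell(ws) = \ell(w)\pm 1$.

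For the upper bound, I would first observe that if $w = s_{i_1}\cdots s_{i_k}$ is a reduced expression with $k = \ell(w)$, then $ws = s_{i_1}\cdots s_{i_k} s$ is an expression of length $k+1$, so $\ell(ws)\le \ell(w)+1$. Applying the same reasoning to $(ws)s = w$ (using $s^2 = 1$) gives $\ell(w)\le \ell(ws) + 1$, and together these yield $|\ell(ws)-\ell(w)|\le 1$.

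The substantive step is the parity argument showing $\ell(ws)\neq\ell(w)$. My plan is to construct a sign homomorphism $\epsilon\colon W\to\{\pm 1\}$ sending every generator $s\in S$ to $-1$. To see this is well-defined, it suffices to verify that each defining relation is preserved: $\epsilon(s)^2 = (-1)^2 = 1$, and $\epsilon\bigl((s_is_j)^{M(i,j)}\bigr) = \bigl((-1)(-1)\bigr)^{M(i,j)} = 1$, so $\epsilon$ extends to a group homomorphism. Consequently, for any expression $w = s_{i_1}\cdots s_{i_k}$ we have $\epsilon(w) = (-1)^k$; in particular, the parity of $k$ depends only on $w$, not on the chosen decomposition, so $\ell(w)\equiv \ell(w')\pmod 2$ whenever $w = w'$ and, since $\epsilon(ws) = -\epsilon(w)$, the lengths $\ell(w)$ and $\ell(ws)$ must have opposite parities.

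I expect the main subtle point to be verifying that $\epsilon$ is actually a well-defined homomorphism rather than just a map on generators; this is where the specific form of the Coxeter relations (in particular, that each exponent $M(i,j)$ appears with base $s_is_j$, so that the sign $(-1)^2 = 1$ gets raised to an integer power) is essential. Once $\epsilon$ is in place, combining it with the $\le 1$ bound finishes the proof with no further case analysis.
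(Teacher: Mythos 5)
Your proof is correct: the triangle-inequality bound $|\ell(ws)-\ell(w)|\le 1$ together with the sign homomorphism $\epsilon\colon W\to\{\pm 1\}$ (well-defined because every relator $(s_is_j)^{M(i,j)}$ maps to $1$) forces $\ell(ws)=\ell(w)\pm 1$. The paper does not prove this lemma itself --- it imports it directly from \cite{BB05}, Lemma 1.4.1 --- and your argument is essentially the standard proof given in that reference, so there is nothing to flag.
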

\begin{lemma}[\hspace*{-.5ex}\cite{AB08}, Prop. 2.20]\label{fact: minimal element}
    A standard coset $w\standard{J}$ has a unique element of minimal length, i.e., there is a unique $w_1\in w\standard{J}$ such that $\ell(w_1)<\ell(u)$ for every $u\in w\standard{J}$. This element is characterized by the property that $\ell(w_1 s)=\ell(w_1)+1$ for every $s\in J$. 
\end{lemma}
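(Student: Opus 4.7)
The plan is to split the statement into three parts: existence of a minimal-length element, the characterization via $\ell(w_1 s)=\ell(w_1)+1$ for all $s\in J$, and uniqueness. Existence is immediate from finiteness: since $W$ is finite, $\ell$ attains its minimum on the finite set $w\standard{J}$, and we may pick any minimizer $w_1$.

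For the characterization, I would observe that for each $s\in J$ the element $w_1 s$ also lies in $w\standard{J}$ (since $s\in\standard{J}$), so $\ell(w_1 s)\ge\ell(w_1)$. By \cref{fact: changes length}, $\ell(w_1 s)\in\{\ell(w_1)\pm 1\}$, and the ``$-1$'' case is excluded by minimality, forcing $\ell(w_1 s)=\ell(w_1)+1$.

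The substantive part is uniqueness, for which the plan is to prove the stronger claim that whenever $w_1$ satisfies $\ell(w_1 s)>\ell(w_1)$ for all $s\in J$, one has $\ell(w_1 v)=\ell(w_1)+\ell(v)$ for every $v\in\standard{J}$. Uniqueness follows at once: a second minimal-length element $w_1'=w_1 v$ must then satisfy $\ell(v)=0$, hence $v=1$ and $w_1'=w_1$. I would prove this claim by induction on $k=\ell(v)$, writing a reduced expression $v=v's$ with $s\in J$ and $\ell(v')=k-1$ (invoking the standard fact that elements of $\standard{J}$ admit reduced expressions in letters of $J$). By induction, $\ell(w_1 v')=\ell(w_1)+(k-1)$, so concatenating reduced expressions of $w_1$ and $v'$ produces a reduced expression for $w_1 v'$. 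Then \cref{fact: changes length} gives $\ell(w_1 v)=\ell(w_1 v')\pm 1$, and I would rule out the ``$-1$'' case via the Exchange Condition: in that scenario the concatenated reduced expression must lose exactly one letter under right-multiplication by $s$; a deletion inside the $v'$ portion would express $v$ in fewer than $k$ letters, contradicting $\ell(v)=k$, while a deletion inside the $w_1$ portion would yield $\ell(w_1 s)\le\ell(w_1)-1$, contradicting the hypothesis.

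The main obstacle is the uniqueness step, and specifically the clean appeal to the Exchange Condition together with the subtlety that reduced expressions for elements of $\standard{J}$ can be chosen using only generators from $J$. These are standard but non-trivial facts about parabolic subgroups of Coxeter groups, going beyond the basic length-parity tool of \cref{fact: changes length} that suffices for the first two parts of the statement.
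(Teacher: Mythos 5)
The paper does not actually prove this lemma: it is imported verbatim from Abramenko--Brown \cite{AB08}, so there is no in-paper argument to compare against. On its own terms, your existence step and the implication ``minimal $\Rightarrow\ell(w_1s)=\ell(w_1)+1$ for all $s\in J$'' are correct and use exactly the available tools (\cref{fact: changes length} plus minimality), and your overall strategy (length-additivity on the coset implies uniqueness) is the standard one.

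The gap is in the last case of your uniqueness argument. Write the concatenated reduced word as $w_1v'=u_1\cdots u_p\,t_1\cdots t_{k-1}$ with the $u_i$ spelling $w_1$ and the $t_j\in J$ spelling $v'$. If the Exchange Condition deletes a letter at position $i\le p$, it gives
\[
w_1v's \;=\; u_1\cdots\hat{u_i}\cdots u_p\,t_1\cdots t_{k-1}
\;=\;\bigl(u_1\cdots\hat{u_i}\cdots u_p\bigr)v',
\]
i.e.\ $w_1\cdot\bigl(v's(v')^{-1}\bigr)=u_1\cdots\hat{u_i}\cdots u_p$, an element of length at most $\ell(w_1)-1$. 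The element multiplied onto $w_1$ is the conjugate reflection $v's(v')^{-1}\in\standard{J}$, not the simple generator $s$, so this does \emph{not} yield $\ell(w_1s)\le\ell(w_1)-1$; your hypothesis $\ell(w_1s)>\ell(w_1)$ for $s\in J$ is contradicted only if one already knows that such a $w_1$ increases length against every reflection of $\standard{J}$, which is essentially the statement being proved. The standard repair is to run the induction for a length-minimizer $w_1$ of the coset rather than for an arbitrary element with the descent property: the display then exhibits $u_1\cdots\hat{u_i}\cdots u_p\in w_1\standard{J}$ strictly shorter than $w_1$, contradicting minimality outright. This gives $\ell(w_1v)=\ell(w_1)+\ell(v)$ for the minimizer, hence uniqueness; the remaining direction of the characterization follows by writing any $w$ with the descent property as $w=w_1a$ with $a\in\standard{J}$ and peeling off a final $J$-letter of $a$ if $a\neq 1$.
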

Given $w\in W$, these statements suggest a way to construct standard cosets for which $w$ is the minimal element: take $w\standard{J}$ where $J$ is any set of generators that \emph{increase} the length of $w$ via right multiplication. The following standard definition is phrased in terms of elements that \emph{decrease} the length.

\begin{definition}\label{def:descents}
    For $w\in W$, the subset of generators $D(w)\subseteq S$ that reduce the length of $w$ after multiplication on the right is the (right) \emph{descent set} of $w$:\begin{align*}
        D(w)\coloneqq \br{s\in S\bigmid \ell(ws)<\ell(w)}.
    \end{align*}
    The value $d(w)\coloneqq\abs{D(w)}$ is the (right) \emph{descent number} of $w$. 
\end{definition}
\begin{lemma}\label{lem: unique shortest longest}
    For every $w\in W$, $w$ is the unique shortest element of the standard coset $w\standard{S\setminus D(w)}$. 
\end{lemma}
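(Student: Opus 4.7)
The plan is to invoke \cref{fact: minimal element} directly, using its characterization of the unique minimal-length representative. Set $J \coloneqq S \setminus D(w)$. By definition of $D(w)$, every $s \in J$ satisfies $\ell(ws) \geq \ell(w)$; combined with \cref{fact: changes length}, this forces $\ell(ws) = \ell(w) + 1$ for every $s \in J$. In particular, $w$ itself satisfies the property that characterizes the unique shortest element of the coset $w\standard{J}$.

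It remains only to note that $w \in w\standard{J}$ trivially (since $1 \in \standard{J}$), so $w$ is an element of the coset that satisfies the characterizing property of \cref{fact: minimal element}. By the uniqueness clause in that lemma, $w$ must be the (unique) shortest element of $w\standard{S \setminus D(w)}$.

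I don't anticipate any real obstacle here: the statement is essentially a reformulation of \cref{fact: minimal element} with the descent set viewed as the \emph{complement} of the data specifying the coset. The only mild subtlety is reading \cref{fact: minimal element} as an ``if and only if'' characterization, which the word ``characterized'' in its statement licenses; if one wanted to be extra careful, one could argue directly that any $u \in w\standard{J}$ with $u \neq w$ can be written as $u = w \cdot s_1 \cdots s_k$ with $s_i \in J$ and $k \geq 1$, and then use the fact that left-multiplying the reduced expression for $w$ by such a product of ``ascending'' generators cannot decrease the length below $\ell(w) + 1$. But this refinement is unnecessary given the lemma already on hand.
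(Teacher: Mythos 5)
Your proof is correct and takes essentially the same route as the paper: both observe that by definition of $D(w)$ and \cref{fact: changes length}, $\ell(ws)=\ell(w)+1$ for every $s\in S\setminus D(w)$, and then conclude via the characterization in \cref{fact: minimal element}. Your extra remarks about the ``if and only if'' reading are fine but not needed beyond what the paper itself does.
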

\begin{proof}
    By \cref{fact: changes length}, $\ell(ws)=\ell(w)+1$ for every $s\in S\setminus D(w)$, so the result holds by \cref{fact: minimal element}.
\end{proof}

The following combinatorial quantity will be useful in specifying the dimension of a Coxeter code.
\begin{definition}\label{def:Eulerian}(\cite[Sec.7.2]{BB05},\cite{petersen2015eulerian})
    For $i\in\br{0,\dots,m}$, the $W$-Eulerian number $\euler{W}{i}$ is the count of elements in $W$ with descent number equal to $i$,
    \begin{equation*}
        \euler{W}{i}:=\abs{\br{w\in W\bigmid d(w)=i}}.
    \end{equation*}
\end{definition}    
Eulerian numbers satisfy the {\em Dehn--Sommerville equations}
    \begin{equation}\label{eq: DS}
        \euler{W}{i} = \euler{W}{m-i}.
    \end{equation}
From \cref{def:Eulerian} we also immediately observe that
  \begin{equation}\label{eq: sum}
    \sum_{i=1}^m \euler Wi=|W|.
  \end{equation}
  
\cref{def:descents,def:Eulerian} depend on the choice of generating set $S$, but we suppress this dependence in the notations for simplicity, as is standard. 

\begin{remark}
If $W=\ZZ_2^m$ then $\euler Wi=\binom mi$. If $(W,S)=A_m$ is the symmetric group, then $\euler W i$ is the classic Eulerian number, i.e., the count of permutations in $W$ with $i$ descents \cite[p.6]{petersen2015eulerian}. See \cref{sec: computing Eulerian numbers} for expressions computing $W$-Eulerian numbers for reducible and irreducible Coxeter systems. \hfill$\lhd$
\end{remark}

We conclude this section with a remark on reducible systems. Suppose that 
$(W_1,S_1)$ and $(W_2,S_2)$ are finite Coxeter systems of ranks $m_1$ and $m_2$, respectively. Their direct product $(W,S)\coloneqq (W_1,S)\times (W_2,T)$ is a finite Coxeter system of rank $m_1+m_2$ where $S\coloneqq S\sqcup T$ and $(st)^2=1$ for every $s\in S$ and $t\in T$.  Define the {\em Eulerian polynomial} of the system $W_1$ as
   $$
   W_1(t):=\sum_{i=0}^m \euler {W_1}i t^i,
   $$
and similarly for $W_2$. It is a classic fact \cite[p.202]{BB05} that for the direct product we have 
   \begin{equation}\label{eq: Euler multiplicative}
   W(t)=W_1(t)W_2(t)
   \end{equation}
and thus,
\begin{equation*}
    \euler{W}{k} = \sum_{i+j=k}\euler{W_1}{i}\euler{W_2}{j}, \quad k=1,\dots, s.
\end{equation*}
We will use this property to compute the dimension of codes on products of dihedral groups below.

\section{Code Structure}\label{sec: properties}
In this section, we construct an explicit basis of Coxeter codes, establish their structural properties, and prove the claims stated in \cref{thm: nested,thm: Coxeter duality,thm: Coxeter multiplication,thm: group code}.

\begin{definition}
    For $w\in W$, the \emph{extension} of $w$ in $\FF W$, denoted $\ext{w}\in\FF W$, is the indicator function corresponding to the coset $w\standard{S\setminus D(w)}$, $\ext{w}\coloneqq\indicator{w\standard{S\setminus D(w)}}$. 
    The \emph{rank} of $\ext{w}$ is 
      $$
      \rank(\ext{w}): = m-d(w)=\rank(w\standard{S\setminus D(w)}).
      $$
\end{definition}
\begin{figure}[ht]
\includegraphics[width=.5\linewidth]{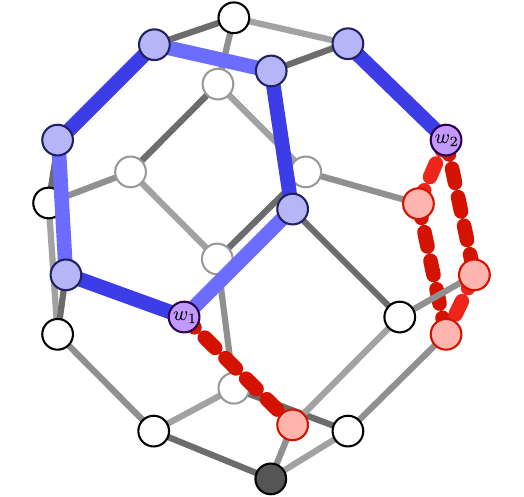}
\caption{This figure shows extensions (blue) and reverse extensions (red) of the elements $w_1$ and $w_2$ in $A_3$.
The identity element is shown as the shaded vertex of the graph.}
\end{figure}
\begin{definition}
Let $\mcB$ denote the set of all extensions. For $i\in\br{0,\dots,m}$, let 
    \begin{align*}
        \mcB_i&:=\br{\ext{w}\in\FF W\bigmid \rank(\ext w)=i}\\
        &=\br{\ext{w}\in\FF W\bigmid w\in W,\;d(w)=m-i}.
    \end{align*}
\end{definition}
Note that by the Dehn--Sommerville equations, \cref{eq: DS}, we have
    \begin{align*}
        \abs{\mcB_{i}}
        = \abs{\mcB_{m-i}} =\euler{W}{i}.
    \end{align*}
For $r\in\br{-1,\dots,m}$ consider the collection of extensions with rank at least $m-r$, 
  $$
  \mcB_{\geq m-r}\coloneqq\bigcup_{i\geq m-r}\mcB_i.
  $$
\begin{example}\label{example: RM4} For the RM case when $W=\ZZ_2^m$, this collection is precisely the standard basis of monomials in $m$ variables with degree at most $r$: if $z\in\ZZ_2^m$ then $\ext{z}=\prod_{i\in\supp(z)}x_i$. For instance, take $m=4$ and let $z=[1001].$ Writing vectors as columns, we have
   \begin{equation*}
    z=\left[\hspace*{-4pt}\begin{array}{c}1\\[-4pt]0\\[-4pt]0\\[-4pt]1\end{array}\hspace*{-4pt}\right],\;S\backslash D(z)=\{e_2,e_3\},\;
    z+\standard{S\backslash D(z)}=\left[\begin{array}{*4{@{\hspace*{1pt}}c}}1&1&1&1\\[-4pt]0&0&1&1\\[-4pt]0&1&0&1\\[-4pt]1&1&1&1 \end{array}\hspace*{-4pt}\right],
    \;\ext z=\indicator{z+\standard{e_1,e_4}}=x_1x_4,
   \end{equation*}
and thus $\mcB_{\ge m-r}$ is equivalently written as the set of monomials of $x_1,\dots,x_4$ of degree $r$ or less. \hfill$\lhd$
\end{example}

We will prove that $\mcB_{\geq m-r}$ is always a basis for the order-$r$ Coxeter code of type $(W,S)$. First, proving that $\mcB$ is linearly independent will rely on the following simple lemma, which says that 
$w\not\in \supp(\ext{u})$ for any $u$ of length at least $w$. Recall again that we do not make a difference between functions and their 
evaluations, so for $u,w\in W$, $\ext{u}(w)=1$ is equivalent to $w\in \supp (\ext u)$.
\begin{lemma}\label{lem: lemma for independence}
    Let $w\in W$ and $U\subseteq W$. If $\ell(w)\leq\ell(u)$ for all $u\in U$ then $\ext{u}(w)=0$ for every $u\in U\setminus\br{w}$.
\end{lemma}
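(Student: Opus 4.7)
The plan is to unpack the definition of $\ext{u}(w)$ and then invoke the uniqueness of minimal-length coset representatives from \cref{fact: minimal element}. Recall that $\ext{u} = \indicator{u\standard{S\setminus D(u)}}$, so $\ext{u}(w)=1$ is equivalent to the statement $w \in u\standard{S\setminus D(u)}$. The goal will therefore be to show that, under the length hypothesis, no $u \in U\setminus\{w\}$ can have $w$ lying in its associated coset.

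First I would note that, by \cref{def:descents}, every $s \in S\setminus D(u)$ satisfies $\ell(us) = \ell(u) + 1$. This is exactly the hypothesis needed to apply \cref{fact: minimal element} (or, equivalently, \cref{lem: unique shortest longest}) to the standard coset $u\standard{S\setminus D(u)}$: it tells us that $u$ is the \emph{unique} element of minimal length in this coset. Consequently, any $v \in u\standard{S\setminus D(u)}$ with $v \neq u$ must satisfy $\ell(v) > \ell(u)$ strictly.

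Now suppose, towards contradiction, that there is some $u \in U\setminus\{w\}$ with $\ext{u}(w) = 1$, i.e., $w \in u\standard{S\setminus D(u)}$. Since $w \neq u$, the uniqueness statement above forces $\ell(w) > \ell(u)$. But the hypothesis of the lemma gives $\ell(w) \leq \ell(u)$ (since $u \in U$), a contradiction. Hence $\ext{u}(w) = 0$ for every such $u$, which is what we wanted.

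There is no real obstacle here: the lemma is essentially a direct translation of \cref{fact: minimal element} via the definition of $\ext{u}$, and the only point requiring care is verifying that the coset $u\standard{S\setminus D(u)}$ actually satisfies the hypothesis of \cref{fact: minimal element}, which is built into the definition of the descent set.
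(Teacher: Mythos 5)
Your proof is correct and follows essentially the same route as the paper: assume $\ext{u}(w)=1$ for some $u\neq w$, use \cref{lem: unique shortest longest} (equivalently \cref{fact: minimal element}) to conclude $\ell(w)>\ell(u)$, and contradict the length hypothesis. The extra verification that $u\standard{S\setminus D(u)}$ meets the hypothesis of \cref{fact: minimal element} is a nice touch but is already packaged into \cref{lem: unique shortest longest}.
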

\begin{proof}
    Suppose for contradiction that $\ext{u}(w)=1$ for some $u\in U$, so $w\in u\standard{S\setminus D(u)}$. As $w\neq u$, \cref{lem: unique shortest longest} implies that $\ell(w)>\ell(u)$, contradicting the assumption on $U$.
\end{proof}

\begin{lemma}\label{lem: independence}
    The collection $\mcB$ is linearly independent.
\end{lemma}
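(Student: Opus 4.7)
The plan is to derive a contradiction from a minimal nontrivial linear dependence by evaluating at a shortest element of its support, using \cref{lem: lemma for independence} to annihilate all other terms. Specifically, suppose for contradiction that there exist coefficients $c_u \in \FF$, not all zero, with $\sum_{u \in W} c_u \ext{u} = 0$, and let $U \coloneqq \br{u \in W \mid c_u \neq 0}$ be the (nonempty) support of this relation.

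Next, I would choose $w \in U$ to be an element of minimal length in $U$, so that $\ell(w) \leq \ell(u)$ for every $u \in U$. Evaluating the linear combination at $w$, I would separate the term corresponding to $w$ itself from the rest. By definition, $\ext{w}(w) = \indicator{w\standard{S\setminus D(w)}}(w) = 1$ since $w$ belongs to its own coset, while \cref{lem: lemma for independence} applied to the set $U$ and the element $w$ yields $\ext{u}(w) = 0$ for every $u \in U\setminus\br{w}$.

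Combining these evaluations gives
\begin{equation*}
    0 = \sum_{u \in U} c_u \ext{u}(w) = c_w \cdot \ext{w}(w) = c_w,
\end{equation*}
contradicting the fact that $w \in U$ was chosen so that $c_w \neq 0$. Hence no nontrivial dependence exists, and $\mcB$ is linearly independent.

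I expect this argument to be essentially immediate once \cref{lem: lemma for independence} is in hand; there is no real obstacle beyond the bookkeeping of picking the minimum-length element of the support, and making explicit that $\ext{w}(w)=1$ since the identity element of $W$ always lies in $\standard{S\setminus D(w)}$. The same minimal-length-element strategy will reappear when identifying which subfamily of extensions spans a given $\Coxeter{r}$, so it is worth stating cleanly here.
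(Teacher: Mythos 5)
Your proof is correct and follows exactly the paper's argument: assume a nontrivial dependence, pick a minimal-length element $w$ of its support, and use \cref{lem: lemma for independence} to kill every other term, forcing the coefficient of $\ext{w}$ to vanish. The only cosmetic difference is that you carry explicit coefficients $c_u$, whereas the paper (working over $\FF_2$) phrases the dependence as a sum over a subset $U\subseteq W$; these are the same thing.
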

\begin{proof}
    Suppose for contradiction that there is a nonempty subset $U\subseteq W$ for which the function $\sum_{u\in U} \ext{u}$ is identically zero. Since $W$ is finite, there must exist a $w\in U$ (not necessarily unique) whose length is minimal among the elements in $U$, i.e., $\ell(w)\leq\ell(u)$ for all $u\in U$. By \cref{lem: lemma for independence} we have $\ext{u}(w)=0$ for all $u\in U\setminus\br{w}$. This, however, is impossible, as it implies $\sum_{u\in U} \ext{u}(w) = \ext{w}(w)=1$.
\end{proof}
We now show that the span of $\mcB_{\geq m-r}$ satisfies a duality structure.
Recall that for two functions $f,g\in\FF W$, their dot product is given by $f\cdot g = \abs{\supp f\cap \supp g}\pmod{2}$.

\begin{lemma}\label{lem: dual spans}
    For each $r\in\br{-1,\dots,m}$ we have
    \begin{equation*}
        \Span \mcB_{\geq m-r} =\left(\Span \mcB_{\geq r+1}\right)^\perp.
    \end{equation*}
\end{lemma}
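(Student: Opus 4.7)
The plan is to prove the two spans have the claimed duality by (a) establishing the inclusion $\Span \mcB_{\geq m-r}\subseteq (\Span \mcB_{\geq r+1})^\perp$ directly from the combinatorics of standard cosets, and (b) showing the dimensions agree so that the inclusion is actually equality.

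For the inclusion, it suffices by bilinearity to fix $\ext{u}\in\mcB_{\geq m-r}$ and $\ext{v}\in\mcB_{\geq r+1}$ and show $\ext{u}\cdot\ext{v}=0$. By definition, $\supp\ext u = u\standard{S\setminus D(u)}$ and $\supp\ext v = v\standard{S\setminus D(v)}$ are standard cosets of ranks $\rank(\ext u)\geq m-r$ and $\rank(\ext v)\geq r+1$. Since $(m-r)+(r+1)=m+1>m$, \cref{lem: even overlap} applies and gives that the intersection of the two supports has even cardinality, hence $\ext u\cdot\ext v\equiv 0\pmod 2$.

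For the dimension count, \cref{lem: independence} says that $\mcB$ is linearly independent, so $\mcB_{\geq m-r}$ (a subset of $\mcB$) is linearly independent, and therefore
\begin{equation*}
\dim\Span\mcB_{\geq m-r}=\sum_{i\geq m-r}|\mcB_i|=\sum_{i\geq m-r}\euler{W}{i}=\sum_{j\leq r}\euler{W}{j},
\end{equation*}
where the last equality uses the Dehn--Sommerville equations \eqref{eq: DS} under the substitution $j=m-i$. By the same reasoning, $\dim\Span\mcB_{\geq r+1}=\sum_{i\geq r+1}\euler{W}{i}$, so by \eqref{eq: sum},
\begin{equation*}
\dim\bigl(\Span\mcB_{\geq r+1}\bigr)^\perp=|W|-\sum_{i\geq r+1}\euler{W}{i}=\sum_{j\leq r}\euler{W}{j}.
\end{equation*}
The two dimensions match, so combined with the containment from step (a) the equality of spans follows.

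The main conceptual point is step (a), where one needs the ranks of the two cosets to add up to strictly more than $m$ so that \cref{lem: even overlap} can fire; the slight asymmetry ``$m-r$'' versus ``$r+1$'' in the statement is precisely what makes this work, and is also exactly the asymmetry needed for the two dimensions to sum to $|W|$ via Dehn--Sommerville. The edge cases $r=-1$ and $r=m$ correspond to empty generating sets (giving the zero code) paired with $\mcB$ itself (spanning all of $\FF W$), and are consistent with the formula.
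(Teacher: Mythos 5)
Your proof is correct and follows essentially the same route as the paper's: the inclusion via \cref{lem: even overlap} applied to the two standard cosets whose ranks sum to more than $m$, followed by a dimension count using the linear independence of $\mcB$, the Dehn--Sommerville equations \eqref{eq: DS}, and \eqref{eq: sum}. No gaps; the argument matches the paper's proof step for step.
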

\begin{proof}
    We first show that $\Span \mcB_{\geq m-r} \subseteq \left(\Span \mcB_{\geq r+1}\right)^\perp$, which is equivalent to the statement that each $\ext{w_1}\in\mcB_{\geq m-r}$ has even overlap with each $\ext{w_2}\in\mcB_{\geq r+1}$. The supports of such $\ext{w_1}$ and $\ext{w_2}$ are standard cosets with ranks $r_1\geq m-r$ and $r_2\geq r+1$, respectively. Since $r_1+r_2>m$, \cref{lem: even overlap} implies that the cardinality of their intersection is even. Thus $\ext{w_1}\cdot\ext{w_2}=0$, as desired.

    We now show $\dim (\Span \mcB_{\geq m-r}) = \dim (\left(\Span \mcB_{\geq r+1}\right)^\perp)$, which implies that the two spaces are, in fact, equal. 
    Using \cref{eq: DS} and the linear independence of $\mcB_{\geq m-r}$, we compute
    \begin{align*}
        \dim (\Span \mcB_{\geq m-r}) &= \sum_{i=m-r}^m \euler{W}{i} 
        =\sum_{i=0}^{r} \euler{W}{i} 
    \end{align*}
    Since the dimensions of a code and its dual code sum to the dimension of the entire vector space, we have
    \begin{align*}
        \dim (\left(\Span \mcB_{\geq r+1}\right)^\perp)) &= \abs{W}-\dim (\Span \mcB_{\geq r+1})
        \\
        &= \sum_{i=0}^{r} \euler{W}{i},
    \end{align*}
where we have used \cref{eq: DS,eq: sum}.
\end{proof}

\begin{theorem}\label{thm: extensions form a basis}
    For $r\in\br{-1,\dots,m}$, $\mcB_{\geq m-r}$ is a basis for the order-$r$ Coxeter code of type $(W,S)$ and rank $m$:
    \begin{equation*}
        \Coxeter{r} = \Span \mcB_{\geq m-r},
    \end{equation*}
    or, alternatively,
    \begin{equation}\label{eq: basis for Coxeter code}
        \Coxeter{r} = \Span \br{\ext{w}\bigmid w\in W,\, d(w)\leq r}.
    \end{equation}
\end{theorem}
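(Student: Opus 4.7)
\emph{Plan.} The collection $\mcB_{\geq m-r}$ is linearly independent as an immediate consequence of \cref{lem: independence}, since $\mcB_{\geq m-r}\subseteq\mcB$. It remains to show the spanning identity $\Span \mcB_{\geq m-r} = \Coxeter{r}$. My approach is a four-inclusion sandwich that is closed off by \cref{lem: dual spans}.

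\textbf{Step 1 (the containment $\Span \mcB_{\geq m-r}\subseteq \Coxeter{r}$).} Each $\ext{w}\in \mcB_{\geq m-r}$ is the indicator of a standard coset $w\standard{S\setminus D(w)}$ of some rank $k\coloneqq m-d(w)\geq m-r$. If $k=m-r$, this is already a generator of $\Coxeter{r}$. Otherwise, pick any $J'\subseteq S\setminus D(w)$ with $|J'|=m-r$; then $\standard{J'}\leq \standard{S\setminus D(w)}$, and decomposing the larger subgroup into its left cosets $\tau_1\standard{J'},\dots,\tau_N\standard{J'}$ produces the partition $w\standard{S\setminus D(w)} = \bigsqcup_{i=1}^N (w\tau_i)\standard{J'}$. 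Each summand is a rank-$(m-r)$ standard coset, so $\ext{w}$ is an $\FF$-sum of generators of $\Coxeter{r}$.

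\textbf{Step 2 (orthogonality $\Coxeter{r}\subseteq \Coxeter{m-r-1}^\perp$).} For any two generators $\indicator{\sigma_1\standard{J_1}}\in\Coxeter{r}$ and $\indicator{\sigma_2\standard{J_2}}\in\Coxeter{m-r-1}$ we have $|J_1|+|J_2|=(m-r)+(r+1)=m+1>m$, so \cref{lem: even overlap} says the intersection of the two cosets has even cardinality and hence the dot product vanishes.

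\textbf{Step 3 (closing the loop).} Applying Step 1 with $r$ replaced by $m-r-1$ yields $\Span \mcB_{\geq r+1}\subseteq \Coxeter{m-r-1}$, and taking duals gives $\Coxeter{m-r-1}^\perp\subseteq (\Span \mcB_{\geq r+1})^\perp$. Combining with \cref{lem: dual spans} produces the chain
\begin{equation*}
\Span \mcB_{\geq m-r}\;\subseteq\; \Coxeter{r}\;\subseteq\; \Coxeter{m-r-1}^\perp\;\subseteq\; (\Span \mcB_{\geq r+1})^\perp\;=\; \Span \mcB_{\geq m-r},
\end{equation*}
forcing equality everywhere. The alternate form \cref{eq: basis for Coxeter code} is then just a rewriting of ``$\rank(\ext{w})\geq m-r$'' as ``$d(w)\leq r$''.

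\textbf{Main obstacle.} The crux is Step 1: expressing an extension of rank strictly greater than $m-r$ as a sum of indicators of rank-$(m-r)$ standard cosets. The one point that warrants care is that each translate $(w\tau_i)\standard{J'}$ really is a \emph{standard} coset, which is immediate from the definition (a standard coset is any left translate of a standard subgroup). Once this decomposition is in hand, the remaining work is purely a duality-and-dimension bookkeeping exercise using tools already developed in the section.
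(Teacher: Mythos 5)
Your proof is correct and uses essentially the same two ingredients as the paper: the decomposition of an extension's coset into rank-$(m-r)$ cosets of a standard subgroup for one inclusion, and \cref{lem: even overlap} together with the dimension count in \cref{lem: dual spans} for the other. The only (cosmetic) difference is that you route the reverse inclusion through $\Coxeter{m-r-1}^\perp$ rather than checking orthogonality of a rank-$(m-r)$ generator against $\mcB_{\geq r+1}$ directly, as the paper does.
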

\begin{proof}
    Recall that $\Coxeter{r}$ is the span of indicator functions of standard cosets with rank \emph{exactly} equal to $m-r$.
    
($\supseteq$) Consider an $\ext{w}\in \mcB_{\geq m-r}$, which by definition is the indicator function of $w\standard{S\setminus D(w)}$. Let $J\subseteq S\setminus D(w)$ be any subset of $\abs{J}=m-r$ elements of $S\setminus D(w)$, which must exist since $\rank (\ext{w})\geq m-r$. The set of cosets of $\standard{J}$ in $\standard{S\setminus D(w)}$, denoted by $\standard{S\setminus D(w)}/\standard{J}$, forms a partition of $\standard{S\setminus D(w)}$, so
their supports are disjoint, and
    \begin{equation*}
        \ext{w} = \sum_{R\in \standard{S\setminus D(w)}/\standard{J}} \indicator{wR}. 
    \end{equation*}
This shows that $\ext w$ is a sum of standard cosets of rank $m-r$, so it
is a vector in $\Coxeter r$.

    ($\subseteq$) Let $R$ be a standard coset of rank $m-r$ and let $\ext w\in \mcB_{\ge r+1}$. By definition,
    $\rank(\ext w)\ge r+1,$ and thus $\rank(R)+\rank(\ext w)>m$. With this,
\cref{lem: even overlap} implies that $R$ satisfies 
 $\indicator{R}\cdot \ext{w}=0$ for every $\ext{w}\in\mcB_{r+1}$. Thus, $R\in (\Span \mcB_{\ge r+1})^\perp$, which equals $\Span\mcB_{\ge m-r}$ by \cref{lem: dual spans}.
\end{proof}

\begin{example}[\cref{example: RM4} continued]\label{example: RM basis}
If $W=\ZZ_2^m$, then extensions are functions $\ext{z}\colon\ZZ_2^m\rightarrow\FF$ given by $\ext{z}=\prod_{i\in\supp z}x_i$, and descent numbers are given by $d(z)=\abs{z}$. Thus, \cref{eq: basis for Coxeter code} implies that
\begin{equation*}
    \coxeter{\ZZ_2^m}{}{r} = \Span\Big\{\prod_{i\in\supp z}x_i\bigmid z\in\ZZ_2^m,\, \abs{z}\leq r\Big\},
\end{equation*}
proving that $\coxeter{\ZZ_2^m}{}{r}=RM(r,m)$ and recovering the formula $\dim RM(r,m)=\sum_{i=0}^r\binom mi$. \hfill$\lhd$
\end{example}

\begin{proposition}\label{prop: AllTheorems} The following hold for all $q<r$ and $r_1, r_2$:
    \begin{enumerate}
        \item[{\rm(1)}] {\rm(\cref{thm: nested})} $\Coxeter{q}\subsetneq\Coxeter{r}$,
        \item[{\rm(2)}] {\rm(\cref{thm: Coxeter duality})} $\Coxeter{r}^\perp = \Coxeter{m-r-1}$,
        \item[{\rm(3)}] {\rm(\cref{thm: Coxeter multiplication})} $\Coxeter{r_1}\odot\Coxeter{r_2} \subseteq \Coxeter{r_1+r_2}$, and
        \item[{\rm(4)}] {\rm(\cref{thm: group code})} $f\ast \Coxeter{r}\subseteq\Coxeter{r}$ for any $f\in\FF W$.
    \end{enumerate}
\end{proposition}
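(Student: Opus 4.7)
The plan is to derive all four parts directly from the basis characterization in \cref{thm: extensions form a basis} (extensions form a basis), \cref{lem: dual spans} (dual-span identity), and the intersection formula \cref{fact: intersection is coset}.

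For the nesting claim (1), the implication $q<r\Rightarrow m-r<m-q$ gives $\mcB_{\geq m-r}\supseteq\mcB_{\geq m-q}$, so \cref{thm: extensions form a basis} yields $\Coxeter{q}\subseteq\Coxeter{r}$. Strict containment follows because, by the linear independence of $\mcB$ (\cref{lem: independence}), any $w\in W$ with $d(w)\in(q,r]$ produces an extension $\ext{w}\in\Coxeter{r}\setminus\Coxeter{q}$. Such a $w$ exists by positivity of the $W$-Eulerian numbers $\euler{W}{i}$ for all $0\leq i\leq m$ in a finite Coxeter system; for reducible systems this reduces via \cref{eq: Euler multiplicative} to positivity of each irreducible factor.

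The duality (2) is an immediate combination of \cref{thm: extensions form a basis} with \cref{lem: dual spans}:
$$\Coxeter{r}^\perp=(\Span\mcB_{\geq m-r})^\perp=\Span\mcB_{\geq r+1}=\Coxeter{m-r-1}.$$
For the ideal property (4), by $\FF$-linearity it suffices to handle $f=\indicator{\{u\}}$ and a single generator $\indicator{\sigma\standard{J}}$ of $\Coxeter{r}$. A direct unpacking of the convolution yields $\indicator{\{u\}}\ast\indicator{\sigma\standard{J}}=\indicator{u\sigma\standard{J}}$, which is again the indicator of a standard coset of rank $m-r$ and therefore lies in $\Coxeter{r}$.

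For the multiplication property (3), use part (1) to enlarge the generating set: $\Coxeter{r_i}=\Span\{\indicator{\sigma\standard{J}}\mid |J|\geq m-r_i\}$. By bilinearity of the Schur product, it suffices to consider the product of two such generators, which equals $\indicator{\sigma_1\standard{J_1}\cap\sigma_2\standard{J_2}}$. If the intersection is empty the product vanishes; otherwise, translating by a common element and applying \cref{fact: intersection is coset} identifies it with a standard coset of rank $|J_1\cap J_2|\geq |J_1|+|J_2|-m\geq m-(r_1+r_2)$, so its indicator lies in $\Coxeter{r_1+r_2}$ (the case $r_1+r_2\geq m$ is absorbed by the convention $\Coxeter{r^*}=\FF W$). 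The only mildly delicate step in the proposition is the strict inclusion in (1); the remaining three parts are direct consequences of the structure already assembled in Sections 2 and 3.
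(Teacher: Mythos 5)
Your proof is correct and follows essentially the same route as the paper: parts (1) and (2) from \cref{thm: extensions form a basis} and \cref{lem: dual spans}, part (3) from the rank bound $|J_1\cap J_2|\ge m-(r_1+r_2)$ via \cref{fact: intersection is coset} plus nesting, and part (4) by reducing convolution to point indicators translating standard cosets. The only addition is that you spell out the strictness in (1) via positivity of the $W$-Eulerian numbers, a detail the paper leaves implicit.
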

\begin{proof}
        (1) This follows from \cref{thm: extensions form a basis}.
        
        (2) This follows from \cref{lem: dual spans,thm: extensions form a basis}.
        
        (3) Let $R_1\coloneqq \sigma_1\standard{J_1}$ and $R_2\coloneqq \sigma_2\standard{J_2}$ be standard cosets of ranks $(m-r_1)$ and $(m-r_2)$, respectively, so that $\indicator{R_1}$ and $\indicator{R_2}$ are arbitrary generators of $\Coxeter{r_1}$ and $\Coxeter{r_2}$, respectively. Their intersection, if non-empty, is a standard coset $R_1\cap R_2=\sigma\standard{J_1\cap J_2}$ of rank 
    \begin{align*}
        \abs{J_1\cap J_2} &= |J_1|+|J_2|- \abs{J_1\cup J_2}\\
         &\geq 2m-(r_1+r_2)-m\\
        &=m-(r_1+r_2).
    \end{align*}
    By definition, $\indicator{R_1}\odot\indicator{R_2}=\indicator{R_1\cap R_2}$, and since $R_1\cap R_2$ is a standard coset of rank $\ge m-(r_1+r_2)$, we have $\indicator{R_1\cap R_2}\in\Coxeter{q}$ for some $q\leq r_1+r_2$. The result holds by \cref{thm: nested}.
    
    (4) Suppose that $f=\indicator{w}$ is the indicator function for a single $w\in W$, and that $\indicator{\sigma\standard{J}}$ is the indicator function of an arbitrary rank-$(m-r)$ standard coset. We compute the value of $\indicator{w}\ast\indicator{\sigma\standard{J}}$ on an arbitrary $u\in W$:
    \begin{align*}
        (\indicator{w}\ast\indicator{\sigma\standard{J}})(u) &=\sum_{g\in W}\indicator{w}(g)\indicator{\sigma\standard{J}}(g^{-1}u)\\
        &=\indicator{\sigma\standard{J}}(w^{-1}u)\\
        &=\indicator{(w\sigma)\standard{J}}(u),
    \end{align*}
    where the last line follows since $w^{-1}u\in\sigma\standard{J}$ if and only if $u\in (w\sigma)\standard{J}$. As $(w\sigma)\standard{J}$ is also a rank-$(m-r)$ standard coset, we have that $\indicator{w}\ast\indicator{\sigma\standard{J}}\in\Coxeter{r}$. Since any function can be written in terms of single-point indicators, the full result follows by the linearity of convolution.
 
\end{proof}

\subsection{Reverse extensions}\label{sec: reverse extensions}
We conclude this section with a remark on extensions, which we have chosen to define as indicators corresponding to the cosets $w\standard{S\setminus D(w)}$.
Perhaps a more straightforward choice would have been the cosets corresponding directly to descents, $w\standard{D(w)}$. Indeed, the results of this paper hold equally well by using the \emph{reverse extension}, $\mcR_w\coloneqq\indicator{w\standard{D(w)}}$, e.g.,
\begin{equation*}
    \Coxeter{r} = \Span\br{\mcR_w\bigmid w\in W,\, d(w)\geq m-r}.
\end{equation*}
In the case of RM codes, this basis corresponds to signed monomials of degree at most $r$, $\br{\prod_{i\in A}\bar x_i\mid A\subseteq[m],\, \abs{A}\leq r}$, or equivalently, the evaluation vectors of (unsigned) monomials up to string reversal. Thus, while reverse extensions may appear better suited for the context of Coxeter codes, they do not explicitly generalize the standard basis of RM codes.

\section{Code parameters}
\subsection{Dimension and rate}
\cref{lem: dual spans,thm: extensions form a basis} imply the following result:
\begin{theorem}
 \label{thm: dimension of order r}
    The dimension of the order-$r$ Coxeter code of type $(W,S)$ is given by
    \begin{equation}\label{eq:dimension}
        \dim \Coxeter{r} = \sum_{i=0}^r \euler{W}{i}.
    \end{equation}
\end{theorem}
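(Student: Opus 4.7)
The plan is to read the dimension directly off the basis produced by \cref{thm: extensions form a basis} and then reindex the sum using the Dehn--Sommerville symmetry \cref{eq: DS}. No heavy machinery is needed; the work has already been done in constructing $\mcB_{\geq m-r}$ and establishing its independence.

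First, I would invoke \cref{thm: extensions form a basis} to conclude that $\mcB_{\geq m-r}$ spans $\Coxeter{r}$, and \cref{lem: independence} (applied to the subset $\mcB_{\geq m-r}\subseteq\mcB$) to conclude that this spanning set is linearly independent. Consequently,
\begin{equation*}
    \dim\Coxeter{r}=|\mcB_{\geq m-r}|=\sum_{i=m-r}^{m}|\mcB_i|.
\end{equation*}

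Next, I would use the identity $|\mcB_i|=\euler{W}{i}$ established just after the definition of $\mcB_i$ (which itself rests on \cref{eq: DS}, since elements of $\mcB_i$ are in bijection with elements $w\in W$ satisfying $d(w)=m-i$). This yields
\begin{equation*}
    \dim\Coxeter{r}=\sum_{i=m-r}^{m}\euler{W}{i}.
\end{equation*}
Finally, substituting $j=m-i$ and applying the Dehn--Sommerville symmetry $\euler{W}{i}=\euler{W}{m-i}$ from \cref{eq: DS} converts this to $\sum_{j=0}^{r}\euler{W}{j}$, which is the claimed formula.

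There is no real obstacle here: every ingredient has been proven above. The only small care needed is handling the boundary order $r=-1$, in which case the sum on the right is empty and $\Coxeter{-1}=0$ by the remark following \cref{def: Coxeter codes}, matching dimension zero; the displayed formula then holds with the usual convention that an empty sum is zero.
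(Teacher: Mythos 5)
Your proof is correct and follows essentially the same route as the paper, which derives the dimension formula directly from \cref{thm: extensions form a basis} together with the count $\abs{\mcB_i}=\euler{W}{i}$ and the Dehn--Sommerville reindexing already carried out in the proof of \cref{lem: dual spans}. The only cosmetic difference is that you invoke \cref{eq: DS} twice (once for $\abs{\mcB_i}=\euler{W}{i}$ and once to reindex), whereas summing $\abs{\mcB_i}=\euler{W}{m-i}$ directly gives $\sum_{j=0}^{r}\euler{W}{j}$ in one step; this changes nothing of substance.
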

The rate of the Reed--Muller code $RM(r,m)$ equals $2^{-m}\sum_{k=0}^r\binom mi$. By standard asymptotic arguments,
for large $m$ 
it changes from near zero to near one when $r$ crosses $m/2$, and is about $1/2$ if $r=\lfloor m/2\rfloor$, with more
precise information derived from the standard Gaussian distribution. This behavior largely extends
to many Coxeter codes.

In particular, consider the three infinite series of Coxeter groups in the Coxeter-Dynkin classification:
$A_m$ (the symmetric group on $m+1$ elements), $B_m$ (the hyperoctahedral group of order $2^mm!$), and $D_m$ (the generalized dihedral group of order $2^{m-1}m!$). 
The rate $\kappa(\Coxeter{r})$ has no closed-form expression for any of these cases (for that matter, there is no such expression even for RM codes), but asymptotic normality of Eulerian numbers of types $A,B,D$ has been addressed in many places in the literature \cite{bender1973central}, \cite{chen2009}, with \cite{HCD19} being the most comprehensive source. As implied by these
references, for each of the infinite series of groups, the random variable $X_m$ with $P(X_m=k)=\euler Wk/|W|$ is asymptotically normal with mean $\frac m2$ and variance $\frac m{12}$. Following the proof of the De Moivre--Laplace theorem for the binomial distribution, we obtain the following statement about the asymptotics of the code rate.
\begin{theorem}[{\sc Code rate}]\label{thm: Gaussian} Suppose that $(W,S)_m$ is one of the irreducible Coxeter families $A_m,B_m$, or $D_m$. Let $\Phi(x)=\int_{-\infty}^x e^{-t^2/2}dt/\sqrt{2\pi}$ and let $m\to\infty$.\\
(i) Let $r_m=\frac m2+\rho_m\sqrt{\frac m{12}}$. If $\rho_m\to\rho\in\RR,
$ then the code rate $\kappa(C_W(r_m))\to\Phi(\rho)$. \\
(ii) For a fixed $\kappa\in (0,1)$, define the sequence of order values
  $$
  r_m^\ast:=\Big\lfloor\frac m2+\sqrt{\frac m{12}}\Phi^{-1}(\kappa)\Big\rfloor, m=1,2,\dots.
  $$ 
Assuming that $r_m^\ast\ge 0$,  $\kappa(\C_W(r_m^\ast))\to \kappa$.
\\(iii) Consider a sequence of order values $r_m, m=1,2,\dots$. If $|\frac m2-r_m|\gg \sqrt m$ and for all $m$, 
 (a)
        $r< m/2$, then $\kappa(\C_W(r_m))\to 0$; (b) $r> m/2$, then $\kappa(\C_W(r_m))\to 1$.
\end{theorem}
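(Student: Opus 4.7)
The plan is to recognize that the code rate is exactly the CDF of $X_m$ evaluated at $r_m$, and then deduce all three parts from the stated asymptotic normality by a routine CDF-convergence argument.

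Concretely, by \cref{thm: dimension of order r} together with \cref{eq: sum},
\[
\kappa(\Coxeter{r_m})=\frac{\dim\Coxeter{r_m}}{|W|}=\frac{1}{|W|}\sum_{i=0}^{r_m}\euler{W}{i}=P(X_m\le r_m).
\]
Standardize by setting $Z_m:=(X_m-m/2)/\sqrt{m/12}$ and $\rho_m:=(r_m-m/2)/\sqrt{m/12}$, so that $\kappa(\Coxeter{r_m})=F_m(\rho_m)$ where $F_m$ is the CDF of $Z_m$. The asymptotic normality of type-$A$, $B$, and $D$ Eulerian distributions cited from \cite{HCD19} says that $Z_m\Rightarrow\mathcal{N}(0,1)$, i.e., $F_m(x)\to\Phi(x)$ for every $x\in\RR$; since $\Phi$ is continuous, Pólya's theorem upgrades this to uniform convergence $\sup_x|F_m(x)-\Phi(x)|\to 0$.

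For part (i), given $\rho_m\to\rho$, one can either invoke uniform convergence directly, or use the more elementary sandwich: for every $\delta>0$, eventually $\rho-\delta\le\rho_m\le\rho+\delta$, and the monotonicity of $F_m$ combined with pointwise convergence at $\rho\pm\delta$ yields
\[
\Phi(\rho-\delta)\le\liminf_{m\to\infty}F_m(\rho_m)\le\limsup_{m\to\infty}F_m(\rho_m)\le\Phi(\rho+\delta),
\]
after which sending $\delta\to 0$ and using continuity of $\Phi$ finishes the claim. For part (ii), apply (i) to the sequence $\rho_m^\ast=(r_m^\ast-m/2)/\sqrt{m/12}$; the floor function perturbs this by at most $1/\sqrt{m/12}\to 0$, so $\rho_m^\ast\to\Phi^{-1}(\kappa)$ and thus $\kappa(\Coxeter{r_m^\ast})\to\Phi(\Phi^{-1}(\kappa))=\kappa$.

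For part (iii), the hypothesis $|m/2-r_m|\gg\sqrt m$ forces $|\rho_m|\to\infty$ with the sign of $r_m-m/2$. In case (a), $\rho_m\to-\infty$: for any $A>0$, $\rho_m<-A$ eventually, so $F_m(\rho_m)\le F_m(-A)\to\Phi(-A)$, hence $\limsup\kappa(\Coxeter{r_m})\le\Phi(-A)$, and $A\to\infty$ gives the rate $\to 0$. Case (b) is symmetric using $1-F_m$. The main (minor) obstacle is simply making sure the convergence is strong enough to handle sequences $\rho_m$ that drift to $\pm\infty$ in (iii); this is what uniform convergence of $F_m$ to $\Phi$ — or equivalently the monotonicity-plus-continuity sandwich above — takes care of, so no new ingredient beyond the cited asymptotic normality is required.
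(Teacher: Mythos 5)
Your proposal is correct and follows essentially the same route the paper takes: the paper reduces the rate to $P(X_m\le r_m)$ via \cref{thm: dimension of order r} and invokes the asymptotic normality of the $W$-Eulerian distribution from the cited references, remarking only that the rest follows ``as in the De Moivre--Laplace theorem.'' You have simply filled in the standard details (Pólya's theorem / monotone sandwich, the vanishing floor perturbation in (ii), and the tail argument in (iii)), all of which are sound.
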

The rate of any infinite family of Coxeter codes, including the ones constructed from reducible systems (\cref{sec: computing Eulerian numbers}), exhibits a behavior similar to \cref{thm: Gaussian}. This follows from the product structure of the $W$-polynomials of Coxeter groups, \cref{eq: Euler multiplicative}, although the corresponding fact involves convergence to a multivariate Gaussian distribution, as is apparent, for instance, from \cref{eq: dimension dihedral} below.

\subsection{Distance} 
Given that $\Coxeter{r}$ is generated by standard cosets of rank $m-r$, there is a trivial upper bound on the code distance given by the \emph{smallest} such coset. We conjecture that this bound is, in fact, tight:
\begin{conjecture}\label{conj: distance} Let $(W,S)$ be a Coxeter system of rank $m$.
    The distance of the code $\Coxeter{r}$ is given by 
    \begin{equation*}
        \dist(\Coxeter{r})=\min_{J\subseteq S, \abs{J}=m-r} \abs{\standard{J}}.
    \end{equation*}
\end{conjecture}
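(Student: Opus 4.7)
The statement splits into an upper bound and a matching lower bound. The upper bound is immediate from \cref{def: Coxeter codes}: for any $J\subseteq S$ with $\abs{J}=m-r$, the vector $\indicator{\standard{J}}$ is one of the generators of $\Coxeter{r}$ and has Hamming weight exactly $\abs{\standard{J}}$, so $\dist(\Coxeter{r})\leq\min_{\abs{J}=m-r}\abs{\standard{J}}$. All of the work lies in the matching lower bound, which I would attempt by induction on $r$ using the nested structure of \cref{thm: nested}, the basis of extensions in \cref{thm: extensions form a basis}, the duality $\Coxeter{r}^\perp=\Coxeter{m-r-1}$, and the left-invariance $f\ast\Coxeter{r}\subseteq\Coxeter{r}$.

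The base cases $r\in\br{-1,0,m-1,m}$ are immediate: $\Coxeter{-1}=\br{0}$; $\Coxeter{0}$ is the repetition code of length $\abs{W}=\abs{\standard{S}}$; $\Coxeter{m-1}$ is a single-parity-check code of distance $2=\abs{\standard{\br{s}}}$; and $\Coxeter{m}=\FF W$ has distance $1=\abs{\standard{\emptyset}}$. For the inductive step, take $c\in\Coxeter{r}\setminus\br{0}$ and split into two subcases. If $c\in\Coxeter{r-1}$, the inductive hypothesis gives $\wt(c)\geq\min_{\abs{K}=m-r+1}\abs{\standard{K}}$, and this dominates $\min_{\abs{J}=m-r}\abs{\standard{J}}$ because deleting any generator from a minimizing $K$ produces $J\subset K$ with $\standard{J}\leq\standard{K}$. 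If $c\in\Coxeter{r}\setminus\Coxeter{r-1}$, expanding $c=\sum_{w:d(w)\leq r}a_w\ext{w}$ in the extension basis gives some $a_{w^*}=1$ with $d(w^*)=r$, so $\ext{w^*}$ is an indicator of a standard coset of rank exactly $m-r$.

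To attack this hard case I would combine two tools. First, the dual constraint from \cref{thm: Coxeter duality}: for every standard coset $R'$ with $\rank(R')=r+1$, the inner product $\indicator{R'}\cdot c=0$ forces $\abs{\supp(c)\cap R'}$ to be even, so for each $u\in\supp(c)$ and each $J'\subseteq S$ with $\abs{J'}=r+1$, the coset $u\standard{J'}$ contains at least one further support element. Second, left-invariance from \cref{thm: group code}: convolving by $\indicator{(w^*)^{-1}}$ places the identity in $\supp(c)$. I would then iterate the parity argument outward from $1$, walking along chains of generators in rank-$(r+1)$ cosets, aiming to force $\supp(c)$ to contain an entire standard coset of rank $m-r$.

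The main obstacle, and the reason this bound is only conjectured, is that neither the extension basis nor the descent set $D(w)$ is compatible with left-translation: in general $D(gw)\ne D(w)$, so relocating $w^*$ to $1$ scrambles the basis-level decomposition of $c$, and the rank-$(r+1)$ parity constraints do not evidently propagate along a single rank-$(m-r)$ coset. Unlike the Reed--Muller case, minimum-weight codewords in $\Coxeter{r}$ need not be indicators of standard subgroups at all, so a successful argument will likely require a finer combinatorial tool---the parabolic decomposition $W=W^J\cdot\standard{J}$ coupled with the Bruhat order, or the representation theory of $W$---to extract a rank-$(m-r)$ standard coset inside the support of an arbitrary non-zero codeword. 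A more modest intermediate goal is to reduce the reducible case to the irreducible one via the product structure of the Eulerian polynomials in \cref{eq: Euler multiplicative}.
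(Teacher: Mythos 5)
The statement you are asked to prove is labeled a \emph{conjecture} in the paper, and the paper does not prove it: it establishes only the trivial upper bound, a general lower bound $\dist(\Coxeter{r})\ge 2^{m-r}$ (\cref{thm: exponential lower bound}), the case $r\ge\lfloor\frac m2\rfloor$ (\cref{cor: distance for large r}), and computer verification for lengths up to $120$. Your proposal likewise does not prove the statement, and you are commendably explicit about where it stops. What you do establish is sound: the upper bound via the generators $\indicator{\sigma\standard{J}}$, the base cases $r\in\br{-1,0,m-1,m}$, and the reduction of the subcase $c\in\Coxeter{r-1}$ to the inductive hypothesis (using that a minimizing $(m-r+1)$-subset $K$ contains an $(m-r)$-subset $J$ with $\abs{\standard{J}}\le\abs{\standard{K}}$). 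The entire difficulty sits in the subcase $c\in\Coxeter{r}\setminus\Coxeter{r-1}$, which you sketch but do not close; the parity/left-translation obstruction you identify (that $D(gw)\ne D(w)$, so the extension basis is not left-equivariant) is a genuine one.

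It is worth contrasting your proposed induction on the order $r$ with the route the paper takes for its partial results, which is an induction on the \emph{rank} $m$ at fixed $r$: given a nonzero codeword, \cref{lem: even weight} forces two support points, \cref{lem: disjoint maximal} separates them into disjoint rank-$(m-1)$ standard cosets, and \cref{lem: induction codeword} shows the restrictions are nonzero codewords of the rank-$(m-1)$ code, doubling the weight bound at each step and yielding $2^{m-r}$. The paper's concluding remarks point out that upgrading ``two disjoint cosets'' to $\ell$ disjoint cosets, with $\ell$ matching the ratio of minimal subgroup sizes between consecutive ranks, would prove the conjecture by the same induction. That refinement of the rank-induction seems a more promising target than the duality-propagation argument you outline, since the latter must somehow manufacture an entire rank-$(m-r)$ coset inside the support, which---as you note---minimum-weight codewords need not contain. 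Your observation that \cref{cor: distance for large r} already settles $r\ge\lfloor\frac m2\rfloor$ via commuting generators, and your suggestion to reduce reducible systems to irreducible ones, are both consistent with what the paper proves.
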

This conjecture is true for RM codes and the family of Coxeter codes given by the dihedral groups, $I_2(n)$. 
 We have further verified it by computer for all nontrivial Coxeter codes of length at most $120$ (some of them are listed in \cref{tab: Am params,tab: I2(3) params,tab: I2(4) params}, where the distance values shown in italic rely on the validity of \cref{conj: distance}). We can also prove that the conjecture is true whenever $r\geq\lfloor \frac{m}{2}\rfloor$, see \cref{cor: distance for large r} below.

To continue the discussion of the distance, we prove the following lower bound for any $r$:
\vspace*{-.1in}
\begin{theorem}\label{thm: exponential lower bound}
 Let $(W,S)$ be a Coxeter system of rank $m$. The distance of any order-$r$ Coxeter code satisfies $\dist(\Coxeter{r})\geq 2^{m-r}$.
\end{theorem}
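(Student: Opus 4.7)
The plan is to induct on the rank $m$. The base ingredient is that any finite Coxeter group of rank $m$ has $|W|\geq 2^m$: given $s\in S$, the standard subgroup $W':=\standard{S\setminus\{s\}}$ satisfies $[W:W']\geq 2$ since $s\notin W'$ (by the general fact $\standard{J}\cap S=J$), so $|W|\geq 2|W'|\geq 2^m$ by induction. Consequently $|\standard{J}|\geq 2^{|J|}$ for every $J\subseteq S$, which handles the boundary cases immediately: $\Coxeter{0}$ is the repetition code with distance $|W|\geq 2^m$, and $\Coxeter{m}=\FF W$ has distance $1=2^0$.

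For the inductive step, fix $1\leq r\leq m-1$, a nonzero $c\in\Coxeter{r}$, and a generator $s\in S$. Partition $W=\bigsqcup_i\sigma_iW'$ into left cosets of $W':=\standard{S\setminus\{s\}}$. Any generator $\indicator{\tau\standard{J}}$ of $\Coxeter{r}$ (with $|J|=m-r$) restricts to each $\sigma_iW'$ as the indicator of a standard coset $\rho\standard{J\cap(S\setminus\{s\})}$ by \cref{fact: intersection is coset}, of rank at least $(m-1)-r$ in $W'$. Hence, viewing $c|_{\sigma_iW'}$ as a function on $W'$ via the bijection $\sigma_iw'\mapsto w'$, it lies in $\coxeter{W'}{S\setminus\{s\}}{r}$, and by the induction hypothesis any nonzero such restriction has weight $\geq 2^{m-r-1}$. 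If two or more restrictions are nonzero, the disjointness of cosets immediately yields $|\supp c|\geq 2\cdot 2^{m-r-1}=2^{m-r}$, as desired.

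The main obstacle lies in the remaining case where only one restriction is nonzero. Using left-invariance of $\Coxeter{r}$ (\cref{thm: group code}), translate $c$ so that $\supp c\subseteq W'$. I will show that $c|_{W'}$ then lies in the \emph{smaller} code $\coxeter{W'}{S\setminus\{s\}}{r-1}$, at which point the induction hypothesis gives $|\supp c|\geq 2^{(m-1)-(r-1)}=2^{m-r}$. By duality in $W'$ (\cref{thm: Coxeter duality}), this amounts to proving $c|_{W'}\cdot\indicator{\rho\standard{K}}=0$ for every $\rho\in W'$ and $K\subseteq S\setminus\{s\}$ with $|K|=r$. Consider the enlarged indicator $\indicator{\rho\standard{K\cup\{s\}}}$: since $|K\cup\{s\}|=r+1$, it lies in $\Coxeter{m-r-1}=\Coxeter{r}^\perp$, so $c\cdot\indicator{\rho\standard{K\cup\{s\}}}=0$. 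The identity $\rho\standard{K\cup\{s\}}\cap W'=\rho\standard{K}$ (another instance of \cref{fact: intersection is coset}, since $(K\cup\{s\})\cap(S\setminus\{s\})=K$) combined with $\supp c\subseteq W'$ turns this orthogonality into $|\supp c\cap\rho\standard{K}|\equiv 0\pmod{2}$, exactly what is needed to place $c|_{W'}$ in $\coxeter{W'}{S\setminus\{s\}}{r-1}$ and close the induction.
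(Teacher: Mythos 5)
Your proof is correct, and while it shares the paper's overall skeleton (induction on the rank, restricting a codeword to cosets of a corank-one standard subgroup and invoking the analogue of \cref{lem: induction codeword} to see that restrictions live in the smaller code), the mechanism by which you extract the factor of $2$ is genuinely different. The paper first shows every codeword has even weight (\cref{lem: even weight}), picks two distinct support elements, and then uses \cref{lem: disjoint maximal} to find a single maximal standard subgroup $\standard{K}$ whose cosets separate them, guaranteeing two disjoint nonzero restrictions. You instead fix one generator $s$, accept the dichotomy, and handle the degenerate case---support contained in a single coset of $\standard{S\setminus\{s\}}$---by translating (via \cref{thm: group code}) and showing the codeword drops to order $r-1$ on the subgroup; the orthogonality computation with $\indicator{\rho\standard{K\cup\{s\}}}$ and $\standard{K\cup\{s\}}\cap\standard{S\setminus\{s\}}=\standard{K}$ is sound, and the identification of $c|_{\sigma_i W'}$ with an element of $\FF W'$ is harmless because membership in $\coxeter{W'}{}{r}$ is translation-invariant. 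The costs and benefits: your argument needs neither \cref{lem: even weight} nor \cref{lem: disjoint maximal}, but it requires a double induction (the statement at rank $m-1$ for both orders $r$ and $r-1$), whereas the paper's is a single induction on $m$ for fixed $r$. Your ``order drop'' step is the Coxeter analogue of the classical Plotkin-style recursion for Reed--Muller codes (a codeword supported on a hyperplane has one order lower there), and the refined dichotomy it provides---spread over at least two cosets of \emph{every} maximal parabolic, or concentrate and lose an order---is arguably closer in spirit to the route toward \cref{conj: distance} sketched in the paper's concluding remarks.
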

This bound is tight for RM codes but not for the codes arising from the symmetric group: the bound in \cref{conj: distance} is strictly larger whenever $r> \lceil \frac{m}{2}\rceil$ in the case of $A_m$.

\begin{lemma}\label{lem: even weight}
    If $r<m$, then for every $c\in\coxeter{W}{}{r}$ the Hamming weight of $c$ is even. 
\end{lemma}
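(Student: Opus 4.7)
The plan is to reduce the statement to a property of the generating set of $\Coxeter{r}$ given in \cref{def: Coxeter codes}, together with the observation that the parity of Hamming weight is preserved under $\FF$-linear combinations. Since $\Coxeter{r}=\Span\{\indicator{\sigma\standard{J}}\mid \sigma\in W,\ J\subseteq S,\ |J|=m-r\}$, it suffices to verify that each generator has even weight and that the subspace of even-weight vectors in $\FF W$ is closed under addition.

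The first step is the observation that the weight of an indicator function $\indicator{\sigma\standard{J}}$ equals the cardinality of the coset $\sigma\standard{J}$, which equals $|\standard{J}|$. When $r<m$, every generator uses a subset $J\subseteq S$ of size $m-r\geq 1$, so $J$ is nonempty and $\standard{J}$ is a nontrivial standard subgroup. By the standard fact recorded in \cref{fact: even order} that every nontrivial finite Coxeter group has even order, $|\standard{J}|$ is even, and hence each generator of $\Coxeter{r}$ has even Hamming weight.

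The second step is closure: if $f_1,f_2\in\FF W$ both have even weight, then so does $f_1+f_2$. This is an immediate inclusion-exclusion observation---$|\supp(f_1)\triangle\supp(f_2)|=|\supp(f_1)|+|\supp(f_2)|-2|\supp(f_1)\cap\supp(f_2)|$---and it extends to arbitrary $\FF$-linear combinations by induction. Combining the two steps gives the claim.

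I do not foresee a real obstacle here; the statement is essentially an unpacking of \cref{def: Coxeter codes} plus \cref{fact: even order}. The only point worth flagging is that the hypothesis $r<m$ is used precisely to ensure $|J|=m-r\geq 1$, without which the generators reduce to $\indicator{\sigma\langle\emptyset\rangle}=\indicator{\{\sigma\}}$, a vector of (odd) weight $1$, consistent with $\Coxeter{m}=\FF W$ containing odd-weight vectors.
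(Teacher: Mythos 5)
Your proof is correct, but it takes a genuinely different and more elementary route than the paper's. The paper deduces the lemma from the structural theorems already established: by duality (\cref{thm: Coxeter duality}), $\coxeter{W}{}{r}=\coxeter{W}{}{m-r-1}^\perp$, and since $\coxeter{W}{}{0}\subseteq\coxeter{W}{}{m-r-1}$ by nesting (\cref{thm: nested}), this dual is contained in $\coxeter{W}{}{0}^\perp=\br{0^{\abs{W}},1^{\abs{W}}}^\perp$; hence every codeword is orthogonal to the all-ones vector and has even weight. Your argument bypasses both theorems and works directly from \cref{def: Coxeter codes}: each generator $\indicator{\sigma\standard{J}}$ has weight $\abs{\standard{J}}$, which is even by \cref{fact: even order} because $\abs{J}=m-r\geq 1$ makes $\standard{J}$ a nontrivial finite Coxeter group, and the even-weight vectors form an $\FF$-linear subspace of $\FF W$ (your symmetric-difference identity), so the span inherits even weight. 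This is logically lighter---it does not depend on the basis construction underlying \cref{thm: Coxeter duality,thm: nested}---at the cost of being slightly longer than the paper's one-line derivation once those theorems are in hand. The two proofs are ultimately two views of the same containment $\coxeter{W}{}{r}\subseteq\br{0^{\abs{W}},1^{\abs{W}}}^\perp$: you verify it generator by generator, the paper cites the duality machinery. Your closing remark on why the hypothesis $r<m$ is needed is exactly right.
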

\begin{proof} We know $m-r-1\geq 0$ since $r\leq m-1$, so
    \begin{align*}
        \coxeter{W}{}{r}&\stackrel{\text{(duality)}} =\coxeter{W}{}{m-r-1}^\perp  \\
        &\stackrel{\text{(nesting)}} \subseteq  \coxeter{W}{}{0}^\perp \\
        &\quad= \br{0^{\abs{W}},1^{\abs{W}}}^\perp,
    \end{align*}
    i.e., every $c\in\coxeter{W}{}{r}$ is orthogonal to the all 1's vector and thus has even weight.
\end{proof}

\begin{lemma}\label{lem: disjoint maximal}
    If $w_1,w_2\in W$ are not equal, then there is a $K\subseteq S$, $\abs{K}=m-1$, for which $w_1\standard{K}\neq w_2\standard{K}$.
\end{lemma}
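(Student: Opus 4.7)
The plan is to translate the conclusion into a statement about the element $w_1^{-1}w_2$ and then apply the intersection lemma (\cref{fact: intersection is coset}) to derive a contradiction. Specifically, for any $K\subseteq S$, the equality of cosets $w_1\standard{K}=w_2\standard{K}$ is equivalent to $w_1^{-1}w_2\in\standard{K}$. Hence it suffices to find some $s\in S$ such that $w_1^{-1}w_2\notin \standard{S\setminus\{s\}}$.

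I would proceed by contradiction: suppose that for every $s\in S$ the element $g:=w_1^{-1}w_2$ lies in the standard subgroup $\standard{S\setminus\{s\}}$. Then $g$ belongs to the intersection
\[
g\in\bigcap_{s\in S}\standard{S\setminus\{s\}}.
\]
Iterating \cref{fact: intersection is coset}, this intersection equals
\[
\Big\langle \bigcap_{s\in S}(S\setminus\{s\})\Big\rangle = \standard{\emptyset}=\{1\}.
\]
Thus $g=1$, i.e., $w_1=w_2$, contradicting the hypothesis. Therefore some $s\in S$ must fail the containment, and the corresponding $K:=S\setminus\{s\}$ (which has $|K|=m-1$) satisfies $w_1\standard{K}\neq w_2\standard{K}$.

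There is no real obstacle here; the iteration of \cref{fact: intersection is coset} from pairs to an arbitrary finite family is immediate by induction on $|S|$, and the identification $\standard{\emptyset}=\{1\}$ is by definition of the generated subgroup. The only mild point to state carefully is the coset-versus-element translation in the first step, but this is just the standard characterization of coset equality.
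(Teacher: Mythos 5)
Your proof is correct and follows essentially the same route as the paper's: both reduce coset equality to membership of $w_1^{-1}w_2$ (equivalently $w_2^{-1}w_1$) in a standard subgroup and use the fact, via iterated application of \cref{fact: intersection is coset}, that the intersection of all $(m-1)$-element standard subgroups is trivial. No gaps.
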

\begin{proof} Let $J_1,\dots,J_{m}$ be the distinct $(m-1)$-subsets of $S$. Note that $\cap_{i=1}^m\standard{J_i}=1$.

Since $w_1\ne w_2$, there is an $i\in[m]$ such that $w_2^{-1}w_1\not\in\standard{J_i}$. Put $K=J_i$ and observe 
that $w_1\standard{K}=w_2\standard{K}$ would yield a contradiction.
\end{proof}

\begin{lemma}\label{lem: induction codeword}
    Consider a standard coset $w\standard{K}$. If $c\in\Coxeter{r}$ then the punctured code $c\vert_{w\standard{K}}\in\coxeter{\standard{K}}{}{r}$.
\end{lemma}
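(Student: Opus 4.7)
The plan is to reduce to the indicator-function generators of $\Coxeter{r}$ and then apply \cref{fact: intersection is coset}. By linearity of the restriction map, it suffices to handle the case $c=\indicator{\sigma\standard{J}}$ for an arbitrary $\sigma\in W$ and an arbitrary $J\subseteq S$ with $|J|=m-r$. For such $c$, the restriction $c\vert_{w\standard{K}}$ is the indicator function of the set $\sigma\standard{J}\cap w\standard{K}$, viewed as a subset of $w\standard{K}$.

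If $\sigma\standard{J}\cap w\standard{K}=\emptyset$, the restriction is the zero vector and lies trivially in $\coxeter{\standard{K}}{}{r}$. Otherwise, standard coset intersection yields $\sigma\standard{J}\cap w\standard{K}=\tau(\standard{J}\cap\standard{K})$ for some $\tau$, which by \cref{fact: intersection is coset} equals $\tau\standard{J\cap K}$. Since this intersection lies inside $w\standard{K}$, we may choose $\tau\in w\standard{K}$, so that $w^{-1}\tau\in\standard{K}$. Identifying $w\standard{K}$ with the Coxeter group $\standard{K}$ via the bijection $g\mapsto w^{-1}g$, the function $c\vert_{w\standard{K}}$ is carried to the indicator of the standard coset $(w^{-1}\tau)\standard{J\cap K}$ in the Coxeter system $(\standard{K},K)$ of rank $|K|$.

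Next I would verify the rank bound. Since $J\cup K\subseteq S$ with $|S|=m$, inclusion-exclusion gives
\[
|J\cap K| = |J|+|K|-|J\cup K| \;\geq\; (m-r)+|K|-m \;=\; |K|-r.
\]
Hence $(w^{-1}\tau)\standard{J\cap K}$ is a standard coset of rank at least $|K|-r$ in $(\standard{K},K)$. Choosing any $J'\subseteq J\cap K$ with $|J'|=|K|-r$ and partitioning $\standard{J\cap K}$ into cosets of $\standard{J'}$ (exactly as in the $(\supseteq)$ direction of the proof of \cref{thm: extensions form a basis}), the indicator of $(w^{-1}\tau)\standard{J\cap K}$ decomposes as a sum of indicators of rank-$(|K|-r)$ standard cosets of $\standard{K}$, each of which is a generator of $\coxeter{\standard{K}}{}{r}$ by definition.

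The only delicate step is the bookkeeping around the identification $w\standard{K}\cong\standard{K}$, in particular checking that $w^{-1}\tau\in\standard{K}$ so that the translated coset really is a standard coset of the sub-Coxeter system $(\standard{K},K)$; this follows from $\tau\in w\standard{K}$. Once this is set up, the argument is a direct application of \cref{fact: intersection is coset} together with the basis structure developed in \cref{thm: extensions form a basis}, with no further obstacles.
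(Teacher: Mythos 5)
Your proposal is correct and follows essentially the same route as the paper's proof: reduce to indicator generators, use \cref{fact: intersection is coset} to identify the restriction with the indicator of a standard coset $\tau\standard{J\cap K}$, bound $|J\cap K|\geq |K|-r$ by inclusion--exclusion, and decompose into rank-$(|K|-r)$ cosets via the partition argument from \cref{thm: extensions form a basis}. The only difference is that you make explicit the identification of $w\standard{K}$ with the sub-Coxeter system $(\standard{K},K)$ and the check that $w^{-1}\tau\in\standard{K}$, which the paper leaves implicit.
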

\begin{proof}
    By definition there exist $\br{\sigma_i\standard{J_i}}_{i\in I}$, $\abs{J_i}=m-r$, for which $c=\sum_{i\in I}  \indicator{\sigma_i\standard{J_i}}$. The function restricted to $w\standard{K}$ equals the product $c\indicator{w\standard{K}}$, and
    \begin{align*}
        c\indicator{w\standard{K}} &= \sum_{i\in I} \indicator{\sigma_i\standard{J_i}}\indicator{w\standard{K}},\\
        &= \sum_{i\in I'} \indicator{\sigma_i'\standard{J_i\cap K}},
    \end{align*}
    where $I'\subseteq I$ indexes the standard cosets that have nontrivial intersection with $w\standard{K}$.
    We lower bound
    \begin{align*}
        \abs{J_i\cap K} & = \abs{J_i} + \abs{K} - \abs{J_i\cup K},\\
        &\geq m-r + \abs{K}-m,\\
        &= \abs{K}-r. 
    \end{align*}
Now note that $\coxeter{\standard{K}}{}{r}$ is spanned by standard cosets of rank $\abs{K}-r$. 
By an argument similar to the proof of the first part of \cref{thm: extensions form a basis}, $c\vert_{w\standard{K}}= c\indicator{w\standard{K}}$ is a codeword in $\coxeter{\standard{K}}{}{r}$.
\end{proof}

\begin{proof}[Proof of \cref{thm: exponential lower bound}]
    The result holds for all $m\ge 1$ when $r=0$: $\Coxeter{0}$ is a repetition code with $\dist(\Coxeter{0})=\abs{W}\geq 2^m$. Fix $r\ge 1$. We proceed by induction on $m$. The result is true when $m=r$, as $\Coxeter{m}=\FF W$ has distance $2^0=1$. Supposing that the result holds whenever $(W,S)$ has rank $k\geq r$, consider a system $(W',S')$ with rank $k+1$ and the code $\coxeter{W'}{}{r}$.

  By \cref{lem: even weight}, if $c\in\coxeter{W'}{}{r}$ is a nonzero vector, then $|c|\ge 2$.
  Let $w_1,w_2\in\supp(c)$. By \cref{lem: disjoint maximal} there is a subset $K\subseteq S'$, $\abs{K}=k$
such that $w_1\standard{K}\neq w_2\standard{K}$ (and thus $w_1\standard{K}\cap w_2\standard{K}=\emptyset$). Let $c_1$ and $c_2$ denote the restrictions of $c$ to $w_1\standard{K}$ and $w_2\standard{K}$, respectively. Note the following:
    \begin{enumerate}[leftmargin=*]
        \item By \cref{lem: induction codeword} we are guaranteed that $c_1,c_2\in\coxeter{\standard{K}}{}{r}$.
        \item Since $c(w_1)=c(w_2)=1$, these restrictions are nonzero codewords of $\coxeter{\standard{K}}{}{r}$.
        \item Since $w_1\standard{K}\neq w_2\standard{K}$, their intersection is empty, and we obtain $\abs{c} \geq \abs{c_1}+\abs{c_2}$.
    \end{enumerate}
    Since the rank of $(\standard{K},K)$ is $k$, we can use the induction hypothesis for $c_1$ and $c_2$, which are nonzero codewords of $\coxeter{\standard{K}}{}{r}$, to obtain
    \begin{align*}
        \abs{c} \geq \abs{c_1}+\abs{c_2} \geq 2^{k-r} + 2^{k-r} = 2^{k+1-r},
    \end{align*}
    completing the proof.
\end{proof}

\begin{corollary}\label{cor: distance for large r}
    If $r\geq\lfloor \frac{m}{2}\rfloor$ then $\dist(\Coxeter{r})=2^{m-r}$.
\end{corollary}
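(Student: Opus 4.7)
The lower bound $\dist(\Coxeter{r})\geq 2^{m-r}$ is already supplied by \cref{thm: exponential lower bound}, so the plan is to produce a codeword of weight exactly $2^{m-r}$ whenever $r\geq\lfloor m/2\rfloor$. Every indicator $\indicator{\standard{J}}$ with $\abs{J}=m-r$ is a codeword of $\Coxeter{r}$ by \cref{def: Coxeter codes}, and $|\standard J|=2^{|J|}$ exactly when the generators in $J$ pairwise commute, i.e., when $M(i,j)=2$ for all distinct $s_i,s_j\in J$. So the whole task reduces to finding a pairwise-commuting subset $J\subseteq S$ with $\abs{J}=m-r$.

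In the language of Coxeter--Dynkin diagrams, whose vertex set is $S$ and whose edges correspond to pairs $(s_i,s_j)$ with $M(i,j)\geq 3$, a pairwise-commuting subset of $S$ is exactly an independent set of vertices. The key structural input is the classification of finite Coxeter systems (\cite[App.~A.1]{BB05}): every connected Coxeter--Dynkin diagram of a finite irreducible Coxeter system is a tree ($A_m,B_m,D_m,E_{6,7,8},F_4,H_{3,4},I_2(n)$). For a reducible system, the diagram is a disjoint union of such trees. Since a tree on $k$ vertices is bipartite, the larger side of any bipartition is independent and has size at least $\lceil k/2\rceil$; summing over connected components gives an independent set in $S$ of size at least
\[
\sum_{\text{components } i}\Bigl\lceil\tfrac{m_i}{2}\Bigr\rceil \;\geq\; \Bigl\lceil\tfrac{m}{2}\Bigr\rceil.
\]

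The hypothesis $r\geq\lfloor m/2\rfloor$ gives $m-r\leq\lceil m/2\rceil$, so one may select a subset $J$ of this maximum independent set of cardinality exactly $m-r$. For this $J$, the standard subgroup $\standard J$ is isomorphic to $\ZZ_2^{m-r}$, hence $\abs{\standard J}=2^{m-r}$. The codeword $\indicator{\standard J}\in\Coxeter{r}$ thus has Hamming weight $2^{m-r}$, matching the lower bound from \cref{thm: exponential lower bound} and yielding $\dist(\Coxeter{r})=2^{m-r}$.

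The only substantive step is the appeal to the classification to ensure the Coxeter--Dynkin diagram is a forest; this is precisely the place where the paper notes (just before \cref{cor: distance for large r}) that the proof relies on the diagrammatic viewpoint. Every other step is elementary graph-theoretic bookkeeping.
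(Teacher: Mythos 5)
Your proposal is correct and follows essentially the same route as the paper's proof: both obtain the lower bound from \cref{thm: exponential lower bound} and the matching upper bound by extracting an independent set of size at least $\lceil m/2\rceil$ from the Coxeter--Dynkin diagram (a forest by the classification), whose commuting generators yield a standard subgroup of order $2^{m-r}$ and hence a codeword of that weight.
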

\begin{proof} Let $\Coxeter r$ be a code of order $r$ constructed from a Coxeter system $(W,S)$.
If there is a standard subgroup $\standard J$
of rank $m-r$, all of whose generators are pairwise commuting, this yields a codeword of weight $2^{m-r}$, matching the lower bound from \cref{thm: exponential lower bound}. By assumption, $m-r\le
\lceil\frac m2\rceil$, so our claim will follow if we show that any Coxeter system contains at least $\lceil \frac m2\rceil$
commuting generators.

First, suppose that $(W,S)$ is irreducible. As mentioned above, irreducible systems are
completely classified in terms of their Coxeter-Dynkin diagrams \cite{BB05}. Any such diagram is connected and, by inspection, has no cycles. In other words, it is a bipartite graph, which therefore contains a part of size $\ge \lceil\frac m2\rceil$. This subset of vertices forms an independent set, giving the desired collection of commuting generators.

Now suppose that $(W,S)=\prod_i (W_i,S_i)$, where each factor is irreducible, and let $m_i:=|S_i|$ for all $i$, so that $|S|=\sum_i m_i$. Generators from different sets $S_i$ commute, and each $S_i$ contains $\ge \lceil \frac {m_i}2\rceil$ commuting generators by the above. Since 
    $$
    \sum_i \Big\lceil \frac {m_i}2\Big\rceil\ge \biggl\lceil\frac{\sum_i m_i}2\biggr \rceil,
    $$
this again proves our claim.
\end{proof}

Supposing that \cref{conj: distance} is true, we will compute the distances for two particular families--- those of type $A_m$ and $I_2(n)^\mu$.

\subsubsection{Codes of type $A_m$.} 
For $m\ge 1$, $A_m$ is a rank-$m$ Coxeter system with defining matrix
\begin{equation*}
    M(i,j) = \begin{cases}
        1, &i=j,\\
        3, &\abs{j-i}=1,\\
        2, &\text{otherwise}.
    \end{cases}
\end{equation*}

For $r\in\{1,\dots,m\}$ let 
   $$
  T(m,r):= \Big(\Big\lceil\frac{m}{r}\Big\rceil!\Big)^{m\,\text{mod }r}
     \Big(\Big\lfloor\frac{m}{r}\Big\rfloor!\Big)^{r-m\,\text{mod }r}.
     $$
Note that 
   $$
  \Big \lceil\frac{m}{r}\Big\rceil (m\,\text{mod\,}r )
  +\Big\lfloor\frac{m}{r}\Big\rfloor( r-m\,\text{mod\,}r)=m
  $$
and that this relation describes a partition of $m$ into $r$ close-to-equal parts with the largest possible number of 
parts of size $\lfloor\frac{m}{r}\rfloor$.

\begin{theorem} \label{thm: symmetric order 1 parameters}
The parameters {\rm[length, dimension, distance]} of the codes $\coxeter{A_m}{}{r}$ for all
$r\in\br{0,1,\dots,m}$ are given by:
 $$
 \bigg[(m+1)!,\; \sum_{i=0}^r \euler{A_m}{i},\; T(m+1,r+1)\bigg]
 $$
assuming Conjecture~\ref{conj: distance} when $r<\lfloor\frac{m}2\rfloor$.
\end{theorem}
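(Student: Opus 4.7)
The plan is to handle the three parameters in order, with essentially all the work residing in the distance computation. The length $(m+1)!$ is immediate from $A_m = \mathrm{Sym}(m+1)$, and the dimension formula follows directly from \cref{thm: dimension of order r}.

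For the distance, I would first combine \cref{conj: distance} (invoked as a hypothesis for $r < \lfloor m/2 \rfloor$) with \cref{cor: distance for large r} (handling $r \ge \lfloor m/2 \rfloor$ unconditionally, and whose value will be matched with the conjectural formula in that range) to reduce the problem to computing
$$
\dist(\coxeter{A_m}{}{r}) \;=\; \min_{J \subseteq S,\, |J| = m-r} |\standard J|.
$$
Next, I would describe the standard subgroups of $A_m$ as Young subgroups. Given $J$, the $r$ removed generators break the chain $s_1, \ldots, s_m$ into $r+1$ maximal runs of consecutive adjacent transpositions (some possibly empty). Generators from different runs involve disjoint pairs of letters and hence commute, while a run of $k$ consecutive adjacent transpositions generates the symmetric group on the $k+1$ letters it permutes. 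Therefore $\standard J \cong \mathrm{Sym}(b_0) \times \cdots \times \mathrm{Sym}(b_r)$, where $b_0, \ldots, b_r$ are positive integers summing to $m+1$, and $|\standard J| = \prod_k b_k!$. Every such composition of $m+1$ into $r+1$ positive parts is realized by some $J$, so the task reduces to minimizing $\prod_k b_k!$ over all such compositions.

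The concluding step is a balanced-partition optimization. Using the elementary inequality $(a+1)!\,(b-1)! < a!\, b!$ whenever $b \ge a + 2$, any imbalance between two parts can be flattened to strictly decrease the product; hence the minimum is attained when the parts are as equal as possible, namely with $t := (m+1) \bmod (r+1)$ parts of size $\lceil (m+1)/(r+1) \rceil$ and $(r+1)-t$ parts of size $\lfloor (m+1)/(r+1) \rfloor$, which evaluates to $T(m+1, r+1)$. The step that requires care, and what I expect to be the main obstacle, is the consistency check in the overlap range $r \ge \lfloor m/2 \rfloor$ where \cref{cor: distance for large r} already yields $2^{m-r}$: here one notes that $r \ge \lfloor m/2 \rfloor$ forces $m+1 \le 2(r+1)$ and therefore $\lfloor (m+1)/(r+1)\rfloor \in \{1,2\}$, and a brief case split (with $q = 2$ occurring only when $m$ is odd and $r = (m-1)/2$) confirms $T(m+1, r+1) = 2^{m-r}$ exactly. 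Everything else in the argument is a routine identification.
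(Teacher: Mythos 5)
Your proposal is correct and follows essentially the same route as the paper: length and dimension are immediate, the distance splits into the unconditional range $r\ge\lfloor m/2\rfloor$ (via \cref{cor: distance for large r}, with the case check that $T(m+1,r+1)=2^{m-r}$ there) and the conjectural range, where the standard subgroups are identified as Young subgroups determined by a composition of $m+1$ into $r+1$ parts and the product of factorials is minimized by balancing the parts. Your explicit swapping inequality $(a+1)!\,(b-1)!<a!\,b!$ for $b\ge a+2$ is a small improvement in rigor over the paper's bare assertion that the balanced composition minimizes the product, but it is the same argument.
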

\begin{proof}
The length and dimension are immediate from the construction. To find the code distance, first let $r\ge \lfloor\frac m2\rfloor$. In this case, \cref{cor: distance for large r} implies that $\dist_{A_m}(r)=2^{m-r}$. We will show
that $T(m+1,r+1)=2^{m-r}$. To see this, we consider the following two possibilities:

\noindent (a) If
$r\ge \lfloor\frac m2\rfloor+1$, then $\lceil \frac {m+1}{r+1}\rceil=2$, $\lfloor\frac{m+1}{r+1}\rfloor=1$, and
$(m+1)\,\text{mod}(r+1)=m-r$. 

\noindent (b) If $r=\lfloor\frac m2\rfloor$, then 
\begin{itemize}
 \item[(b1)] if $m$ is odd, then $\lceil\frac{m+1}{r+1}\rceil=\lfloor\frac{m+1}{r+1}\rfloor=2$, and their exponents in the expression for
 $T(m+1,r+1)$ are 0 and $m-r$,
respectively; 
 \item[(b2)] if $m$ is even, then $\lceil\frac{m+1}{r+1}\rceil=2$, $(m+1)\,\text{mod}(r+1)=m-r$, and $\lfloor\frac{m+1}{r+1}\rfloor=1$, confirming again the value of $2^{m-r}$.
 \end{itemize}
Altogether, this shows our claim.

Now let $r\le \lfloor\frac m2\rfloor-1$ or $m-r\ge \lfloor \frac {m+1}2\rfloor+1$. In this case, some of the generators of any rank-$(m-r)$ subgroup necessarily do not commute since the transpositions overlap. Suppose that disjoint sets $S_1,S_2,\dots,S_{r+1}$ 
form a partition of $[m+1]$ into $r+1$ segments, wherein the junction points of the segments correspond to the $r$ missing generators in the set of $m-r$ generators. Each set $S_i$ generates a permutation group of order $|S_i|!$, and the order of $H$ equals the product of their orders. This product is minimized if its terms are equal, or as close as possible to being equal, i.e., $S_i\in\{\lfloor \frac{m+1}{r+1}\rfloor, \lceil\frac{m+1}{r+1}\rceil\}$ with as many
smaller-size subsets $S_i$ as possible. According to the remark before the theorem, the size of $H$ is exactly $T(m+1,r+1)$, 
and Conjecture~\ref{conj: distance} implies that this is the value of the code distance.
\end{proof}

Note that in the $r\ge\lfloor\frac ms\rfloor$ case of this theorem, the subgroup $H$ is generated by commuting transpositions and therefore forms an $(m-r)$-dimensional cube in the Cayley graph, giving rise to a minimum-weight codeword in $C_{A_m}(r)$. In the Reed-Muller case, since all the generators commute, the distance of the code is exactly $2^{m-r}$ for all $r$.

\begin{remark} The sequence $T(1,1),T(2,1),T(2,2), T(3,1),T(3,2),\dots$ appears  in OEIS \cite{oeis} as entry A335109. According to the OEIS description, the number $T(m,r)$ gives the count of permutations $\pi:[m]\to[m]$ such that $\pi(i)\equiv i(\text{mod\,}r)$ for all $i\in[m]$. It is not clear to us if the two descriptions are connected.

\end{remark}

The code $\coxeter{A_m}{}{0}$ of order $r=0$ is simply a repetition code. The parameters of the first-order code
can be written explicitly as follows.
\begin{proposition}\label{thm: FirstOrderCodes}
For $m\ge 1$, the parameters of the binary linear code $\coxeter{A_m}{}{1}$ (assuming \cref{conj: distance}) are given by:
    $$ 
    \Big[(m+1)!, 2^{m+1}-m-1, {(m+1)!}\Big/\binom{m+1}{\lfloor \frac{m+1}{2}\rfloor}\Big].
    $$
\end{proposition}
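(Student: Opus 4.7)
The plan is to verify each of the three parameters separately, relying on the general results already established for Coxeter codes of type $A_m$.

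\textbf{Length.} This is immediate: by definition, the code lives in $\FF A_m$, and $|A_m|=|\mathrm{Sym}(m+1)|=(m+1)!$.

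\textbf{Dimension.} First I would apply \cref{thm: dimension of order r} with $r=1$ to get
$$
\dim\coxeter{A_m}{}{1}=\euler{A_m}{0}+\euler{A_m}{1}.
$$
Since $\euler{A_m}{i}$ is the classical Eulerian number counting permutations of $[m+1]$ with $i$ descents, we have $\euler{A_m}{0}=1$ (only the identity), and the well-known identity $\euler{A_m}{1}=2^{m+1}-(m+1)-1$. Summing these two values gives $2^{m+1}-m-1$, as claimed. The identity $\euler{A_m}{1}=2^{m+1}-m-2$ can be proved by noting that a permutation $\pi\in\mathrm{Sym}(m+1)$ has exactly one descent iff its values split, across the unique descent position, into two increasing runs; equivalently, such a $\pi$ is determined by a proper nonempty subset of $[m+1]$ (the values appearing in the first run), which gives $2^{m+1}-2$ candidates, minus the $m$ subsets whose sorted values produce no descent after the split (namely the prefixes $\{1,\dots,k\}$ for $k=1,\dots,m$).

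\textbf{Distance.} Here I would specialize \cref{thm: symmetric order 1 parameters} to $r=1$, which (under \cref{conj: distance}) gives $\dist\coxeter{A_m}{}{1}=T(m+1,2)$. It then remains to verify
$$
T(m+1,2)=\frac{(m+1)!}{\binom{m+1}{\lfloor(m+1)/2\rfloor}}.
$$
I would do this by cases on the parity of $m+1$. If $m+1=2q$, then $(m+1)\bmod 2=0$ so $T(m+1,2)=(q!)^2$, while $\binom{2q}{q}=(2q)!/(q!)^2$ yields $(m+1)!/\binom{m+1}{q}=(q!)^2$. If $m+1=2q+1$, then $(m+1)\bmod 2=1$ so $T(m+1,2)=(q+1)!\,q!$, while $\binom{2q+1}{q}=(2q+1)!/(q!(q+1)!)$ gives $(m+1)!/\binom{m+1}{q}=q!(q+1)!$. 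Both cases match.

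No step presents a genuine obstacle: the proof is essentially a specialization of \cref{thm: dimension of order r,thm: symmetric order 1 parameters} together with two standard combinatorial identities. The least automatic piece is the Eulerian-number evaluation $\euler{A_m}{1}=2^{m+1}-m-2$, but this is classical and admits the elementary bijective argument sketched above.
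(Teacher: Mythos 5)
Your proof is correct and follows essentially the same route as the paper's: length is immediate, the dimension comes from \cref{thm: dimension of order r} as $1+\euler{A_m}{1}=2^{m+1}-m-2$, and the distance is obtained by specializing \cref{thm: symmetric order 1 parameters} to $T(m+1,2)$. The only differences are cosmetic --- the paper evaluates $\euler{A_m}{1}$ from the type-$A$ recurrence \cref{eq: Euler A} and cites OEIS A010551 for the closed form $T(m,2)=m!/\binom{m}{\lfloor m/2\rfloor}$, whereas you supply a bijective count and a direct parity-case verification, making your write-up slightly more self-contained.
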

\begin{proof}
The dimension $\dim(\coxeter{A_m}{}{1})=1+\euler{A_m}{1}$. The Eulerian number $\euler{A_m}{1}$ can be found using \cref{eq: Euler A} below:
    $$
    \euler{A_m}{1}=\sum_{i=0}^{m-1} (m-i)2^i=2^{m+1}-m-2,
    $$
giving the value of the dimension.
 The sequence of distances 
    $\dist(\coxeter{A_m}{}{1})= T(m+1,2)$ appears as entry A010551 in OEIS \cite{oeis}, and has explicit formula 
    $T(m,2) = m!/\binom{m}{\lfloor \frac{m}{2}\rfloor}.$
\end{proof}

\subsubsection{Codes of type $I_2(n)^\mu$.} For $n\in\ZZ_{\geq 2}$ and $m\ge 1$, $I_2(n)^\mu$ is a Coxeter system of rank $m=2\mu$ with $\abs{I_2(n)^\mu}=(2n)^\mu$ and defining matrix
\begin{equation*}
    M(i,j) = \begin{cases}
        1, &i=j,\\
        n, &j=i+1 \text{ and } j\equiv 0\pmod{2},\\
        n, &i=j+1 \text{ and } i\equiv 0\pmod{2},\\
        2, &\text{otherwise}.
    \end{cases}
\end{equation*}

\begin{proposition}\label{prop: I2(n)}
The binary linear code $\coxeter{I_2(n)^\mu}{}{r}$ has parameters $[(2n)^\mu,k,d]$, where the dimension $k$ is given by
\begin{align}
    k &= \sum_{\substack{i,j\in\NN\\ i+j\leq \mu\\2i+j\leq r}} \frac{\mu!}{i!j!(\mu-i-j)!}(2n-2)^j  \label{eq: dimension dihedral}\\
    \intertext{and the distance $d$ (assuming \cref{conj: distance} when $r<\mu$) is given by}
    d &= \begin{cases}
        2^{2\mu-r}, &\mu\le r \le 2 \mu,\\
        2^\mu n^{\mu-r}, &0\le r< \mu.
    \end{cases} \notag
\end{align}
\end{proposition}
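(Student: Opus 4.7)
The length is immediate since $|I_2(n)^\mu|=(2n)^\mu$. The plan for the dimension is to apply \cref{thm: dimension of order r} together with the multiplicative formula \cref{eq: Euler multiplicative} for the Eulerian polynomial of a product of Coxeter systems. First I would compute the Eulerian polynomial of a single dihedral factor $I_2(n)$: since this system has rank $2$, the identity contributes $\euler{I_2(n)}{0}=1$, the longest element contributes $\euler{I_2(n)}{2}=1$ (consistent with the Dehn--Sommerville equations \cref{eq: DS}), and the sum \cref{eq: sum} gives $\euler{I_2(n)}{1}=2n-2$. Hence
\[
I_2(n)(t)=1+(2n-2)t+t^2,
\]
and by \cref{eq: Euler multiplicative} the Eulerian polynomial of the direct product is
\[
I_2(n)^\mu(t)=\bigl(1+(2n-2)t+t^2\bigr)^\mu.
\]

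Expanding by the trinomial theorem and collecting coefficients yields
\[
\euler{I_2(n)^\mu}{k}=\sum_{\substack{a+b+c=\mu\\ b+2c=k}} \frac{\mu!}{a!\,b!\,c!}(2n-2)^b.
\]
Summing over $k=0,\dots,r$ and relabeling $c\mapsto i$, $b\mapsto j$, $a=\mu-i-j$, the constraint $b+2c\le r$ becomes $2i+j\le r$, reproducing \cref{eq: dimension dihedral}. So this step is just bookkeeping once the single-factor polynomial is known.

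For the distance I would split on $r$ as stated. When $\mu\le r\le 2\mu$, the bound $\dist(\Coxeter r)\ge 2^{m-r}=2^{2\mu-r}$ of \cref{thm: exponential lower bound} is sharp by \cref{cor: distance for large r}, so nothing more is needed; one can also exhibit $2\mu-r\le\mu$ pairwise commuting generators by picking at most one generator from each of the $\mu$ dihedral factors. The interesting case is $0\le r<\mu$, where I would invoke \cref{conj: distance} and minimize $|\standard J|$ over $J\subseteq S$ with $|J|=2\mu-r$. Each subset $J$ meets the $\mu$ dihedral factors in sizes $0,1,$ or $2$; write $a$ for the number of factors hit in both generators and $b$ for the number hit in exactly one, so that $|J|=2a+b=2\mu-r$, $a+b\le\mu$, and
\[
|\standard J|=(2n)^a\cdot 2^b = 2^{2\mu-r}\,n^a.
\]
Minimizing over admissible $(a,b)$ is a single-variable problem: since $n\ge 2$, the objective is nondecreasing in $a$, and the constraint $a+b\le\mu$ combined with $2a+b=2\mu-r$ forces $a\ge\mu-r$. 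The minimum is attained at $a=\mu-r$, $b=r$, yielding $|\standard J|=2^\mu n^{\mu-r}$, which is the claimed distance.

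The main obstacle, as in \cref{thm: symmetric order 1 parameters}, is that the distance in the regime $r<\mu$ is only established modulo \cref{conj: distance}; everything else reduces to the symbolic expansion of $(1+(2n-2)t+t^2)^\mu$ and the elementary optimization above.
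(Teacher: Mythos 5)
Your proposal follows essentially the same route as the paper: Eulerian polynomial of a single dihedral factor, multiplicativity \eqref{eq: Euler multiplicative} plus \cref{thm: dimension of order r} for the dimension, \cref{cor: distance for large r} for $r\ge\mu$, and minimization of $\abs{\standard J}$ over $(2\mu-r)$-subsets under \cref{conj: distance} for $r<\mu$ (the paper phrases the latter as ``$J$ must contain one generator from each factor, and each extra generator upgrades a factor of $2$ to $2n$,'' which is your $(a,b)$-parametrization). One arithmetic slip: $(2n)^a\cdot 2^b=2^{a+b}n^a=2^{2\mu-r-a}n^a$, not $2^{2\mu-r}n^a$ (you effectively wrote $2^{2a+b}n^a$). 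The corrected objective $2^{2\mu-r}(n/2)^a$ is still nondecreasing in $a$ for $n\ge 2$, your constraint analysis $a\ge\mu-r$ is right, and at $a=\mu-r$, $b=r$ it evaluates to $2^{\mu}n^{\mu-r}$ as claimed --- so the conclusion stands once the exponent is fixed.
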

\begin{proof} 
For the dimension, we note that the Eulerian numbers of $I_2(n)$ are $\euler Wi=1, 2n-2, 1$ for $i=0,1,2$, so using 
\cref{eq: Euler multiplicative}, we obtain
  $
  W(t)=(t^2+(2n-2)t+1)^\mu
  $
Computing the dimension of the code $\coxeter{I_2(n)^\mu}{}{r}$ by \cref{eq:dimension}, we obtain the expression
in \cref{eq: dimension dihedral}.
  
 Turning to the distance, the $r\geq \mu$ case holds by \cref{cor: distance for large r} (note that the rank of this Coxeter system is $2\mu$), so we only rely on \cref{conj: distance} when $r<\mu$. We need to minimize the size of $\abs{\standard{J}}$ where $J\subset S, \abs{J}=2\mu-r\geq \mu$. It is straightforward to verify that, without loss of generality, such a collection necessarily contains the even index generators, $J_{\mathrm{even}}=\br{2i}_{i=1}^\mu\subseteq J$. For each additional generator $s_{2j-1}$ added to $J_{\mathrm{even}}$, we replace a factor of $2$ in $\abs{\standard{J}}$ with a factor of $2n$, the order of the subgroup $\standard{s_{2j-1},s_{2j}}$.
\end{proof}
\begin{corollary}
For fixed $r,n$ and $m\to\infty$, the distance of $\coxeter{I_2(n)^\mu}{}{r}$ is $(2n)^m n^{-r}$, i.e., it forms a 
constant proportion of the code length.
\end{corollary}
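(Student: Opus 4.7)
The plan is to read off the distance directly from the second branch of Proposition~\ref{prop: I2(n)} in the regime where the rank parameter $\mu$ is large compared to the fixed order $r$. Once $\mu > r$ --- which holds for all sufficiently large $\mu$ since $r$ is fixed --- we lie in the case $0 \le r < \mu$, and that branch of the proposition asserts
$$
    \dist(\coxeter{I_2(n)^\mu}{}{r}) = 2^\mu n^{\mu-r}.
$$

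To exhibit this as the claimed constant proportion of the length, I would rewrite the right-hand side in the form $(2n)^\mu \cdot n^{-r}$, which matches the expression stated in the corollary (with $\mu$ in place of $m$). Since the block length of the code is $N = (2n)^\mu$ by Proposition~\ref{prop: I2(n)}, this gives $\dist/N = n^{-r}$, a quantity that depends only on the fixed parameters $r$ and $n$ and is independent of $\mu$. Hence the relative distance is exactly the positive constant $n^{-r}$, confirming that the distance forms a constant proportion of the length as $\mu \to \infty$.

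There is no real obstacle: the content of the corollary is captured entirely by the second branch of Proposition~\ref{prop: I2(n)} together with a one-line algebraic rearrangement. The only conditional aspect worth flagging is that this branch of the proposition depends on \cref{conj: distance} in the range $r < \mu$, so the corollary inherits that assumption.
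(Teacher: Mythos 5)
Your proposal is correct and is essentially the paper's (implicit) argument: the corollary is a direct read-off from the second branch of Proposition~\ref{prop: I2(n)}, rewriting $2^\mu n^{\mu-r}$ as $(2n)^\mu n^{-r}$ and dividing by the length $(2n)^\mu$. You also rightly note the typo ($m$ should be $\mu$ in the exponent) and the inheritance of \cref{conj: distance} for $r<\mu$.
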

Codes $\coxeter{I_2(n)^\mu}{}{r}$ are perhaps the closest to RM codes in the Coxeter family: for instance, 
$\coxeter{I_2(2)^\mu}{}{r}$ is simply $RM(r,\mu)$, so it is of interest to further study such codes for small $n$.
In \cref{sec: I23} we give a table of parameters of the codes $\coxeter{I_2(n)^\mu}{}{r}$ for $n=3,4$ and several values of $\mu$.

\section{Computing \texorpdfstring{$W$}{W}-Eulerian Numbers}\label{sec: computing Eulerian numbers}

To find the code dimension via \cref{eq:dimension}, it is useful to have explicit expressions for the $W$-Eulerian numbers. For the irreducible families of Coxeter groups, they appear in many references, e.g., \cite{petersen2015eulerian,hyatt2016recurrences,BRENTI1994417}. We give these expressions in our notation, along with an expression to compute the $W$-Eulerian numbers for direct products of Coxeter groups.

For every finite Coxeter system $(W,S)$ of rank $m$, the 0-th and $m$-th $W$-Eulerian numbers equal 1, $\euler{W}{0}=\euler{W}{m}=1$.

\vspace{0.5em}
\noindent {\bfit Type A.} \cite[A008292]{oeis} The $A_m$-\emph{Eulerian numbers} can be computed via the recurrence relation
\begin{equation}\label{eq: Euler A}
    \euler{A_m}{i} = (m-i+1)\euler{A_{m-1}}{i-1} + (i+1)\euler{A_{m-1}}{i}.
\end{equation}

\vspace{0.5em}
\noindent {\bfit Type B.} \cite[A060187]{oeis} The $B_m$-Eulerian numbers can be computed via the recurrence relation
\begin{equation*}
    \euler{B_m}{i} = (2m-2i+1)\euler{B_{m-1}}{i-1} + (2i+1)\euler{B_{m-1}}{i}.
\end{equation*}

\vspace{0.5em}
\noindent {\bfit Type D.} \cite[A066094]{oeis} The $D_m$-Eulerian numbers can be computed from the $A_m$- and $B_m$-Eulerian numbers via
\begin{equation*}
    \euler{D_m}{i} = \euler{B_m}{i}-m 2^{m-1}\euler{A_{m-2}}{i-1}.
\end{equation*}

\vspace{0.5em}
\noindent {\bfit Dihedral group.}  Since $I_2(n)$ has two generators, the only possible descent numbers are 0, 1, and 2, so $\euler{I_2(n)}{1} = 2n-2.$

\vspace{0.5em}
\noindent {\bfit Exceptional types.} See \cref{tab: exceptional}.

{\begin{table*}[h!]
    \centering
{ \begin{tabular}{L{2em} c c c c c c c }
    \hline
   & & & & $r$& & &    \\ \cline{2-8}
 $W$  & 1& 2& 3& 4& 5& 6& 7   \\\hline\hline
        $E_6$  & 1272 &12183 & 24928&12183 &1272 &1 &  \\ \hline
        $E_7$  & 17635 & 309969 &1123915  & 1123915 & 309969 & 17635&1   \\ \hline
        $E_8$  & 881752& 28336348& 169022824&300247750  & 169022824& 28336348& 881752  \\ \hline
        $F_4$  & 236& 678 & 236&1 & & &        \\ \hline
        $H_3$  & 59& 59& 1& & & &     \\ \hline
        $H_4$  & 2636& 9126& 2636& 1& & &    \\ \hline
    \end{tabular}
    \caption{$W$-Eulerian numbers for groups of exceptional type \cite[p.248]{petersen2015eulerian}.}
    \label{tab: exceptional}
    }
\end{table*}
}

\section{Quantum codes from Coxeter groups}\label{sec:quantum}
We adopt conventions from \cite{Gottesman2024}.
Denote by $[[n,k]]$ the parameters of a qubit stabilizer code that encodes $k$ logical qubits into $n$ physical qubits. Given binary $[n,k_i]$ codes $C_i$, $i\in\br{1,2}$, such that $C_1^\perp\subseteq C_2$ there is an $[[n,k_1+k_2-n]]$ stabilizer code, known as the CSS code associated to $C_1$, $C_2$, denoted by $\CSS(C_1,C_2)$. The codes $C_1^\perp$ and $C_2^\perp$ represent the $X$ and $Z$ stabilizers of $\CSS(C_1,C_2)$, respectively. That is, denoting $X^x\coloneqq \bigotimes_{i\in[n]}X^{x_i}$ and $Z^z\coloneqq \bigotimes_{i\in[n]}Z^{z_i}$ where $X$ and $Z$ are the Pauli matrices, the operators
  \begin{equation}\label{eq: XZ}
    \br{X^x, Z^z\Bigmid x\in C_1^\perp, z\in C_2^\perp},
 \end{equation}
commute and have a joint $+1$ eigenspace in $\CC^{2^n}$ of dimension $2^{k_1+k_2-n}$. The codes $C_1$ and $C_2$ likewise represent the space of logical $Z$ and $X$ Pauli operators, respectively.

Let $(W,S)$ be a finite Coxeter system of rank $m\geq 1$.
For $-1\leq q\leq r\leq m$, \cref{thm: nested} implies that $\Coxeter{q}\subseteq\Coxeter{r}$, and so we immediately construct a quantum code using Coxeter codes:
\begin{definition}[Quantum Coxeter code]\label{def: Quantum Coxeter}
     The \emph{order-$(q,r)$ quantum Coxeter code of type $(W,S)$}, $\QCoxeter{q,r}$, is defined to be the CSS code $$\QCoxeter{q,r}\coloneqq\CSS(\Coxeter{m-q-1},\Coxeter{r})$$ with parameters $[[n=\abs{W}, k=\sum_{i=q+1}^r\euler{W}{i}]]$.
\end{definition}
\begin{theorem}\label{thm: QCoxeter parameters} The parameters of the $\QCoxeter{q,r}$ are 
     $$
     [[n=\abs{W}, k=\sum_{i=q+1}^r\euler{W}{i},d= 2^{\min(q+1,m-r)}]].
     $$
\end{theorem}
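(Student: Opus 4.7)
My plan is to verify each of the three parameters separately. The length $n=|W|$ is immediate, since every Coxeter code of type $(W,S)$ has length $|W|$. For the dimension, I will apply the standard CSS formula $k = \dim C_1 + \dim C_2 - n$ with $C_1 = \Coxeter{m-q-1}$ and $C_2 = \Coxeter{r}$, substitute the dimensions from \cref{thm: dimension of order r}, and simplify using the Dehn--Sommerville relation $\euler{W}{i} = \euler{W}{m-i}$ together with $\sum_{i=0}^m \euler{W}{i} = |W|$; the expression telescopes to $\sum_{i=q+1}^r \euler{W}{i}$.

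The distance satisfies $d = \min(d_X, d_Z)$ in the standard CSS notation, where by \cref{thm: Coxeter duality} we have $d_X = \min\{|x|: x \in \Coxeter{r}\setminus \Coxeter{q}\}$ and $d_Z = \min\{|z|: z \in \Coxeter{m-q-1}\setminus \Coxeter{m-r-1}\}$. Each element appearing here is a nonzero codeword of the ambient code, so \cref{thm: exponential lower bound} gives $d_X \ge 2^{m-r}$ and $d_Z \ge 2^{q+1}$, yielding the lower bound $d \ge 2^{\min(q+1, m-r)}$.

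The main step, and the main obstacle, is the matching upper bound; my strategy is to exhibit an explicit low-weight logical operator as the indicator of a purely abelian standard subgroup. Assume without loss of generality that $q+1 \le m-r$ (the opposite case is handled symmetrically). Since $q \le r$, we have $\min(q+1, m-r) \le \min(r+1, m-r) \le \lceil m/2\rceil$, where the last inequality uses that two positive integers summing to $m+1$ have minimum at most $\lceil m/2 \rceil$. The argument in the proof of \cref{cor: distance for large r} shows that any Coxeter system has at least $\lceil m/2\rceil$ pairwise commuting generators, so I can choose $J \subseteq S$ with $|J|=q+1$ whose elements pairwise commute; then $\standard{J} \cong \ZZ_2^{q+1}$ and $\indicator{\standard{J}}$ has weight $2^{q+1}$. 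This indicator lies in $\Coxeter{m-q-1}$ as the indicator of a rank-$(q+1)$ standard coset, but not in $\Coxeter{m-r-1}$, since every nonzero codeword of the latter has weight at least $2^{r+1}>2^{q+1}$ by \cref{thm: exponential lower bound} and the assumption $r>q$. Hence $d_Z \le 2^{q+1}$, matching the lower bound. The delicate point to verify is precisely this commutativity budget: that the $\min(q+1, m-r)$ commuting generators we need are always available, which reduces to the bipartite structure of Coxeter--Dynkin diagrams exploited in \cref{cor: distance for large r}.
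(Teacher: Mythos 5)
Your proof is correct and rests on the same two pillars as the paper's: the lower bound $d\ge 2^{\min(q+1,m-r)}$ via \cref{thm: exponential lower bound}, and an upper bound coming from a cube of pairwise commuting generators. The difference is in how the upper bound is organized. The paper splits into the cases $q<r\le\lfloor m/2\rfloor$ and $q\le\lfloor m/2\rfloor<r$ and invokes \cref{cor: distance for large r} as a black box to pin down $\dist(C_1)$ and $\dist(C_2^\perp)$ exactly, then argues that a minimum-weight codeword of the larger code cannot lie in the stabilizer because the stabilizer's distance is strictly bigger. You instead construct the logical witness directly: after reducing to $q+1\le m-r$ by the CSS $X$/$Z$ symmetry, you check $\min(q+1,m-r)\le\lceil m/2\rceil$ and take the indicator of $\langle J\rangle\cong\ZZ_2^{q+1}$ for a commuting $(q+1)$-subset $J$, excluding it from $\Coxeter{m-r-1}$ by the same distance bound. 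This buys you a uniform argument with no case split on the position of $r$ relative to $m/2$; in particular it transparently covers the regime $\lfloor m/2\rfloor<q<r$, which the paper's explicit two cases do not literally reach (the paper handles it only implicitly through the "fully symmetric" remark about $d_Z$). The one point you flag as delicate --- availability of $\lceil m/2\rceil$ commuting generators --- is exactly the bipartiteness of the Coxeter--Dynkin diagram already established in the proof of \cref{cor: distance for large r}, so nothing new is needed there. Your dimension computation (CSS formula plus Dehn--Sommerville and \cref{thm: dimension of order r}) is the standard one the paper leaves implicit. Note that, like the paper, you implicitly need $q<r$ for the strict inequality $2^{r+1}>2^{q+1}$; when $q=r$ the code has $k=0$ and the distance is undefined, as the paper remarks.
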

\begin{proof}
    The length and dimension are clear by construction. Using the notation introduced in the beginning of this
    section,
   $
C_{1}=\Coxeter{m-q-1}, \quad C_{2}=\Coxeter{r}.
   $
The distance 
    $\dist(\QCoxeter{q,r})=\min(d_X,d_Z)$, where $d_X:=w_H(C_1\backslash C_2^\bot)$ is the minimum Hamming weight of the binary code $C_1\backslash C_2^\bot$ and similarly for $d_Z:=w_H(C_2\backslash C_1^\bot)$.
Below we assume that $q<r$ because if $q=r$, then the dimension of the code $k=0$, and the distance is not well defined. The argument depends on whether $r\le\lfloor\frac m2\rfloor$ or not.

1. $q< r\le\lfloor\frac m2\rfloor$. In this case, $m-q-1\ge m-\lfloor\frac m2\rfloor\ge \lfloor\frac m2\rfloor$, and thus
$\dist(C_1)= 2^{q+1}$ by \cref{cor: distance for large r}, and $\dist(C_2^\bot)= 2^{m-r}$ for the same reason. Since $C_2^\bot\subseteq C_1,$ we conclude that $d_X= 2^{\min(q+1,m-r)}$.
The argument for $d_Z$ is fully analogous, which proves the claim of the theorem.

2. $q\le \lfloor\frac m2\rfloor< r$. As above, we have $\dist(C_1)=2^{q+1}$. By \cref{thm: exponential lower bound}, $\dist(C_2^\bot)\ge 2^{r+1}\ge\dist(C_1),$ so clearly $d_X=2^{q+1}$. The argument
for $d_Z$ is again fully symmetric, yielding the estimate $d_Z=2^{m-r}$ and concluding the proof.    
\end{proof}

Consider $n=\abs{W}$ physical qubits indexed by the elements of $W$. For a subset $A\subseteq W$ let $X_{A}$ denote the $n$-qubit Pauli operator acting as $X$ on the qubits in $A$ and $\eye$ (identity) elsewhere, and analogously for $Z_{A}$.
The next lemma is a simple consequence of the definition of classical Coxeter codes and their duality structure given in \cref{thm: Coxeter duality}.
\begin{lemma}\label{lem: Quantum Coxeter}
    Given $q,r\in\br{-1,\dots,m}$, $q\leq r$, the following collections of $X$ and $Z$ operators generate the stabilizers of $\QCoxeter{q,r}$:
    \begin{equation*}
        \begin{aligned}
        \mcS_X &\coloneqq\Big\{X_{w\standard{J}} \bigmid w\in W,J\subseteq S, \abs{J} = m-q\Big\},\\
        \mcS_Z &\coloneqq\Big\{Z_{w\standard{J}} \hspace{0.15em}\bigmid w\in W,J\subseteq S, \abs{J} = r+1\hspace{0.4em}\Big\}.
        \end{aligned}
    \end{equation*}
\end{lemma}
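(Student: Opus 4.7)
The plan is to apply the CSS construction together with the Coxeter duality theorem, and then unfold the definition of $\Coxeter{r}$ to translate a basis of indicator vectors into a generating set of Pauli operators.

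First, I recall that $\QCoxeter{q,r}=\CSS(C_1,C_2)$ with $C_1=\Coxeter{m-q-1}$ and $C_2=\Coxeter{r}$, and that the $X$- and $Z$-stabilizers, as in \cref{eq: XZ}, are $\{X^x\mid x\in C_1^\perp\}$ and $\{Z^z\mid z\in C_2^\perp\}$. Applying \cref{thm: Coxeter duality} (i.e., $\Coxeter{s}^\perp=\Coxeter{m-s-1}$), I compute
$$C_1^\perp=\Coxeter{m-q-1}^\perp=\Coxeter{q},\qquad C_2^\perp=\Coxeter{r}^\perp=\Coxeter{m-r-1}.$$

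Next, by \cref{def: Coxeter codes}, $\Coxeter{q}$ is spanned by indicator vectors $\indicator{w\standard{J}}$ with $w\in W$, $J\subseteq S$, $|J|=m-q$, and $\Coxeter{m-r-1}$ is spanned by indicator vectors $\indicator{w\standard{J}}$ with $|J|=m-(m-r-1)=r+1$. Because the map $x\mapsto X^x$ is a group homomorphism from $(\FF W,+)$ onto the abelian group of $X$-type Pauli operators (and similarly for $Z$), a generating set of the classical code translates directly into a generating set of the corresponding Pauli stabilizer group. Identifying the indicator vector $\indicator{w\standard{J}}$ with the operator $X_{w\standard{J}}$ (resp.\ $Z_{w\standard{J}}$) in the two cases therefore yields exactly the collections $\mcS_X$ and $\mcS_Z$.

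The only point to verify is that these operators indeed commute, but this is automatic from the CSS construction once we know $C_1^\perp\subseteq C_2$: for any generators $X_{w_1\standard{J_1}}\in\mcS_X$ and $Z_{w_2\standard{J_2}}\in\mcS_Z$, their commutation is equivalent to $|w_1\standard{J_1}\cap w_2\standard{J_2}|$ being even, which holds by \cref{lem: even overlap} since $|J_1|+|J_2|=(m-q)+(r+1)\ge m+1>m$ using $q\le r$. Since no step requires anything beyond previously proved results, there is no real obstacle; the argument is essentially a dictionary-translation between the CSS framework and the combinatorial basis of Coxeter codes.
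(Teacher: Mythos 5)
Your proposal is correct and follows exactly the route the paper intends: the paper omits an explicit proof, remarking only that the lemma is ``a simple consequence of the definition of classical Coxeter codes and their duality structure given in \cref{thm: Coxeter duality},'' which is precisely the dictionary translation you carry out ($C_1^\perp=\Coxeter{q}$, $C_2^\perp=\Coxeter{m-r-1}$, then reading off the spanning cosets of ranks $m-q$ and $r+1$). Your commutation check via \cref{lem: even overlap} is a correct, if redundant, confirmation of the CSS condition already guaranteed by \cref{thm: nested}.
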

As a simple example, consider the dihedral group $I_2(n)$ whose Cayley graph is a $2n$-cycle. Then $\QCoxeter{0,1}$ is the Iceberg code generated by global $X^{\otimes 2n}$ and $Z^{\otimes 2n}$ stabilizers.

In prior work \cite{barg2024geometric}, we utilized the geometric and combinatorial structure of the group $\ZZ_2^m$ with its standard generating set to study transversal logical operators in higher levels of the Clifford hierarchy of the quantum RM family, $QRM_m(q,r) = \Qcoxeter{\ZZ_2^m}{}{q,r}$. For instance, the exact nature of the logic implemented by certain transversal operators acting on a standard coset depends only on the rank of the coset. This result holds in the case of arbitrary quantum Coxeter codes.
\begin{claim}\label{lem: validity} Let $\QCoxeter{q,r},0\le q<r\le m$ be the quantum Coxeter code and let $R$ be a standard coset.
    For the single-qubit operator 
    $$Z(k)\coloneqq\ketbra{0}+e^{i\frac{\pi}{2^k}}\ketbra{1},$$
    \begin{enumerate}[leftmargin=*]
        \item If $\rank(R) \leq q+kr$, then applying $Z(k)$ to the qubits in $R$ does not preserve the code space.
        \item If $q+kr+1\leq \rank(R)\leq (k+1)r$, then applying $Z(k)$ to the qubits in $R$ implements a non-trivial logical operation the code space.
        \item If $\rank(R)\geq (k+1)r+1$, then applying $Z(k)$ to the qubits in $R$  implements a logical identity on the code space.
    \end{enumerate}
\end{claim}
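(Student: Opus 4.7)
The plan is to reduce each case to a divisibility condition on intersection sizes of the form $|R\cap\supp z|$, and then to control these sizes by combining the Coxeter multiplication property of \cref{thm: Coxeter multiplication} with the duality of \cref{thm: Coxeter duality} and the intersection formula of \cref{fact: intersection is coset}. First I would write out the action of $Z(k)_R$ on the CSS logical basis $|x\rangle_L = |\Coxeter{q}|^{-1/2}\sum_{y\in\Coxeter{q}}|x+y\rangle$ for $x\in\Coxeter{r}$, obtaining
\begin{equation*}
 Z(k)_R|x\rangle_L \;=\; \frac{1}{\sqrt{|\Coxeter{q}|}}\sum_{y\in\Coxeter{q}}
   \exp\!\Big(i\pi\,\tfrac{|R\cap\supp(x+y)|}{2^k}\Big)\,|x+y\rangle.
\end{equation*}
Preservation of the code space is then equivalent to this exponent being constant on each coset $x+\Coxeter{q}$ modulo $2^{k+1}$, and the action is the logical identity iff, in addition, $|R\cap\supp z|\equiv 0\pmod{2^{k+1}}$ for every $z\in\Coxeter{r}$.

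Next I would use the identity $|R\cap\supp(a\oplus b)|=|R\cap\supp a|+|R\cap\supp b|-2|R\cap\supp(a\odot b)|$ iteratively to convert both phase conditions into a hierarchy of orthogonality requirements on $\indicator{R}$ against multifold Schur products drawn from $\Coxeter{q}$ and $\Coxeter{r}$. By \cref{thm: Coxeter multiplication} the relevant products lie in $\Coxeter{q+kr}$ (for validity) and $\Coxeter{(k+1)r}$ (for the identity condition), and the top-level mod-$2$ orthogonalities rearrange via \cref{thm: Coxeter duality} into the rank bounds $\rank R\geq q+kr+1$ and $\rank R\geq (k+1)r+1$---exactly the thresholds in the statement. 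Part~(3) then follows once one also verifies the intermediate divisibilities $|R\cap\supp z|\equiv 0\pmod{2^{k+2-j}}$ at each lower level $j<k+1$. Parts~(1) and~(2) follow from duality in reverse: when $\rank R\leq q+kr$ or $\rank R\leq(k+1)r$, one obtains explicit witnesses producing nontrivial phases modulo $2^{k+1}$, which for case~(1) supplies a pair $(x,y)\in\Coxeter{r}\times\Coxeter{q}$ violating phase constancy, and for the non-identity part of case~(2) produces a nontrivial logical phase on some $|x\rangle_L$.

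The main obstacle will be justifying the intermediate divisibility conditions required in case~(3). In the Reed--Muller setting of \cite{barg2024geometric} every standard subgroup $\standard{J}$ has order exactly $2^{|J|}$ and these divisibilities follow at once from rank bounds on intersections; in a general Coxeter system the order of $\standard{J}$ need not be divisible by $2^{|J|}$ (for instance $|\standard{s_1,s_2}|=2n$ in $I_2(n)$), so one must instead extract the required $2$-adic valuations from the cancellations in the inclusion--exclusion formula $|R\cap\bigtriangleup_i R_i|=\sum_{\emptyset\neq T}(-2)^{|T|-1}|R\cap\bigcap_{i\in T}R_i|$. The natural route is an induction on the rank $m$, using the restriction lemma \cref{lem: induction codeword} to pass to a maximal standard subgroup $\standard{K}$ in which the intersections acquire additional $2$-divisibility, and appealing to the extension basis $\mcB_{\ge m-r}$ of \cref{thm: extensions form a basis} to organize the Schur products cleanly; the base case of the induction reduces to the Reed--Muller analysis already carried out in \cite{barg2024geometric}.
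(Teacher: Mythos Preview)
The paper does not give a self-contained argument: it asserts that the proof of Theorem~5.2 in \cite{barg2024geometric} carries over verbatim because that proof ``relies only on the Coxeter group structure of $\ZZ_2^m$.'' Your outline is precisely how one would unpack that citation---expand $Z(k)_R$ on the CSS logical basis, reduce to $2^{k+1}$-divisibility of $|R\cap\supp z|$, and control these quantities through the inclusion--exclusion identity together with \cref{thm: Coxeter multiplication} and \cref{thm: Coxeter duality}. At the level of the top-layer mod-$2$ orthogonalities this works and delivers the thresholds $q+kr+1$ and $(k+1)r+1$ exactly as you say.

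The obstacle you flag for the intermediate $2$-adic divisibilities, however, is not a technical wrinkle to be patched by induction: parts (2) and (3) are false for general $(W,S)$, so neither the paper's citation nor your proposed fix via \cref{lem: induction codeword} can succeed. Take $W=I_2(3)^2$, so $m=4$, with $(q,r)=(0,1)$ and $k=1$. The case-(3) threshold is $\rank(R)\ge 3$; for $R=\standard{s_1,s_2,s_3}$ and $z=\indicator{\standard{s_1,s_2,s_4}}\in\Coxeter{1}$ one gets $|R\cap\supp z|=|\standard{s_1,s_2}|=6\not\equiv 0\pmod 4$, and a direct check shows $Z(1)_R$ acts as a nontrivial logical phase rather than the identity. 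Likewise the rank-$2$ coset $R'=\standard{s_1,s_2}$ lies in the case-(2) range, yet since $|R'|=6$ the operator $Z(1)_{R'}$ sends the logical all-zero state to a vector proportional to $\ket{0^{36}}-\ket{1^{36}}$, which is orthogonal to the code space. The RM argument in \cite{barg2024geometric} secretly uses $|\standard{J}|=2^{|J|}$, not just the multiplication and duality properties you (and the paper) isolate; a correct extension to arbitrary Coxeter systems will need either an additional hypothesis guaranteeing the requisite valuations---for instance, restricting $R$ to cosets of subgroups generated by pairwise-commuting elements of $S$, where your argument does go through---or a genuinely weaker conclusion.
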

The proof of \cref{lem: validity} is identical to the proof of Theorem 5.2 in \cite{barg2024geometric}, which relies only on the Coxeter group structure of $\ZZ_2^m$. A natural future direction, following the main results of \cite{barg2024geometric}, is to give a combinatorial description of the logical circuit implemented by a $Z(k)_R$ operator when $q+kr+1\leq \rank(R)\leq (k+1)r$. A necessary first step would be to construct a so-called ``symplectic basis'' for $\QCoxeter{q,r}$, i.e., a set of Pauli operators that generate the space of logical Paulis and satisfy certain commutativity conditions. In a few cases, the collections of forward and reverse extensions satisfy the symplectic condition.

At the same time, in many cases, this fails to be true, including some small quantum Coxeter codes. Examples of groups for which the symplectic condition fails include the system $(A_3,S)$ considered above (the symmetric group on 4 letters), and $B_2$, the dihedral group of order 8 generated by two reflections across lines in $\RR^2$ that meet at a 45{\textdegree}  angle. 

The codes $\QCoxeter{0,1}$ for the Coxeter systems $A_3$, $B_3$, and $H_3$ appear in \cite{Vasmer2022} as examples of 3D ball codes. The authors of \cite{Vasmer2022} note that a global transversal $T$ operator is a non-trivial logical operator for these codes; this is also a consequence of our \cref{lem: validity}.\footnote{\cite{Vasmer2022} technically considers a \emph{signed} version of transversal $T$, which acts as $T$ on half of the qubits and $T^\dagger$ on the remaining qubits. Our \cref{lem: validity} applies in this case, as well.}

\begin{remark}
A related construction of quantum stabilizer codes was earlier outlined in \cite{vuillot2022quantum}. Its authors start with an abstract combinatorial generalization of RM codes wherein the group $\ZZ_2^m$ is replaced with a Cartesian product 
$\mathcal{L}_m=L_1\times\dots \times L_m$ of  finite sets of varying size. Fixing a subset $\mathcal{F}\subset \mathcal{L}_m$ defines the support set of qubits of the quantum code, and the stabilizers act on specially chosen subsets of $\mathcal{F}$ that sustain the commutation relations. As the authors of \cite{vuillot2022quantum} observe, one way of choosing the collection $\mathcal{L}_m$ is by taking the sets $L_i$ as rank-$(m-1)$ standard subgroups of a Coxeter group $W$ of rank $m$. They further construct the stabilizer group by taking $X$- and $Z$-stabilizers that act on subsets corresponding to the standard cosets of $W$. At the same time, \cite{vuillot2022quantum} does not link this construction to CSS codes or identify the properties of the obtained quantum codes, suggesting that knowing the group presentation is not sufficient for that purpose. Our approach advances this understanding, showing that it is possible to pinpoint code's properties starting from the structure of the underlying Coxeter group.
\end{remark}

\subsection{The dihedral (quantum) code family}
Examples of quantum codes $\QCoxeter{q,r}$ can be obtained using parameters of classical codes listed below in Tables~\ref{tab: Am params}--\ref{tab: I2(4) params} relying on \cref{thm: QCoxeter parameters}. Here, we focus on the case $W=I_2(n)^\mu$: $\mu$ copies of the $2n$-element dihedral group for $\mu\ge 2$. The quantum code $\Qcoxeter{W}{}{q,r}$ is obtained as $\text{CSS}(\coxeter{W}{}{2\mu-q-1),\coxeter{W}{}{r}}$, so to find its parameters explicitly, we rely on the parameters $[(2n)^\mu,k,d]$ of
classical dihedral Coxeter codes $\coxeter{W}{}{\cdot}$ as given in \cref{prop: I2(n)}.  
For a concrete example, consider the case $r=\mu, q=\mu-1, n=3$. Then the parameters of the code $\mcQ_\mu:=\Qcoxeter{I_2(3)^\mu}{}{q,r}$ are
    $$
   \Big[\Big[\text{length}=6^\mu,k=\euler{I_2(3)^\mu}{\mu},d=2^\mu\Big]\Big].
   $$
The dimension $k$ can be computed explicitly: recalling the proof of \cref{thm: QCoxeter parameters}, this is simply
the ``central coefficient'' in the expansion of the Eulerian polynomial $W(t)$:
     \begin{align*}
      \dim(\mcQ_\mu)&=\text{Coeff}_{[t^\mu]}(t^2+4t+1)^\mu=\sum_{i,j,l}\frac{\mu!}{i!j!l!}4^j,
\end{align*}
where $i,j,l\ge 0$ and $i+j+l=\mu, 2i+j=\mu$. Solving for $j, l$, we obtain
$l=i,j=\mu-2i$. Substitute into the above line and rewrite to obtain the expression 
  \begin{equation}\label{eq: dim Q}
   \dim(\mcQ_\mu)=\sum_{i=0}^{\lfloor \mu/2\rfloor} \frac{\mu!}{(i!)^2(\mu-2i)!}4^{\mu-2i}.
   \end{equation}

Let us compare the obtained parameters with existing proposals. A family of codes with similar parameters was considered recently in \cite{goto2024high}. The codes in this family, which the authors refer to as \emph{many-hypercube codes}, are obtained as concatenations of $\mu$ copies of the $[[6,4,2]]$ Iceberg code, i.e., concatenations of $\Qcoxeter{I_2(3)}{}{0,1}$, resulting in parameters $[[6^\mu,4^\mu,2^\mu]]$ for all $\mu\ge 2$. 

Clearly, the codes $\mcQ_\mu$ have the same length and distance as the many-hypercube codes. Isolating the first two
terms in \cref{eq: dim Q}, we further obtain
    $$
    \dim(\mcQ_\mu)\ge \Big(1+\frac{\mu(\mu-1)}{16}\Big)4^\mu,
    $$
where the inequality is strict for all $\mu\ge 4$. For the same values of length and distance, quantum (dihedral) Coxeter codes $\mcQ_\mu$ encode strictly more logical information than the construction of \cite{goto2024high} for all $\mu>1$.

One may wonder how the information rates of these two code families compare as $\mu$ increases. For the many-hypercube codes, the rate declines exponentially as $(2/3)^\mu$. To compute the rate asymptotics of the $\mcQ_\mu$ family, we have to analyze the behavior of the sum in \cref{eq: dim Q}, relying on the generating function of the ``central trinomial coefficients'' \cite{wagner2012asymptotics}. As a result, we obtain $\Theta(\mu^{-1/2})$, so the rate of quantum Coxeter codes, while not constant, exhibits a much slower decline.

Let us give a few numerical examples using \cref{tab: I2(3) params}. It is easier to find the code dimension once we realize that $k=\dim(\coxeter{I_2(3)^\mu}{}{\mu})-\dim(\coxeter{I_2(3)^\mu}{}{\mu-1})$. For instance, for $\mu=3,4$, the codes $\mcQ_\mu$ have parameters $[[216,88,8]]$ and $[[1296,454,16]]$. At the same time, the many-hypercube codes for the same $\mu$ have parameters $[[216,64,8]]$ and $[[1296, 256, 16]]$.

Note that the distance of the code $\mcQ_3=\Qcoxeter{I_2(3)^3}{}{2,3}$ still falls short of the best known quantum code \footnote{per codetables.de; the code was constructed by computer. The tables stop at length $n=256$.} for $n=216,k=88$, which has distance 21. At the same time, both Coxeter and many-hypercube codes are instances of general code families with clearly described structure, and in the latter case are also equipped with efficient encoding and decoding procedures.

\vspace*{-.1in}
\section{Code examples}
One particularly useful way to visualize Coxeter groups and codes is through the notion of a Cayley graph.
\begin{definition}
    The \textit{Cayley graph} of a Coxeter system $(W,S)$ is a graph $G=(V,E)$ with vertices given by elements of the group $V\coloneqq W$, and with edges given by $$E\coloneqq\br{(w,v)\mid w^{-1}v\in S}.$$
\end{definition}

\begin{figure}[!ht]
    \centering
    \includegraphics[width=\linewidth]{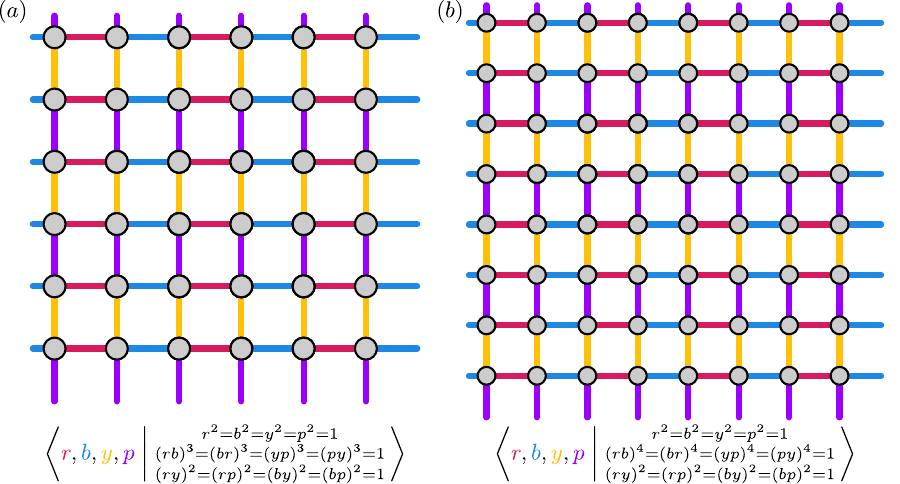}
    \caption{\footnotesize Cayley graphs for Cartesian products of two dihedral groups: (a) $I_2(3)$--- note that $I_2(3)\cong A_2$, the symmetric group on 3 letters--- and (b) $I_2(4)$. The Coxeter system $I_2(4)\cong B_2$, the hyperoctahedral group, or \emph{signed symmetric group}, on 3 letters.}
    \label{fig: I2 figs}
\end{figure}

\begin{figure}[!ht]
    \centering
    \includegraphics[width=\linewidth]{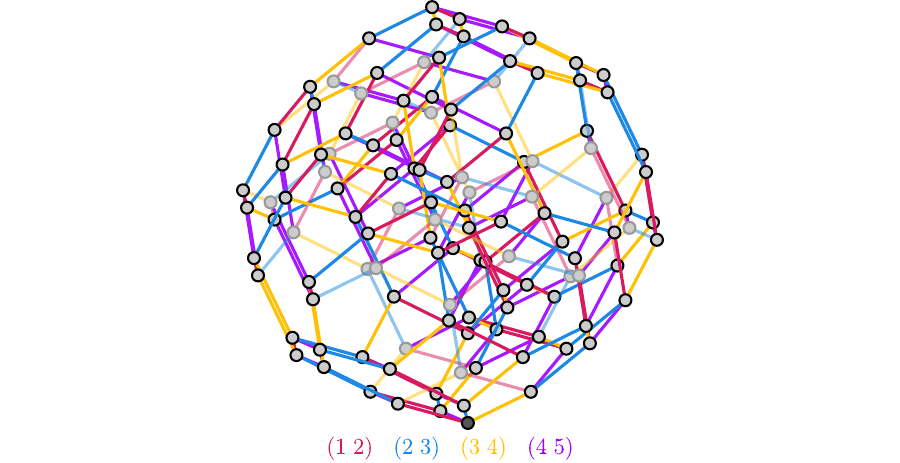}
    \caption{\footnotesize Cayley graph for the symmetric groups $A_4$}
    \label{fig: Am figs}
\end{figure}

The Cayley graph of a Coxeter group is undirected since each generator squares to identity, and it also has a natural edge-coloring given by $\mathrm{color}((w,v))\coloneqq w^{-1}v$. 

Below we consider some Coxeter codes arising from the families $A_m$, $I_2(3)^\mu$, and $I_2(4)^\mu$. In addition to showing Cayley graphs for some of these groups, we also list some explicit code parameters. Italics indicate distances that rely on \cref{conj: distance} and regular font indicates a proven value. In particular, \cref{cor: distance for large r} guarantees that $\dist(\Coxeter{r})=2^{m-r}$ whenever 
$r\geq\lfloor\frac m2\rfloor$; the distances of some order-1 codes were computed by brute force.

\subsection{Codes of type \texorpdfstring{$A_m$}{Am}}

Consider Coxeter codes corresponding to the infinite family $A_m$, the symmetric group on $m+1$ letters. The Cayley graphs for $A_3$ and $A_4$ are shown in \cref{fig: permutohedron,fig: Am figs}, respectively.

\begin{table}[ht!]
    \centering
  
    \begin{tabular}{L{0.05\textwidth} c c c c c}
    \hline

    & $m$ &  &  &  &  \\ \cline{2-6}
    $r$ & 2 & 3 & 4 & 5 & 6 \\ \hline\hline
    1 & $[6 ,\,5 ,\,{2}]$ & $[24 ,\,13 ,\,{4} ]$ & $[120 ,\,27 ,\,{12} ]$ & $[720 ,\,58 ,\,\emph{ 36} ]$ & $[5040 ,\,121 ,\,\emph{ 144} ]$ \\ \hline
    2 & $[6 ,\,6 ,\,{1} ]$ & $[24 ,\,23 ,\,{2} ]$ & $[120 ,\,93 ,\,{4} ]$ & $[720 ,\,360 ,\,{8} ]$ & $[5040 ,\,1312 ,\,\emph{ 24} ]$ \\\hline
    3 & & $[24 ,\,24 ,\,{1} ]$ & $[120 ,\,119 ,\,{2} ]$ & $[720 ,\,662 ,\,{4} ]$ & $[5040 ,\,3728 ,\,{8} ]$ \\\hline
    4 & & & $[120 ,\,120 ,\,{1} ]$ & $[720 ,\,719 ,\,{2} ]$ & $[5040 ,\,4919 ,\,{4} ]$ \\\hline
    5 & & & & $[720 ,\,720 ,\,{1} ]$ & $[5040 ,\,5039 ,\,{2} ]$ \\\hline
    6 & & & & & $[5040 ,\,5040 ,\,{1} ]$ \\ \hline\hline
\end{tabular}
    \caption{Parameters of the codes $\coxeter{A_m}{}{r}$. Here and below, the distance values shown in italic rely on the validity of \cref{conj: distance}.}
    \label{tab: Am params}
\end{table}

\subsection{Codes of type \texorpdfstring{$I_2(3)^\mu$}{I23m}}\label{sec: I23}
Consider Coxeter codes corresponding to the infinite family $I_2(3)^\mu$, $\mu$ copies of the order-6 dihedral group. Note that the rank of $I_2(3)^\mu$ is $m=2\mu$.

\section{Concluding remarks}
\subsection{Distance proof.}
An obvious open direction of our work is \cref{conj: distance} on the distance of a Coxeter code. In \cref{thm: exponential lower bound} we proved that the distance of the order-$r$ code of any rank-$m$ Coxeter system is $\geq2^{m-r}$. To do so, we fixed a value of $r$ and argued by induction on $m\geq r$, showing that for any non-trivial codeword in a rank $m+1$ code, there are at least two disjoint rank-$m$ standard cosets on which the codeword is supported. One route toward proving the distance conjecture is by determining a more precise lower bound on the number, $\ell$, of disjoint rank-$m$ standard cosets supporting the codeword. If, for instance, $\ell$ satisfies
\begin{equation*}
    \min_{\substack{J\subseteq S \\ \abs{J}=m-r}} \abs{\standard{J}} = \ell\cdot\!\!\!\!\min_{\substack{J\subseteq S \\ \abs{J}=m-r-1}} \abs{\standard{J}},
\end{equation*}
then \cref{conj: distance} would hold by induction.

\begin{table}[h!]
    \centering
       \begin{tabular}{L{0.05\textwidth} c c c c c}
    \hline
    & $\mu$ &  & &  &  \\ \cline{2-6}
    $r$ & 1 & 2 & 3 & 4 & 5 \\ \hline\hline
    1 & $[ 6,\, 5,\, 2]$ & $[36 ,\,9 ,\, 12]$ & $[216 ,\,13 ,\, {72}]$ & $[1296 ,\, 17,\, {432}]$ & $[ 7776,\, 21,\, {2592}]$ \\\hline
    2 & $[6 ,\,6 ,\, 1]$ & $[36 ,\,27 ,\,4]$ & $[216 ,\,64 ,\, \textit{24}]$ & $[ 1296,\, 117,\,\textit{144}]$ & $[7776 ,\, 186,\, \textit{864}]$ \\\hline
    3 & & $[36 ,\, 35,\, 2]$ & $[216 ,\,152 ,\, {8}]$ & $[ 1296,\,421 ,\, \textit{48}]$ & $[ 7776,\, 906,\, \textit{288}]$ \\\hline
    4 & &$[36 ,\, 36,\, 1]$ & $[216 ,\,203 ,\, {4}]$ & $[1296 ,\,875 ,\,{16}]$ & $[7776 ,\, 2676,\, \textit{96}]$ \\\hline
    5 & & & $[216 ,\,215 ,\,2]$& $[ 1296,\, 1179 ,\, {8}]$ & $[7776 ,\, 5100,\,{32}]$ \\\hline
    6 & & & $[216 ,\,216 ,\,1]$&$[1296 ,\, 1279,\, {4}]$ & $[7776 ,\,6870 ,\, {16}]$ \\ \hline
    7 & & & & $[ 1296,\, 1295,\, 2]$ & $[7776 ,\, 7590,\,{8}]$ \\\hline
    8 & & & & $[ 1296,\, 1296,\, 1]$& $[7776 ,\,7755 ,\, {4}]$ \\ \hline
    9 & & & &  & $[7776 ,\, 7775,\,2]$ \\\hline
    10 & & & & & $[7776 ,\, 7776,\, 1]$ \\ \hline
\end{tabular}
    \caption{Parameters of the codes $\coxeter{I_2(3)^\mu}{}{r}$.}
    \label{tab: I2(3) params}
\end{table}

\subsection{Codes of type \texorpdfstring{$I_2(4)^\mu$}{I24m}}
Consider Coxeter codes corresponding to the infinite family $I_2(4)^\mu$, $m$ copies of the order-8 dihedral group. Note that the rank of $I_2(4)^\mu$ is $2\mu$.
\begin{table}[h!]
    \centering

    \begin{tabular}{L{0.05\textwidth} c c c c}
    \hline
    &$\mu$  &  & &   \\ \cline{2-5}
    $r$ & 1 & 2 & 3 & 4  \\ \hline\hline
    1 & $[ 8,\, 7,\, 2]$ & $[ 64,\, 13,\, {16}]$ & $[512 ,\, 19,\, {128}]$ & $[4096 ,\, 25,\, {1024}]$  \\\hline
    2 & $[8 ,\, 8,\, 1]$ & $[64 ,\, 51,\,{4}]$ & $[512 ,\, 130,\, \emph{32}]$ & $[4096 ,\, 245,\,\emph{256}]$  \\\hline
    3 & & $[64 ,\, 63,\, 2]$ & $[ 512,\, 382,\, 8]$ & $[ 4096,\, 1181,\, \emph{64}]$  \\\hline
    4 & & $[64 ,\, 64,\, 1]$& $[512 ,\, 493,\,{4}]$ & $[4096 ,\, 2915,\,{16}]$  \\\hline
    5 & & & $[512 ,\, 511,\,2]$ & $[ 4096,\, 3851,\, {8}]$  \\\hline
    6 & &  & $[512 ,\, 512,\, 1]$ & $[4096 ,\, 4071,\, {4}]$  \\\hline
    7 & & & & $[4096 ,\, 4095,\,2]$  \\\hline
    8 & & & & $[4096 ,\, 4096,\, 1]$  \\\hline
\end{tabular}
    \caption{Parameters of the codes  $\coxeter{I_2(4)^\mu}{}{r}$.}
    \label{tab: I2(4) params}
\end{table}

\subsection{Further combinatorial properties.}
We have introduced a broad family of binary codes that generalizes the classic Reed--Muller family and shares several of its key features. It is natural to wonder what other properties of RM codes are shared with the Coxeter code family beyond our conjectured value of the distance. For instance, what is the equivalent notion of a \emph{projective} RM code for Coxeter codes? The codewords of minimum weight in RM codes are given by flats in the affine geometry; is there a geometric characterization of the minimum weight codewords for arbitrary Coxeter codes, and what kind of geometry could be involved?

Another line of thought is related to further combinatorial properties of Coxeter complexes, involving {\em residues} and {\em $f$-vectors} \cite{petersen2015eulerian}. We had initially phrased some of our definitions and proofs to involve these concepts before arriving at simpler arguments given here. At the same time, they may still find uses in uncovering further interesting properties of Coxeter codes and related code families.

\subsection{Local testability.} RM codes are known to have the local testability property \cite{alon2005testing}: simply check the parity of a random dual codeword of minimum weight. Supposing that their minimum weight codewords can be characterized, does the analogous local tester work for Coxeter codes? Coxeter codes are also related to codes on simplicial complexes, some of which have led to constructions of LTCs (for instance, the codes of \cite{dinur2023new}). In particular, the poset of all standard cosets of $(W,S)$, ordered by reverse inclusion, forms a simplicial complex known as the \emph{Coxeter complex}. By placing bits on the simplices of the highest dimension, the order-$r$ Coxeter code has parity checks given by $(m-r-2)$-simplices. Is there a unifying framework connecting the local testability of such simplicial codes to that of RM codes? 

\subsection{Achieving capacity and automorphisms.}
Switching to a probabilistic view, one could also study the capacity-achieving properties of Coxeter codes, extending the results for RM codes \cite{Kudekar2015ReedMullerCA}, \cite{reeves2023reed}, \cite{AbbeSandon2023}. For the binary erasure channel, it suffices to exhibit a doubly transitive action by the automorphism group of the code \cite{Kudekar2015ReedMullerCA}, and while the group $W$ naturally acts on the code space (\cref{thm: group code}), this action is only singly transitive. The automorphism group of an RM code (supposing $r\notin\br{-1,0,m-1,m}$) is given by the affine group ${Aut}(RM(r,m))=\ZZ_2^m\rtimes GL(m,2)$, far larger than 
simply $\ZZ_2^m$. Is there a suitable generalization of the affine group that captures the automorphisms of a Coxeter code?

By computer, we found that $|{Aut}(\coxeter{A_3}{}{1})|=196608=3\cdot 2^{16}$. 
This group is formed as a semi-direct product of the automorphisms of the Cayley graph of $A_3$ (given by $A3\times A1$) together with the group generated by symmetries swapping each of the 12 pairs of opposite (same-color) edges in the 6 squares of the graph; see \cref{fig: permutohedron}. This group acts transitively on the set of coordinates, but (again by computer) is not doubly transitive. Uncovering the structure
of the group $Aut(\coxeter{A_m}{}{r})$ for arbitrary $m,r$ is an interesting question, which appears nontrivial and which may elucidate the structure of $Aut(\coxeter{W}{}{r})$ in general.

\subsection{Decoding algorithms}
The accumulated lore of RM decoding comprises a vast body of results \cite{abbe2023reed}. An algorithm that is attuned to
our extension of the RM code family is {\em Recursive Projection Aggregation}, or RPA, suggested in \cite{YeAbbe2020}. Given a vector $y\in\FF_2^{2^m}$ received from the channel, decoding proceeds recursively by reducing the decoding task to
several decoding instances of codes of length $2^{m-1}$ and aggregating the obtained results by a majority decision.
Each of the shorter codes is obtained as a ``projection'' of $RM(r,m)$ on a one-dimensional subspace $\langle x\rangle$
and its cosets in $\FF_2^{2^m}$, so there are $2^m-1$ distinct instances of decoding. 

This procedure applies to the codes $\Coxeter r$, where we project the code on standard subgroups of rank 1 and their cosets. The authors of \cite{YeAbbe2020} consider this option in Sec.2 of their paper, where instead of all the subspaces,
they limit the procedure to the $m$ subspaces generated by the standard basis vectors. We leave a detailed analysis of this decoding for Coxeter codes for future work.

\subsection{Generalizing to achieve better parameters}
A major drawback of Coxeter codes is that they seemingly have worse parameters than RM codes for any given rank, $m$. In particular, the distance of high-order Coxeter codes is always equal to $2^{m-r}$ (\cref{cor: distance for large r}), whereas the code length grows much faster than $2^m$ for most Coxeter codes aside from RM codes. The poor distance occurs because with high-order codes, one can always find $m-r$ commuting generators in $(W,S)$, which form $(m-r)$-cubes. Generalizations of Coxeter codes could avoid this problem. We will mention two broad generalizations here, though we have not examined their viability in providing better parameters.

\subsubsection{Sets of generators} The first generalization is to restrict the possible choices of standard cosets.
\begin{definition}
    Let $(W,S)$ be a rank-$m$ Coxeter system, and consider some collection $\mcS\subseteq\mcP(S)$ of subsets of generators. The order-$r$ Coxeter code of type $(W,\mcS)$ is defined as
    $$
    \mcC_{(W,\mcS)}({r}) \coloneqq\Span\br{\indicator{\sigma\standard{\bigcup_{J\in\mcJ} J}}\mid \sigma\in W, \mcJ\subseteq \mcS, \abs{\mcS}=m-r}.
    $$
\end{definition}
If the collection $\mcS$ is chosen to be the collection of singletons $\mcS=\br{\br{s_i}\mid i\in[m]}$, then we recover the standard definition of a Coxeter code.

\subsubsection{Group codes}
The following is an extremely broad way to construct group codes, which has likely been studied in various capacities.
\begin{definition}
    Let $G$ be a finite group generated by a subset of $m$ elements $S\subseteq G$, i.e., $G=\standard{S}$. The order-$r$ group code of type $(G,S)$ is a left ideal the group algebra $\FF G\coloneqq \br{f\colon G\rightarrow\FF}$, defined as
    $$
    C_{(G,S)}(r) \coloneqq\Span\br{\indicator{g\standard{J}}\mid g\in G, J\subseteq S, \abs{J}=m-r}.
    $$
\end{definition}
Given a group $G$, one can prove using standard results in group theory that each choice of generating set $S$ gives a \emph{filtration} of the group algebra $\FF G$, i.e.,
$$
\br{0} =  C_{(G,S)}(-1) \;\subseteq\; C_{(G,S)}(0) \;\subseteq \;\cdots\;\subseteq\; C_{(G,S)}(m-1) \;\subseteq\; C_{(G,S)}(m) =\FF G,
$$
satisfying the multiplication property $C_{(G,S)}(r_1)\odot C_{(G,S)}(r_2)\subseteq C_{(G,S)}(r_1+r_2)$. If this generating set contains only even-order elements, then $C_{(G,S)}(r)\subseteq C_{(G,S)}(m-r-1)^\perp$, with equality likely depending on the particular combinatorial structure of the group.

A poor feature of all Coxeter codes is that for any family of Coxeter systems with increasing rank, $\br{(W_m,S_m)\mid \abs{S_m}=m}_{m\ge 1}$, the group order scales \emph{exponentially} in the rank, $\abs{W_m}=\Omega(2^m)$. That is, from a finite-scale perspective, the length of Coxeter codes grows quickly out of control. A promising direction toward constructing families of shorter codes would be to consider group codes corresponding to a family of finite groups  with explicit generating sets $(G_i,S_i)$ for which the number of group elements (the code length) grows polynomially with the number of generators
$\abs{G_i}=\poly(\abs{S_i})$. 

\section*{Acknowledgment}
We are grateful to Madhura Pathegama for helpful discussions concerning the distance estimates of Coxeter codes.

\end{document}